\begin{document}
\title{LDP-IDS: Local Differential Privacy for Infinite Data Streams}
%\title{Population Division-based Streaming Data Collection with Local Differential Privacy}
\author{Xuebin Ren$^{1}$, Liang Shi$^{1}$, Weiren Yu$^2$, Shusen Yang$^{1}$, Cong Zhao$^3$, Zongben Xu$^{1}$}
\affiliation{ 
	\institution{$^1$Xi'an Jiaotong University, Xi'an, China}
	%	\streetaddress{Xi'an}
	%	\city{Xi'an} 
	%	\state{Shaanxi 710049, China} 
	%	\postcode{710049}
}
\affiliation{ 
	\institution{$^2$University of Warwick, Coventry, United Kingdom}
	%	\streetaddress{Xi'an}
	%	\city{Xi'an} 
	%	\state{Shaanxi 710049, China} 
	%	\postcode{710049}
}
\affiliation{ 
	\institution{$^3$Imperial College London, London, United Kingdom}
	%	\streetaddress{Xi'an}
	%	\city{Xi'an} 
	%	\state{Shaanxi 710049, China} 
	%	\postcode{710049}
}
\email{xuebinren@mail.xjtu.edu.cn, sl1624@stu.xjtu.edu.cn, weiren.yu@warwick.ac.uk} \email{shusenyang@mail.xjtu.edu.cn, c.zhao@imperial.ac.uk, zbxu@mail.xjtu.edu.cn}
%%
%% The "author" command and its associated commands are used to define the authors and their affiliations.
\begin{comment}
\author{Ben Trovato}
\affiliation{%
  \institution{Institute for Clarity in Documentation}
  \streetaddress{P.O. Box 1212}
  \city{Dublin}
  \state{Ireland}
  \postcode{43017-6221}
}
\email{trovato@corporation.com}

\author{Lars Th{\o}rv{\"a}ld}
\orcid{0000-0002-1825-0097}
\affiliation{%
  \institution{The Th{\o}rv{\"a}ld Group}
  \streetaddress{1 Th{\o}rv{\"a}ld Circle}
  \city{Hekla}
  \country{Iceland}
}
\email{larst@affiliation.org}

\author{Valerie B\'eranger}
\orcid{0000-0001-5109-3700}
\affiliation{%
  \institution{Inria Paris-Rocquencourt}
  \city{Rocquencourt}
  \country{France}
}
\email{vb@rocquencourt.com}

\author{J\"org von \"Arbach}
\affiliation{%
  \institution{University of T\"ubingen}
  \city{T\"ubingen}
  \country{Germany}
}
\email{jaerbach@uni-tuebingen.edu}
\email{myprivate@email.com}
\email{second@affiliation.mail}
\end{comment}

%\textfloatsep 1mm plus 1mm \intextsep 1mm plus 1mm
\textfloatsep 0.1mm plus 0.1mm \intextsep 0.1mm plus 0.1mm
%%
%% The abstract is a short summary of the work to be presented in the
%% article.
\begin{abstract}
%Streaming data collection and analysis has become prominent and been able to provide real-time analytics for various 
Streaming data collection is essential to real-time data analytics in various IoTs and mobile device-based systems, which, however, may expose end users' privacy. Local differential privacy (LDP) is a promising solution to privacy-preserving data collection and analysis. However, existing few LDP studies over streams are either applicable to finite streams only or suffering from great utility loss due to simply adopting the budget division method in centralized differential privacy. In this paper, we study this problem by first proposing LDP-IDS, a novel LDP paradigm for infinite streams, and designing baseline approaches under the budget division framework. Particularly, we develop two budget division methods that are adaptive to sparsity changes in streams, with better data utility and communication efficiency. To improve the poor utility in budget division-based LDP, we then propose a population division framework that can not only avoid the high sensitivity of LDP noise to the budget division but also require significantly less communication. Under the population division framework, we also present two data-adaptive methods with theoretical analysis to further improve the estimation accuracy by leveraging the sparsity of data streams. We conduct extensive experiments on synthetic and real-world datasets to evaluate the effectiveness of utility of LDP-IDS. Experimental results demonstrate that, compared to the budget division-based solutions, our population division-based and data-adaptive algorithms for LDP-IDS can significantly reduce the utility loss and communication cost.%reduce up to 75\% utility loss and 96\% communication cost. 
\end{abstract}

\maketitle

\begin{comment}

%%% do not modify the following VLDB block %%
%%% VLDB block start %%%
\pagestyle{\vldbpagestyle}
\begingroup\small\noindent\raggedright\textbf{PVLDB Reference Format:}\\
\vldbauthors. \vldbtitle. PVLDB, \vldbvolume(\vldbissue): \vldbpages, \vldbyear.\\
\href{https://doi.org/\vldbdoi}{doi:\vldbdoi}
\endgroup
\begingroup
\renewcommand\thefootnote{}\footnote{\noindent
This work is licensed under the Creative Commons BY-NC-ND 4.0 International License. Visit \url{https://creativecommons.org/licenses/by-nc-nd/4.0/} to view a copy of this license. For any use beyond those covered by this license, obtain permission by emailing \href{mailto:info@vldb.org}{info@vldb.org}. Copyright is held by the owner/author(s). Publication rights licensed to the VLDB Endowment. \\
\raggedright Proceedings of the VLDB Endowment, Vol. \vldbvolume, No. \vldbissue\ %
ISSN 2150-8097. \\
\href{https://doi.org/\vldbdoi}{doi:\vldbdoi} \\
}\addtocounter{footnote}{-1}\endgroup
%%% VLDB block end %%%

%%% do not modify the following VLDB block %%
%%% VLDB block start %%%
\ifdefempty{\vldbavailabilityurl}{}{
\vspace{.3cm}
\begingroup\small\noindent\raggedright\textbf{PVLDB Artifact Availability:}\\
The source code, data, and/or other artifacts have been made available at \url{\vldbavailabilityurl}.
\endgroup
}
%%% VLDB block end %%%
\end{comment}

\section{Introduction}

%\textcolor{blue}{some reminders for further revision: 1. change frequency count to frequency only; 2. pay attention to the population notations $|U|$ and $N$.; 3. revise the analysis; 4. discuss the learning limits; 4. revise the server as collector or aggregator.
%The proliferation of smart devices (e.g., sensors and smartphones) and 5G technologies have been greatly boosting data streaming applications, such as event monitoring, log stream analysis, and video querying. These applications often adopt a client/server distributed architecture, where the frond-end smart devices continuously produce massive data streams and the back-end cloud server conducts real-time analytics based on the aggregated data~\cite{yang2017iot}. Despite offering real-time data services at the cloud, the massive data streams pose severe privacy risk %and great communication burden for the device holders or end users. For example, drivers' trajectories can be aggregated to mitigate traffic congestion while potentially sacrificing their location privacy. Customers' browsering history can be collected to improve personalized services, which, however is quite sensitive and may expose users' privacy.}

The proliferation of smart devices and 5G technologies has been greatly boosting data streaming applications, such as event monitoring, log stream analysis, and video querying. These applications often adopt a client/server distributed architecture, where massive users' devices continuously produce data reports and the back-end server conducts real-time analytics over the aggregate data stream. Despite offering valuable information, the continuous collection of streaming data casts severe privacy risks. %and heavy communication burden to the device holders. 
For example, call detail records of mobile phones can be collected for crowd analysis but potentially reveal users' location~\cite{xiao2021longitudinal}. Smart metering data can be mined to improve utility services, which, however, may expose users' daily activities~\cite{hassan2019differential}. %Drivers' trajectories can be aggregated to help mitigating traffic congestion while potentially revealing drivers' private locations. 

Differential privacy (DP) has emerged as the de-facto standard for private data analysis with rigorous mathematical proof. DP for data streams has also attracted extensive interests. According to the granularity of privacy protection, these studies can be broadly classified into three categorizes: \textit{event-level}, \textit{user-level} and \textit{$w$-event privacy}. Early researches mainly focus on \textit{event-level privacy} for \textit{finite streams}~\cite{RN35,10.1145/1806689.1806787,Chan2011PrivateAC,bolot2013private,Chen2017PeGaSusDD} and \textit{user-level privacy} for \textit{infinite streams}~\cite{Fan2012RealtimeAM,Fan2013DifferentiallyPM,Fan2014AnAA}. However, the former that hides a single event in streams is insufficient for protecting users' privacy, while the latter that protects a user's occurrence in infinite streams is impractical for most realistic scenarios~\cite{Kellaris2014DifferentiallyPE}.
To break the dilemma, \textit{$w$-event privacy} for infinite streams is proposed~\cite{Kellaris2014DifferentiallyPE}, which aims to guarantee $\epsilon$-DP for any time window consisting of $w$ consecutive time instances (or time interval or timestamps for simplicity). 
Due to meaningful protection and applicability, $w$-event privacy has become the research trend and achieved fruitful results~\cite{Wang2016RescueDPRS,wang2020towards,wang2018privacy}. Nonetheless, these studies are based on \textit{central/centralized differential privacy} (CDP), which relies on a trusted aggregator\footnote{Or server, we use both terms interchangably in the paper.} and are prone to honest-but-curious adversaries. 

%Differential privacy (DP) has emerged as the de-facto standard for private data analysis with rigorous mathematical proof. Early research on DP for data streams has also attracted extensive interests, which mainly focus on \textit{event-level privacy} for \textit{finite streams}~\cite{RN35,10.1145/1806689.1806787,Chan2011PrivateAC,bolot2013private,Chen2017PeGaSusDD} and \textit{user-level privacy} for \textit{infinite streams}~\cite{Fan2012RealtimeAM,Fan2013DifferentiallyPM,Fan2014AnAA}. However, the former that hides a single event in streams is insufficient for protecting users' privacy, while the latter that protects a user's occurrence in infinite streams with indefinite data is impractical for most realistic scenarios~\cite{Kellaris2014DifferentiallyPE}. To break the dilemma, \textit{$w$-event privacy} for infinite streams is proposed~\cite{Kellaris2014DifferentiallyPE}, which aims to guarantee $\epsilon$-DP for any time window consisting of $w$ consecutive time intervals. Due to meaningful protection, $w$-event privacy has become the research trend and achieved fruitful results~\cite{Wang2016RescueDPRS,wang2020towards,wang2018privacy}. Nonetheless, these studies are based on \textit{central/centralized differential privacy} (CDP), which relies on a trusted aggregator\footnote{Or server, we use both terms interchangably in the paper.} and are prone to honest-but-curious adversaries. 

Recently, \textit{local differential privacy} (LDP)~\cite{kasiviswanathan2011can, duchi2013local, duchi2014privacy} has demonstrated a great potential in accomplishing analytic tasks without relying on a trusted aggregator. Unlike CDP, LDP has the advantage of guaranteeing massive end users' privacy locally, and thereby has been successfully deployed by many well-known corporations, e.g., Google~\cite{Erlingsson2014RAPPORRA}, Microsoft~\cite{ding2017collecting}, Apple~\cite{AppleDP}, and Uber~\cite{johnson2018towards}. Contemporary studies on LDP mainly focus on static (non-streaming) data analysis, including frequency~\cite{Wang2017LocallyDPFE,Erlingsson2014RAPPORRA,kairouz2016extremal} and mean estimation~\cite{Wang2019CollectingAA,ye2019privkv}. 
For evolving (streaming) data analytics, there are only a few work, including event-level LDP for infinite streams~\cite{joseph2018local,wang2020continuous} and user-level LDP for finite streams~\cite{erlingsson2019amplification,baocgm}.
Under event-level privacy, Joseph \textit{et al.}~\cite{joseph2018local} propose THRESH, which aims at reducing  privacy loss at time slots with no significant population-wide updates. Despite being compatible to infinite streams, event-level LDP cannot protect a user's coarse-grained data.
For user-level LDP, Bao \textit{et al.}~\cite{baocgm} present a correlated Gaussian mechanism CGM via utilizing autocorrelations in streams. However, under the analytic Gaussian mechanism, CGM achieves only approximate LDP (i.e., $(\epsilon,\delta)$-LDP) rather than pure LDP, and limited to finite streams only, meaning that the service has to be restarted periodically for infinite stream scenarios. Table 1 summarizes DP studies on data streams from both aspects of privacy granularity and applicable architecture.
To the best of our knowledge, there is no prior work on $w$-event LDP for infinite streams, which can persistently provide strong and practical protection for indefinitely streaming data collection.

\begin{table}[tb]\caption{Summary of DP research on data streams}
	\begin{tabular}{|c|l|c|c|c|}
		\hline
		\multicolumn{2}{|c|}{}    & \textbf{Event-level}                                                              & \textbf{User-level}                                   & \textbf{$w$-event level}                                                          \\ \hline
		\multicolumn{2}{|c|}{\textbf{CDP}} & \cite{RN35,10.1145/1806689.1806787,Chan2011PrivateAC,bolot2013private,Chen2017PeGaSusDD} & \cite{Fan2012RealtimeAM,Fan2013DifferentiallyPM,Fan2014AnAA} & \cite{Kellaris2014DifferentiallyPE,Wang2016RescueDPRS,wang2020towards,wang2018privacy} \\ \hline
		\multicolumn{2}{|c|}{\textbf{LDP}} & \cite{joseph2018local,wang2020continuous}                                            & \cite{baocgm,erlingsson2019amplification}                    & Our work                                                                        \\ \hline
	\end{tabular}
\end{table}

In this paper, we propose LDP-IDS, a pure $\epsilon$-LDP based paradigm over \textit{infinite streams} under the framework of \textit{$w$-event privacy}. There are three technical challenges for LDP-IDS:

%Under event-level privacy, Joseph \textit{et al.}~\cite{joseph2018local} propose THRESH, which aims to reduce the privacy loss at time slots with no significant population-wide updates; Wang \textit{et al.}~\cite{wang2020continuous} present ToPL, a private streaming data publishing algorithm, which improves utility via threshold clipping and smoothing after perturbation. Under user-level privacy, Erlingsson \textit{et al.}~\cite{erlingsson2019amplification} propose an online protocol for finite streams, using the binary tree technique in the CDP version~\cite{Dwork2010PanPrivateSA}; Bao \textit{et al.}~\cite{baocgm} present a correlated Gaussian mechanism via utilizing autocorrelations in data streams, which achieves approximate LDP (i.e., $(\epsilon,\delta)$-LDP) but not pure LDP. Above studies still face the dilemma between the \textit{event-level privacy} for infinite streams and \textit{user-level privacy} for finite streams. Wang \textit{et al.}~\cite{wang2020towards} propose a pattern-aware mechanism with a metric-based $w$-event privacy, which, however, has poor utility due to privacy budget division. 

$\bullet$~\textbf{No access to raw data}. In CDP studies~\cite{Kellaris2014DifferentiallyPE}, to reduce overall noise, it is a common technique to mainly update at remarkable timestamps or assign different budget at different timestamps according to the sparsity in raw streams. However, this is difficult in LDP protocols, since the aggregator no longer has access to the raw data streams, which have to be perturbed at the user's end locally.

$\bullet$~\textbf{Utility loss in budge division}. Even if some methods can be formulated to mine the characteristics of raw streams underlying the LDP perturbed streams, the budget division methodology, commonly used in CDP, is not efficient in LDP. This is because the budget division, incurs only quadratic utility degradation in CDP, would incur an approximately exponential utility degradation in LDP~\cite{Wang2017LocallyDPFE,Wang2017LocallyDPHHI}.

$\bullet$~\textbf{High communication cost}. As streaming data generates, massive end users persistently release their perturbed data to the aggregator at each timestamp. That often causes high communication cost for resource-constrained devices. It is desirable to consider the communication efficiency in designing LDP mechanisms. 

To address the above challenges, we propose LDP-IDS, an LDP algorithm for infinite data streams. Specifically, in this paper, we make the following contributions:

\begin{itemize}
	\item We first formulate the problem of infinite streaming data collection with LDP, which aims at realizing statistical analysis over LDP perturbed streams while providing meaningful privacy protection (i.e., $w$-event LDP). %Given the problem, a naive method is to evenly divide the privacy budget and enforce the same LDP in each time window of size $w$.
	\item We construct a unified distortion analysis for streaming data analytics with LDP. Based on the analysis, we present two budge division-based baseline solutions, which dynamic allocates budget according to the non-deterministic sparsity in data streams. Compared to naive methods that evenly divide the privacy budget and enforce the same LDP in each time window of size $w$, the two baselines can effectively improve the utility via leveraging the stream characteristics.
	\item We propose a novel population division-based framework for streaming data collection with LDP, which achieves significantly higher data utility and less communication overheads. By building an analogy between budget division and population division, we design several population division-based solutions with much better utility and communication reduction than the baselines. %Note that, by such an analogy, many existing CDP schemes can also be extended to the LDP context with much higher data utility and less communication overhead.
\end{itemize}

We implemented all proposed LDP algorithms for streams and conducted extensive experiment evaluation on both synthetic and real-world datasets. Experimental results show that, compared with the budget division algorithms, population division-based algorithms achieve significant reduction in utility loss and communication overhead, respectively. 
%Particularly, without loss of privacy and data utility, the communication cost can be as low as $1/w$ of baseline algorithms, where $w>1$.% and was normally set as $20-40$ in our experiments. 
To further demonstrate the real-time event monitoring performance, we evaluated the above-threshold detection performance of our LDP algorithms. Results shows that our methods effectively detect changes in LDP protected streams.

%This paper makes the following contributions:
%The contributions of this paper can be summarized as follows.
The remainder of this paper is organized as follows. Section~\ref{sec: review} reviews the related work. Section~\ref{sec: preliminaries} introduces the background knowledge. Section~\ref{sec: problem} formulates the problem. Section~\ref{sec: baseline} provides LDP solutions via the budget division methodology. Section~\ref{sec: polo} proposes the novel framework of population division, gives two detailed algorithms with extensive analysis. Section~\ref{sec: experiment} presents an extensive set of experiment results. Finally, Section~\ref{sec: conclusion} concludes the paper.

\section{Related Work}\label{sec: review}

\textbf{Differential Privacy on Streams.}
%\textit{Differential privacy} (DP)~\cite{Dwork-405}, has been a \textit{de-facto} standard of privacy preservation and extensively studied in various data mining and analytic scenarios. DP on streams mainly deals with the problems of privacy loss accumulation~\cite{10.1145/1806689.1806787} over time and real-time protection for indefinite data~\cite{Kellaris2014DifferentiallyPE}.
\textit{Centralized Differential Privacy} (CDP)~\cite{Dwork-405} on streams or temporal data originally focuses on two notions: event-level DP and user-level DP~\cite{10.1145/1806689.1806787}.
The former aims to hide a single event in a stream while the latter tries to hide all  events of a single user. 
For example, Dwork \textit{et al.}~\cite{10.1145/1806689.1806787} initiate the study and propose a binary tree technique based event-level DP algorithm for finite streams. Chan \textit{et al.}~\cite{Chan2011PrivateAC} adopt the same technique to produce partial sum for binary counting in both finite and infinite data streams. Dwork \textit{et al.}~\cite{RN35} further propose a cascade buffers counter with event-level DP to adaptively update the counter according to stream density. Bolot \textit{et al.}~\cite{bolot2013private} introduce the notion of decayed privacy and gradually reduce the privacy cost for past data.  
Then, Chen \textit{et al.}~\cite{Chen2017PeGaSusDD} propose \textsf{PeGaSus} to answer multiple queries while satisifying event-level DP for data streams in a framework of perturb-group-smooth. Nonetheless, event-level DP is usually insufficient for privacy protection while user-level DP can only be achieved on a finite stream. For instance, Fan \textit{et al.}~\cite{Fan2014AnAA} present \textsf{FAST} for realizing user-level DP on finite streams with a framework of sampling-and-filtering.
To address the dilemma, Kellaris \textit{et al.}~\cite{Kellaris2014DifferentiallyPE} propose a notion of $w$-event DP for infinite streams, which ensures $\epsilon$-DP for any time window including $w$ consecutive timestamps. Based on a \textit{sliding window} methodology, they further propose two methods satisfying $w$-event privacy, \textit{Budget Distribution} (\textsf{BD}) and \textit{Budget Absorption} (\textsf{BA}) to effectively allocate privacy budget considering that the statistics on streams may not change significantly in successive timestamps. Moreover, Wang \textit{et al.}~\cite{Wang2016RescueDPRS} propose a multi-dimensional stream release mechanism RescueDP by applying the idea of $w$-event DP to \textsf{FAST} and grouping the dimensions with similar trends. All in all, these CDP solutions cannot be directly applied to LDP settings since the untrusted server can no longer observe the raw data. Besides, LDP noise is generally larger than CDP noise under the same privacy parameter.

%All in all, the methods mentioned above focus on the \textit{centralized setting} where a trusted central server possesses all the raw data, performs data analytics and implements CDP protection. They cannot be directly applied to LDP settings since that the untrusted server can no longer observe the raw data of end users. Besides, the LDP noise is generally larger than CDP noise under the same privacy parameter.

\textbf{Local Differential Privacy (LDP).}
In the \textit{local setting} where the server may be untrustworthy, LDP is proposed to perturb data at end users~\cite{kasiviswanathan2011can,duchi2013local,duchi2014privacy}, which has also been extensively studied and applied into many analytic applications~\cite{Wang2017LocallyDPFE, Erlingsson2014RAPPORRA, murakami2019utility,Qin2016HeavyHE,Wang2017LocallyDPHHI,wang2018privtrie, cormode2018marginal,ZhangWLHC18,nguyen2016collecting,ye2019privkv,ren2018textsf,qin2017generating,wang2019answering}.
Nonetheless, most LDP studies concentrate on batch data analysis but seldom consider the stream settings. %, in which LDP also degrades quickly with time. % as one-time LDP queries are performed periodically on temporally evolving data. %Particularly, data at successive timestamps may barely change and it is hard to maintain sufficient plausible deniability over multiple LDP queries~\cite{Erlingsson2014RAPPORRA}.
It has been demonstrated in~\cite{tang2017privacy} that, in Apple's LDP implementation, privacy loss accumulated in a short period would be too large to provide meaningful protection. 
To this end, Erlingsson \textit{et al.}~\cite{Erlingsson2014RAPPORRA} introduce a memoiazation mechanism to provide longitudinal (i.e., long-term) LDP guarantee in cases when underlying true value changes in an uncorrelated fashion. Arcolezi~\textit{et al.}~\textit{et al.}~\cite{xiao2021longitudinal} adopt the same memoiazation technique to avoid average attack in logintudinal analysis. Inspired by the binary tree technique in the CDP case~\cite{10.1145/1806689.1806787}, Erlingsson \textit{et al.}~\cite{erlingsson2019amplification} further propose an online protocol that guarantees longitudinal LDP regardless of whether the true value is independent or correlated. However, the construction of binary tree mainly applies to a finite stream, thus limiting its applications to infinite streams.
Joseph \textit{et al.}~\cite{joseph2018local} propose an LDP algorithm \textsf{THRESH} for evolving data, which merely consumes privacy budget at global update timeslots that are selected via users' LDP voting. \textsf{THRESH} relies on the assumption of the number of global updates, and therefore is not applicable to infinite streams either. Besides, Wang \textit{et al.}~\cite{wang2020continuous} extends a threshold-based data release algorithm from CDP to LDP for real-valued streams. Nevertheless, this work focuses on event-level LDP and lacks sufficient protection for infinite streams. Bao \textit{et al.}~\cite{baocgm} propose an ($\epsilon, \delta$) user-level LDP algorithm for finite streaming data collection using the analytic Gaussian mechanism, which focuses on approximate DP and has to renew privacy budget periodically. Wang \textit{et al.}~\cite{wang2020towards} propose a pattern-aware stream data collection mechanism with a metric based $w$-event LDP, which is not comparable to our work. More importantly, all these approaches enforce LDP over streams via a budget division methodology, which causes severe utility loss as reporting with low LDP budget is rather noisy.  

%Recently, several techniques, e.g., shuffling~\cite{erlingsson2019amplification} and sampling~\cite{balle2018privacy}, have been proposed and proved to be able to amplify the effect of LDP. 
Recently, several LDP studies~\cite{nguyen2016collecting,Wang2017LocallyDPFE,cormode2019answering} have shown that population division is generally better than budget division in LDP, which can be seen as the effect of amplification via subsampling~\cite{balle2018privacy}. Wang \textit{et al.}~\cite{Wang2017LocallyDPFE,Wang2017LocallyDPHHI} point out that one can partition users to answer multiple questions with LDP, which still satisfy the same LDP under the parallel composition but can achieve much higher accuracy than splitting privacy budget and adopting the sequential composition. In particular, they further adopt this idea in the marginal release problem in LDP~\cite{ZhangWLHC18}, where users are divided into different groups to report on different marginals but with the entire privacy budget. The similar user partition idea is actually adopted in \cite{erlingsson2019amplification}, where each user randomly reports with the entire privacy budget on the nodes at a fixed level of the binary tree. Despite these pioneering studies, the idea of population division cannot be directly extended to infinite stream collection and analytics with LDP.

%Different from the above studies, this paper aims to achieve meaningful $w$-event LDP for infinite streams while having high utility through exploiting the non-deterministic sparsity in streams. Besides, different from traditional budget division, we propose to divide participating users to significantly improve utility and communication efficiency. \textit{Specifically, the population division methodology is orthogonal to the aforementioned LDP analytic tasks, and can be combined with the existing techniques.}

\section{Preliminaries}\label{sec: preliminaries}
In this section, we first present the background about $w$-event privacy and existing methods in centralized setting. Then, we introduce LDP and its building block, frequency oracle.

%\begin{comment}
\subsection{$w$-event Privacy in Centralized Setting}
%Two DP notions were proposed for streaming data at first: event-level and user-level DP. However, both notions cannot be well applied to the case of infinite streams in practice~\cite{Kellaris2014DifferentiallyPE}. Instead, the $w$-event privacy was then proposed to strike a balance between event-level and user-level privacy and guarantee that any two stream prefixes differs in a window of $w$ timestamps are nearly indistinguishable. % (measured by the privacy budget $\epsilon$).

On data streams, $w$-event privacy can strike a nice balance between event-level privacy for infinite streams and user-level privacy for finite streams. Therefore, we follow such a privacy definition in this paper and present its definition here. 

We first give the notion about \textit{stream prefix} and \textit{neighboring streams}. A stream prefix of an infinite series $S=(D_1,D_2,...)$ at timestamp $t$ is defined as $S_t=(D_1,D_2,...,D_t)$, where $D_i$ is a snapshot of the stream at $i$. Let $w$ be a positive integer, two stream prefixes $S_t, S_t'$ are called \textit{$w$-neighboring}, if for each $S_t[i], S_t'[i]$ such that $i\in [t]$ and $ S_t[i] \neq S_t'[i]$, it holds that $S_t[i], S_t'[i]$ are neighboring; and for each $S_t[i_1],S_t[i_2], S_t'[i_1], S_t'[i_2]$ with $i_1 \le i_2$, $S_t[i_1] \neq S_t'[i_1]$ and $S_t[i_2] \neq S_t'[i_2]$, it holds that $i_2-i_1+1 \le w$.
	
\begin{definition}[$w$-event Privacy~\cite{Kellaris2014DifferentiallyPE}]
	Let $\mathcal{M}$ be a mechanism that takes as input a stream prefix of arbitrary size and $\mathcal{O}$ denotes the set of all possible outputs
	of $\mathcal{M}$. $\mathcal{M}$ satisfies $w$-event $ \epsilon $-DP (or, simply, $w$-event privacy) if for all sets $O \subseteq \mathcal{O}$, all $w$-neighboring stream prefixes $S_t, S_t'$, and all $t$, it holds that 
$\text{Pr}[\mathcal{M}(S_t) \in O] \le e^\epsilon \cdot \text{Pr}[\mathcal{M}(S_t') \in O]$. 
\end{definition}
A mechanism satisfies $w$-event privacy can provide $\epsilon$-DP guarantee in any sliding window of size $w$. Or, for any mechanism with $w$-event privacy, $\epsilon$ can be viewed as the total available privacy budget in any sliding window of size $w$~\cite{Kellaris2014DifferentiallyPE}.

\subsection{Existing Methods with $w$-event CDP}\label{sub: existing methods}
 By properly allocating with different portions of the total budget $\epsilon$, a mechanism composed of a series sub mechanisms over the timestamps can satisfy $w$-event privacy~\cite{Kellaris2014DifferentiallyPE}.
%\subsection{Existing Solutions with CDP}
%When considering the release of streaming data in the centralized setting, the server is assumed to be trustworthy, has all users' raw data $D_t$ as well as the statistical histogram $\mathbf{c}_t$. To achieve CDP, the server only needs to perturb $\mathbf{c}_t$ by adding some noise. To realize $w$-event privacy for an infinite streams, the key idea is to allocate different amount of privacy budget to different timestamps in the sliding window of size $w$ of the stream.
%When considering the release of streaming data in the centralized setting, the server is assumed to be trustworthy and has all users' raw data $D_t$, which is maintained as a stream $S=(D_1,D_2,...)$. The dynamic histogram $\mathbf{c}_t$ can be seen as a query $\mathbf{Q}: \mathcal{D}\xrightarrow{}\mathbb{R}^d$ for counting the occurrence number of each value $v \in \Omega$ over the database $D_t$, i.e., $\mathbf{c}_t=\mathbf{Q}(D_t)$. The central server only needs to perturb the query output $\mathbf{c}_t=\mathbf{Q}(D_t)$ by adding some noise, as shown in Fig.~\ref{fig:framework_cdp}. To realize $w$-event privacy for an infinite streams, the key idea is to allocate different amount of privacy budget to different timestamps in the sliding window of size $w$ of the stream.
%\textbf{{\textsf{Uniform}}}. 
A naive method is to evenly apply $\epsilon/w$-DP histogram release mechanism at every timestamp. Unfortunately, with the increase of $w$, the allocated budget becomes much small, which causes large perturbation noise at each timestamp.
%\textbf{{\textsf{Sample}}}. 
Another simple method is to release an $\epsilon$-DP fresh histogram at one timestamp while other timestamps in a window is directly approximated with this result. However, the fixed sampling strategy cannot accurately follow the update patterns in the dynamic stream, thus leading to large errors.

%\textbf{{\textsf{BD/BA}}}.
\textsf{BD} (budget distribution) and \textsf{BA} (budget absorption) are benchmark adaptive methods for infinite stream release with $w$-event CDP\cite{Kellaris2014DifferentiallyPE}. %Their essence is that, if data does not fluctuate heavily, the last released value is taken as the approximation output in the current timestamp to save up privacy budget; otherwise, the true value is properly perturbed with some privacy budget to privately report the updates. For the perturbation, different heuristic approaches for properly allocating privacy budget are proposed to further improve utility in \textsf{BD} and \textsf{BA} separately.
Both \textsf{BD} and \textsf{BA} can be summarized into three components: \textit{private dissimilarity calculation}, \textit{private strategy determination}, and \textit{privacy budget allocation}. In \textit{private dissimilarity calculation}, a dissimilarity $dis$ between the current $\mathbf{c}_t$ and the last update $\mathbf{c}_l$ is computed and perturbed with some fixed \textit{dissimilarity budget} $\epsilon_{t,1}$. In \textit{private strategy determination}, some \textit{publication budget} $\epsilon_{t,2}$ is assigned (How to assign is designed in \textit{privacy budget allocation}) for potential publication of noisy statistic, which can derive a potential publication error $err$. Then, $dis$ and $err$ is compared to decide the private strategy for current release. If $err<dis$, publish with perturbation (i.e., $\mathbf{r}_t=\mathbf{c}_t+\langle Lap(1/\epsilon_{t,2}) \rangle^d$); otherwise, approximate by the previous release (i.e., $\mathbf{r}_t=\mathbf{c}_l$). In above process, $\epsilon_{t,1}$ is fixed for each timestamp, but $\epsilon_{t,2}$ is assigned based on different rules in \textsf{BD} and \textsf{BA}. In \textsf{BD}, $\epsilon_{t,2}$ is distributed in an exponentially decaying way to the timestamps where a publication is chosen, and reuses the budget spent in timestamps out of the current sliding window. While in \textsf{BA}, $\epsilon_{t,2}$ is uniformly assigned first and then unused budget is absorbed at timestamps where approximation is chosen.

%\end{comment}

\subsection{Local Differential Privacy (LDP)}

In the LDP paradigm, $\mathcal{M}$ is a randomized mechanism that takes each distributed user's input $v$ and outputs a perturbed value before sending to the central aggregator, which collects the perturbed data and reconstructs the aggregated statistics. 
	
\begin{definition}[{Local Differential Privacy}]
A mechanism $\mathcal{M}$ satisfies $\epsilon$-local differential privacy (i.e., $\epsilon$-LDP), if and only if, for any input $v$ and $v'$ in domain $Dom(\mathcal{M})$, we have
\[{\forall}O \subseteq \mathcal{O}, \text{Pr}[\mathcal{M}(v) \in O] \le e^\epsilon \cdot \text{Pr}[\mathcal{M}(v') \in O],\]	
\noindent where $\mathcal{O}$ is the set of all possible outputs
of $\mathcal{M}$.
\label{LDP}
\end{definition}

%By controlling the output similarity of any two records, $\mathcal{M}$ can satisfy the requirements of LDP. 

Despite receiving data from individuals, LDP ensures the central aggregator cannot infer the input with high confidence.
As a DP variant, LDP inherits the properties of CDP, including sequential/parallel composition and post-processing theorems~\cite{Dwork2014TheAF}\cite{McSherry2009PrivacyIQ}.

\subsection{Frequency Oracle under LDP}
LDP data analyses are commonly built on some frequency oracle (FO) protocols, which enable frequency estimation of any value $v$ in a domain $\Omega=\{\omega_1, \omega_2,\ldots, \omega_d\}$ of size $d=|\Omega|$ under LDP.  %A common FO protocol is generalized randomized response, which is introduced as follows. %Two common FO protocols are as follows.
%Randomized response is the mainstream approach to implement LDP, which is a decades-old technique in social science to collect statistical information about embarrassing or illegal behavior. This approach enables the estimation of the frequency of any value $v \in D$ under LDP, which serves as the building block of other LDP tasks. However, the traditional randomized response only applies to binary variables.	
A common FO protocol is \textit{Generalized Randomized Response (GRR)}.
%Here we introduce Generalized Randomized Response (GRR) which can be used for frequency estimation of multi-valued variables.
The idea of GRR method is that with a private data $v \in \Omega$, each user sends the true value to the central aggregator with probability $p$, and randomly sends a value in the candidate set $\Omega\setminus\{v\}$ with probability $1-p$. A GRR-based LDP mechanism $\mathcal{M}$ with the domain $Dom(\mathcal{M})=\Omega$ is defined as follows.
\begin{align}
	{\forall} \overline{v} \in \Omega,  \text{Pr}[\mathcal{M}(v)=\overline{v}]=
	\begin{cases}
		p=\frac{e^{\epsilon}}{e^{\epsilon}+d-1},& \text{if }  \overline{v}=v\\
		q=\frac{1}{e^{\epsilon}+d-1},& \text{if } \overline{v} \neq v
	\end{cases}
\end{align}
%Combined with Definition \ref{LDP} and this formula, it is easy to know that GRR satisfies $\epsilon$-LDP.

%\begin{algorithm}[htp]\scriptsize
%\SetAlgoLined
%\caption{GRR (at \textbf{User-side})}\label{alg: Request-User-Response}
%\KwIn{User set $G \subseteq U$, timestamp $t$, privacy budget $\epsilon$}
%\Dataset{All users' data $D_t=[v_t^1,v_t^2,...,v_t^N]$}
%\KwOut{Perturbed Dataset $\overline{D}_t^G$}
%\BlankLine
%\For{each user $i \in G$}{
%	Sample a Bernoulli variable $x$ such that
%	$\text{Pr}[x=1]=\frac{e^\epsilon}{e^\epsilon+d-1}$;\\
%	\eIf {$x==1$}
%	{$\overline{v}_t^i=v_t^i$}
%	{Sample $\overline{v}_t^i$ uniformly at random from $\Omega\setminus\{v_i^t\}$;\\
%		%\Return $\overline{v}_t^i=v$.
%	}
%}
%\Return $\overline{D}_t^G=\{\overline{v}_t^i| i\in G\}$
%\end{algorithm}

GRR-based FO works as follows. The aggregator aims to calculate the \textit{frequency} of each distinct item $v$ or $\omega_k \in \Omega$, denoted as $\mathbf{c}[k]$. It first counts the frequency of $\omega_k$ in perturbed data, which is denoted as $\mathbf{c}'[k]$. Then, assuming $n$ is the number of participant users, the estimated frequency ${\overline{\mathbf{c}}}[k]$ of $\omega_k$ through GRR protocol can be obtained as ${\overline{\mathbf{c}}_\text{GRR}} [k]=(\mathbf{c}'[k]/n-q)/({p-q}).$ We use FO($\overline{D}_t, \epsilon$) to denote this FO process.
%\begin{equation}
%{\overline{c}}_\text{GRR}(v)=\dfrac{{c}'(v)-N \cdot q}{p-q}.
%\end{equation}
It is shown in \cite{Wang2017LocallyDPFE} that this is an unbiased estimation of the true frequency, with the variance
\begin{align}\label{var}
\text{Var}[{\overline{\mathbf{c}}_\text{GRR}} [k];\epsilon,n] = \frac{d-2+e^\epsilon}{n\cdot (e^\epsilon-1)^2}+\frac{f_k\cdot(d-2)}{n \cdot(e^\epsilon-1)}
\end{align}
where $f_v$ is the frequency of $v$ and there is $\sum\nolimits_{k = 1}^d f_k=1$. Considering $f_v$ is often small, the above variance is also simply approximated as $\frac{d-2+e^\epsilon}{n\cdot (e^\epsilon-1)^2}$~\cite{ZhangWLHC18}.
%\begin{align}\label{var}
%	\text{Var}[{\overline{\mathbf{c}}_\text{GRR}} [k];\epsilon,n] =  \frac{d-2+e^\epsilon}{n\cdot (e^\epsilon-1)^2}
%\end{align}
Although there are also other FOs (e.g., OUE)~\cite{ZhangWLHC18}, the estimation variance of these FOs can all be seen as a function of parameter $\epsilon$ and population $n$. 

For simplicity, without specifying the FO used, we use $V(\epsilon,n)$ to represent the estimation variance from $n$ users with budget $\epsilon$.

\section{Problem Definition}\label{sec: problem}

In this subsection, we propose LDP-IDS, a novel LDP paradigm for infinite streams under the framework of $w$-event privacy. 

%\textcolor{red}{Could delete:} We consider a general setting that a large number of users continuously report the value or measurement of a certain events, such as the heart rate from a wearable device or location via mobilephones. The central server gathers all the data and is interested in the dynamic statistics (e.g. histograms) over the users, which can be used for real-time data mining, e.g., conducting medical research or traffic monitoring. To protect the data privacy of individual users, such streaming data aggregation has to be conducted in a privacy-preserving fashion.

\begin{figure}[tbp]
	\includegraphics[width=0.45\textwidth]{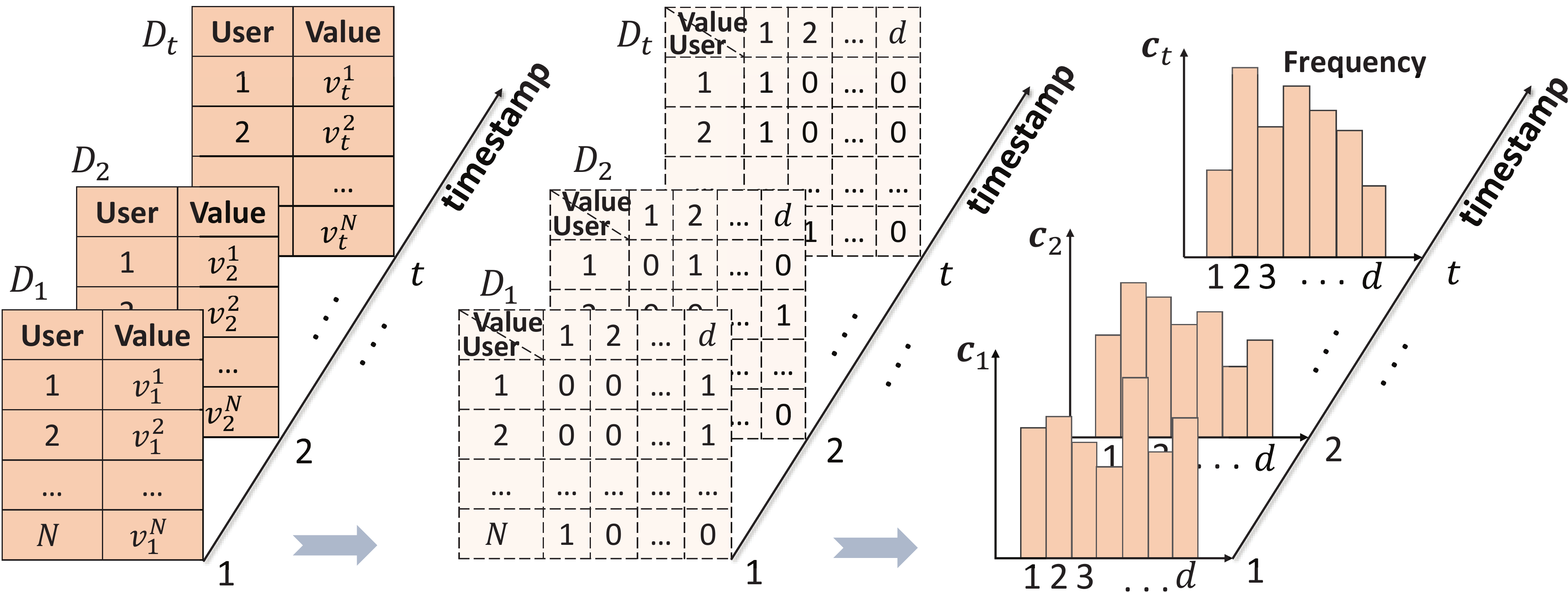}
	\vspace{-5mm}\caption{An illustration of streaming data release}\label{fig: streamcollection}
\end{figure}

%As a motivating example, we consider the collection of population statistics from users' mobile devices, e.g., about the  time-evolving population density map over an area.
We assume that there is a distributed system consisting of a central server and $N$ distributed users $\{1,2,\ldots, N\}$ that continuously report the value of a data item (e.g., user's location in an area, or units of certain measurement) at discrete timestamps. For example, an area is divided into $d$ disjoint regions and the server aims to illustrate the time-evolving population density map from users' location reports. Let $v_t^j$ represent the report of user $j$ at timestamp $t$ and $v_t^j$ come from a domain $\Omega$ with the carnality of $d$. Then, each user has an infinite data stream $V^j=(v_1^j, v_2^j, \ldots)$. %We assume that $v_t^j$ has $d$ possible values (e.g., the area is divided into $d$ disjoint regions, or the magnitude of the measurement can be discretized into $d$ units). %Then, without generality, $v_t^j$ can be further represented as a $d$-dimensional bit vector, $v_t^j=\langle v_t^j[1],v_t^j[2],\ldots,v_t^j[d] \rangle$ where the $v_t^j$-th bit set as $1$ and all other bits as $0$.}
Meanwhile, at every timestamp $t$, the central server receives all users' reports $\{v_t^1,\ldots,v_t^N\}$, which can be transformed into a binary database $D_t$ with with $d$ columns and $n$ rows. %$N$ records and the domain size $d$.
As shown in Fig.~\ref{fig: streamcollection}, the server aims to release the statistical histogram $\mathbf{c}_t=\langle \mathbf{c}_t[1],\mathbf{c}_t[2],\ldots,\mathbf{c}_t[d] \rangle$ over all $n$ users' data continuously at timestamp $t$. For example, $\mathbf{c}_t[k]=\frac{1}{n}\sum_j \mathbbm{1}_{\{k|v_t^j=\omega_k\}}(k)$ is the frequency~\footnote{Other aggregate analyses, such as count and mean estimation, can be applicable, as the query type is orthogonal to the streaming data setting.} of every unique value in $\Omega$ and $\mathbbm{1}_{X}(k)$ is an indicator function that equals to $1$ if $k \in X$, and $0$ otherwise. However, direct data collection or release would compromise individual's privacy.  %Therefore, certain privacy-preserving mechanisms should be applied to release a private histogram to the public.
Specifically, we consider that the central server is not trustworthy and assumed to be \textit{honest-but-curious}. Instead of directly reporting $v_t^i$, each user would choose to send a perturbed value $\overline{v}_t^j$ with LDP or report nothing at each timestamp $t$. 
Therefore, our goal is to design an LDP solution that helps the server to collect data and release an estimated histogram $\mathbf{r}_t=\langle \mathbf{r}_t[1],\mathbf{r}_t[2],\ldots,\mathbf{r}_t[d] \rangle$ at each timestamp $t$ where $\mathbf{r}_t[k]$ denotes the estimated frequency for each value in the domain.
%Particularly, we assume each user can communicate with the server in a bi-direction way. This is similar to the implicit setting in~\cite{joseph2018local} and practical for a broad spectrum of mobile streaming data applications. For example, Google Maps and Waze on users' mobile phones report the immediate locations to while receiving the real-time traffic conditions from the server.

%\textbf{Utility metric.} To measure the empirical utility of released histogram streams, we use the mean square error (\text{MSE}), i.e., the average of the $L_2$ distance between the true histogram stream prefix $C_t=(\mathbf{c}_1,\mathbf{c}_2,\ldots,\mathbf{c}_t)$ and the released histogram stream prefix $R_t=(\mathbf{r}_1,\mathbf{r}_2,\ldots,\mathbf{r}_t)$. That is
%\begin{align}
%		\text{MSE}(R_t,C_t)=\dfrac{1}{d}\dfrac{1}{t}\sum_{i=1}^{t}\sum\limits_{k=1}^d {|\mathbf{r}_i[k]-\mathbf{c}_i[k]|}^2
%\end{align}

Considering the infinity of streaming data, users also wish to adopt a meaningful privacy paradigm similar to $w$-event privacy in the centralized setting. We naturally extend the definition of $w$-event privacy to the local setting. Before that, we first define the notion of $w$-neighboring in the local setting as follows.

\begin{definition}[$w$-neighboring] Let  $V_t$ and $V'_t$ denote two stream prefixes defined on the same domain $\Omega^t$.
	Let $w$ be a positive integer. $V_t$ and $V'_t$ are $w$-neighboring, if for each $V_t[i_1], V_t[i_2],V'_t[i_1] ,V'_t[i_2]$ with $i_1 \leq i_2$, $V_t[i_1] \neq V'_t[i_1]$ and $V_t[i_2] \neq  V'_t[i_2]$, it holds that $i_2-i_1+1 \leq w$.
\end{definition}

That is to say, if two stream prefixes are $w$-neighboring, then their elements are \textit{the same} while all their \textit{same} elements consist of a window of up to $w$ timestamps. This is slightly different from the definition in the central setting.

\begin{definition}\label{def: w-event LDP}[$w$-event LDP]
	%Let $S_t=(v_1,v_2,\ldots,v_t)$ be a stream prefix consists of a single user's continuous input value $v_t$
	%Let $v_t^j$ denote the input value of a user $j$ at timestamp $t$, then $S_t=(v_1,v_2,\ldots,v_t)$ denotes a stream prefix of user $j$'s inputs at timestamp $t$.
	Let $\mathcal{M}$ be a mechanism that takes as input stream prefix $V_t=(v_1,v_2,\ldots,v_t)$ consisting of a single user's arbitrary number of consecutive input value $v_t$. Also let $ \mathcal{O}$ be the set of all possible outputs of $\mathcal{M}$. We say that $\mathcal{M}$ satisfies $w$-event $\epsilon$-LDP (i.e., $w$-event LDP) if for any $w$-neighboring stream prefixes $V_t, V'_t$, and all $t$, it holds that
	\[{\forall} O \subseteq  \mathcal{O} , \text{Pr}[\mathcal{M}(V_t) \in O] \le e^\epsilon \text{Pr}[\mathcal{M}(V'_t) \in O].\]	
	%		Let $\mathcal{M}$ be a mechanism that takes as input a stream prefix of arbitrary size. Also let $ \overline{V} $ be the set of all possible outputs of $\mathcal{M}$. We say that $\mathcal{M}$ satisfies $w$-event $\epsilon$-local differential
	%		privacy (i.e., $w$-event LDP) if for any time range $[t_1, t_2]$ where $t_1<t_2$ and $t_2-t_1+1 \le w$, and any two stream fragments $v^i_{[t_1, t_2]}$ and $v^j_{[t_1, t_2]}$ over $[t_1, t_2]$, we have
	%		\[{\forall} \overline{v} \subseteq  \overline{V} , \text{Pr}[\mathcal{M}(v^i_{[t_1, t_2]}) \in \overline{v}] \le e^\epsilon \text{Pr}[\mathcal{M}(v^j_{[t_1, t_2]}) \in \overline{v}].\]	
\end{definition}
In other words, a $w$-event LDP mechanism will provide each user $\epsilon$-LDP for any sliding window of size $w$.

\section{Budget Division-based Methods}\label{sec: baseline}
In this section, we first present the budget division framework for streaming data collection with LDP. Then, based on this framework, we introduce our LDP methods for the problem defined.

\subsection{Budget Division Framework for LDP}
Inspired by the studies in the centralized setting, the following theorem can be derived for designing LDP mechanisms.

\begin{theorem}\label{thm-w-event LDP}
		Let $\mathcal{M}$ be a mechanism that takes as input stream prefix $V_t$ consisting of a single user's arbitrary number of consecutive input value $v_t$, i.e., $V_t[i]=v_i$, and outputs a transcript $o=(o_1,o_2,...,o_t) \in Range(\mathcal{M})$. Suppose that we can decompose $\mathcal{M}$ into $t$ mechanisms $\mathcal{M}_1,\mathcal{M}_2,...,\mathcal{M}_t,$ such that $\mathcal{M}_i(v_i)=o_i$, each $\mathcal{M}_i$ generates independent randomness and achieves $\epsilon_i$-LDP. Then, $\mathcal{M}$ satisfies $w$-event $\epsilon$-LDP if for any user and any timestamp $i \in [t]$, there is $(\sum_{\tau=i-w+1}^{i} \epsilon_{\tau}) \le \epsilon$.
%\begin{equation}
%{\forall} i \in [t], (\sum_{m=i-w+1}^{i} \epsilon_{m}) \le \epsilon.
%\end{equation}	
\end{theorem}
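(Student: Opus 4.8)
The plan is to prove the window-sum bound guarantees $w$-event $\epsilon$-LDP by reducing it to the sequential composition property of LDP applied only to the timestamps within a relevant window. First I would fix an arbitrary timestamp $t$ and an arbitrary pair of $w$-neighboring stream prefixes $V_t, V_t'$. By the definition of $w$-neighboring, the set of indices $i$ where $V_t[i] \neq V_t'[i]$ is confined to a window of at most $w$ consecutive timestamps; let $I = \{i : V_t[i] \neq V_t'[i]\}$ and note that $I \subseteq \{a, a+1, \ldots, a+w-1\}$ for some $a$, so $|I|$ spans at most $w$ positions. The key observation is that the two prefixes agree on every coordinate outside this window.

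Next I would use the decomposition $\mathcal{M} = (\mathcal{M}_1, \ldots, \mathcal{M}_t)$ with independent randomness. Since each $\mathcal{M}_i$ acts on $v_i = V_t[i]$ alone and the randomness is independent across $i$, the probability of any output transcript $o = (o_1, \ldots, o_t)$ factorizes as $\text{Pr}[\mathcal{M}(V_t) = o] = \prod_{i=1}^t \text{Pr}[\mathcal{M}_i(V_t[i]) = o_i]$. For coordinates $i \notin I$ we have $V_t[i] = V_t'[i]$, so the corresponding factors are identical and cancel in the ratio $\text{Pr}[\mathcal{M}(V_t) \in O] / \text{Pr}[\mathcal{M}(V_t') \in O]$. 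Only the indices in $I$ contribute a nontrivial ratio, and for each such $i$ the $\epsilon_i$-LDP guarantee of $\mathcal{M}_i$ bounds its factor ratio by $e^{\epsilon_i}$. Multiplying over $i \in I$ gives a bound of $e^{\sum_{i \in I} \epsilon_i}$; to make this rigorous over arbitrary output sets $O$ rather than singletons, I would invoke the standard sequential-composition argument (the paper notes LDP inherits composition from DP) applied to the sub-collection $\{\mathcal{M}_i : i \in I\}$, which yields $\sum_{i \in I} \epsilon_i$-LDP on the differing coordinates.

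Finally I would bound the exponent. Since $I$ lies within a window of width at most $w$ ending at some timestamp $j \le t$, we have $\sum_{i \in I} \epsilon_i \le \sum_{\tau = j-w+1}^{j} \epsilon_\tau \le \epsilon$ by the hypothesis that every length-$w$ window has budget sum at most $\epsilon$. Combining, $\text{Pr}[\mathcal{M}(V_t) \in O] \le e^\epsilon \cdot \text{Pr}[\mathcal{M}(V_t') \in O]$ for all measurable $O$ and all $t$, which is exactly $w$-event $\epsilon$-LDP by Definition~\ref{def: w-event LDP}.

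I expect the main subtlety to lie in the second step: passing cleanly from the pointwise (per-transcript) factorization to a bound over arbitrary output events $O$, while correctly arguing that only the differing coordinates matter. The factorization makes the singleton case transparent, but stating the composition over a subset of mechanisms requires care to ensure the agreeing coordinates genuinely cancel as a shared post-processing/marginalization rather than being swept under the rug; aligning the window-index bookkeeping (that $I$ fits inside some length-$w$ window so the hypothesis applies) is the other place where precision matters.
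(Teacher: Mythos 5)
Your proposal is correct and follows essentially the same route as the paper's own proof: factorize the transcript probability using the independent randomness of the $\mathcal{M}_i$, cancel the factors at coordinates where the $w$-neighboring prefixes agree, bound each differing coordinate's ratio by $e^{\epsilon_i}$, and control the resulting exponent with the window-sum hypothesis. The subtlety you flag about passing from single transcripts to arbitrary output sets $O$ is handled in the standard way (summing the pointwise inequality over transcripts in $O$), which the paper also does, albeit implicitly.
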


\begin{proof}
See Appendix~\ref{appendice: w-event LDP}.
\end{proof}

This theorem enables a $w$-event LDP mechanism to view $\epsilon$ as the total privacy budget in any sliding window of size $w$, and appropriately allocate portions of it across the timestamps, as shown in Fig.~\ref{fig: BDframework}. According to the above theorem, some straightforward approaches can be summarized to solve the problem defined in Section~\ref{sec: problem}, based on different allocation methods of the LDP budget.

\begin{figure}[tbp]
	\includegraphics[width=0.3\textwidth]{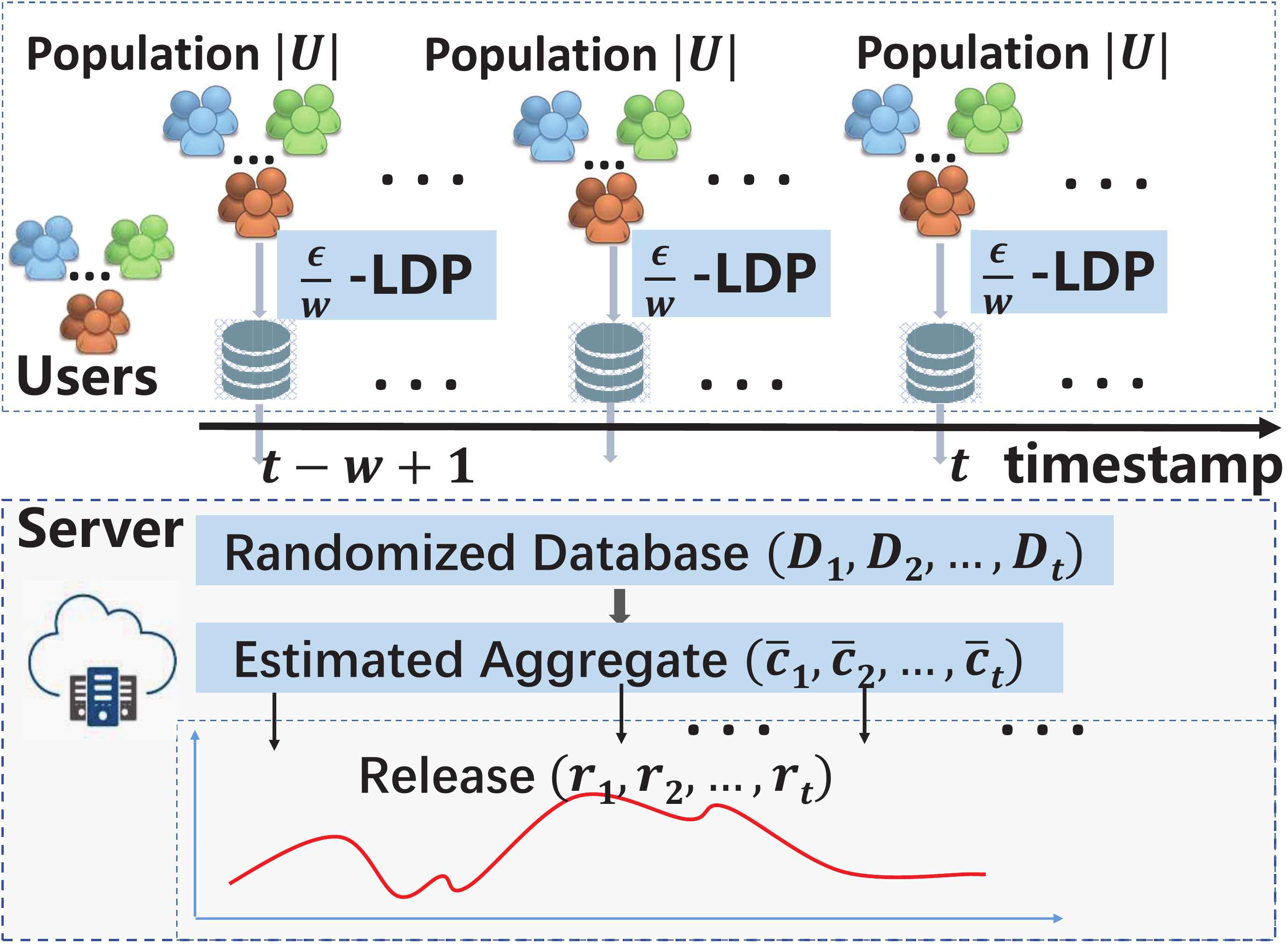}
\vspace{-2mm}	\caption{Illustration of budget division framework}\label{fig: BDframework}
\end{figure}

\subsection{Baseline $w$-event LDP Methods}
%\subsection
\subsubsection{{\textbf{LDP Budget Uniform Method} (\textsf{LBU})}}\label{subsub: LBU}
%To achieve $w$-event $\epsilon$-LDP, 
One straightforward approach is to uniformly assign the LDP budget $\epsilon$ to all $w$ timestamps in the sliding windows. %We call this approach the uniform privacy (UP) method.
At each timestamp, each user reports the perturbed value with an FO using the fixed budget $\epsilon/w$ for satisfying $w$-event LDP.
Recall that $V(\epsilon,n)$ represents the LDP estimation variance from $n$ users with privacy budget $\epsilon$, without specifying the FO.
Since $\mathbf{r}_t$ is an unbiased estimate of $\mathbf{c}_t$, the mean square error (MSE) between the true stream prefix $C_t=(\mathbf{c}_1,\mathbf{c}_2,\ldots,\mathbf{c}_t)$ and the released stream prefix $R_t=(\mathbf{r}_1,\mathbf{r}_2,\ldots,\mathbf{r}_t)$, is indeed the estimation variance of $\mathbf{r}_t$ i.e., $\text{MSE}_\textsf{LBU}=\text{Var}[\mathbf{r}_t;{\epsilon}/{w},N]=V({\epsilon}/{w},N)$.
%\begin{align}
%	\text{MSE}_\textsf{UP}=\text{Var}[\mathbf{r}_t;{\epsilon}/{w},N]=V({\epsilon}/{w},N)
%\end{align}
%Taking Equation~(\ref{var}) of GRR in, there is
%\begin{align}
%	\text{MSE}_\textsf{UP+GRR}=\dfrac{d-2+e^{\epsilon/w}}{N\cdot (e^{\epsilon/w}-1)^2}
%\end{align}
%Take GRR as an example, the estimated result released at each timestamp is $\mathbf{r}_t[k]=\overline{\mathbf{c}}_{t}[k]=\text{GRR}(D_t,\epsilon/(w \cdot d))$.
%Then, the mean square error (MSE) can be calculated as $\text{MSE}_\textsf{UP}={V}_{\text{GRR}}(\frac{\epsilon}{w},N)=N \cdot \dfrac{d-2+e^{\epsilon/w}}{(e^{\epsilon/w}-1)^2}$.
%Then, the error variance in the common frequency estimation can be calculated as
%\begin{align}
%	    \text{Var}_\textsf{UP} = N \dfrac{d-2+e^{\epsilon/w}}{(e^{\epsilon/w}-1)^2}
%	\end{align}
If $w$ is large, privacy budget allocated at each timestamp is very small, leading to a large noise scale. % the noise scale prohibitively high and destroys the statistics.

%This is a simple and direct method, which just repeats random response with same privacy for frequency estimation. It divides the total privacy budget uniformly according to the length of the time window, and all users need to use the same privacy budget for random response and upload their perturbed data at each timestamp. For satisfying $w$-event LDP, the privacy budget at each timestamp is $\epsilon/w$. Take GRR as an example, the estimated result released ed at each timestamp is $r_t=\overline{c}_{t}=\text{GRR}(D_t,\epsilon/w)$. Obviously, the sum of privacy budget in any window of size $w$ is equal to $\epsilon$, then the condition of Theorem \ref{thm-w-event LDP} is satisfied, and \textbf{UP} achieves $w$-event LDP. The biggest limitation of this method is that, if $w$ is large, the privacy budget becomes very small, which makes the noise scale prohibitively high and destroys the statistics.
	
%\subsection
\subsubsection{{\textbf{LDP Sampling Method}}(\textsf{LSP})}\label{subsub: sample} %\textsf{Sample} is another baseline approach applying to the sliding window methodology.
Each user invests the entire budget $\epsilon$ on a single (sampling) timestamp within the window, while saving budget for the next $w-1$ timestamps via approximation. % or reporting the value in last timestamp.
%That is, the initial timestamp is assumed as $t=1$, then at any timestamp $t=kw+1~(k=0,1,2,\ldots)$, the released result would be $r_t=\textbf{GRR}(D_t,\epsilon)$; otherwise, $r_t=r_{t-1}$.
At the last sampling timestamp $l$, the MSE of \textsf{LSP} equals to estimation variance $V(\epsilon,N)$. For non-sampling timestamps, it equals to the sum of the variance of last release at the sampling timestamp (i.e., $\text{Var}[\mathbf{r}_{l}]$~\footnote{For simplicity, we use $\text{Var}[\mathbf{r}_{l}]$ to denote the average variance over $d$ dimensions of vector$\mathbf{r}_{l}$, or $\text{Var}[\mathbf{r}_{l}]=\frac{1}{d}\sum\nolimits_{k=1}^d \text{Var}(\mathbf{r}_{l}[k])$.}), and the variance of true statisics at the current timestamp $t$ from that at the last sampling timestamp $l$ (i.e., $(\mathbf{c}_{t}-\mathbf{c}_{l})^2$, which is data dependent).
%is $\mathbb{E}(\mathbf{r}_t-\mathbf{c}_{t})^2=\mathbb{E}(\mathbf{r}_{\lceil t/w \rceil+1}-\mathbf{c}_{t})^2=\text{Var}[\mathbf{r}_{\lceil t/w \rceil+1}]+(\mathbf{c}_{\lceil t/w \rceil+1}-\mathbf{r}_{t})^2$.
Therefore, the MSE of \textsf{LSP} in a window of size $w$ can be calculated as $\text{MSE}_\textsf{LSP}=V(\epsilon,N)+\frac{1}{w}\sum\nolimits_{k=1}^{w-1}(\mathbf{c}_t-\mathbf{c}_{t+k})^2$.
%\begin{align}
%	\text{MSE}_\textsf{Sample}=V(\epsilon,N)+\frac{1}{w}\sum\nolimits_{k=1}^{w-1}(\mathbf{c}_t-\mathbf{c}_{t+k})^2
%\end{align}
An implicit assumption motives this method is that $\mathbf{c}_{t}$ (or $D_t$) in the stream does not fluctuate too much. Therefore, for streams with few changes, \textsf{LSP} may work better by saving up privacy budgets; otherwise, the estimation error on those skipped timestamps may become excessively large. % and exceed that in UP.
%the smaller fluctuation of dataset is, the better effect Sample has. When statistics greatly differ from the preceding release, the resulting error may become excessive and even exceed that of Uniform. It is easy to see that the sum of privacy budget in any window of size $w$ is also equal to $\epsilon$, that is, \textbf{Sample} satisfies $w$-event LDP.

Considering the general non-deterministic sparsity in data streams, both \textsf{LBU} and \textsf{LSP} can not achieve better utility in general cases. 

%\vspace{-3mm}
%\subsection{\textsf{LBD/LBA}: LDP Budget Distribution \& LDP Budget Absorption Methods}\label{sec: EpsilonBDBA}
\subsection{Adaptive Budget Division Methods}\label{sec: EpsilonBDBA}
In this subsection, we propose two adaptive methods by constructing a unified distortion analysis under LDP.

\textsf{BD/BA} in the centralized setting~\cite{Kellaris2014DifferentiallyPE} inspire us that higher utility can be achieved by adaptively allocating privacy budget in data streams. As summarized in Section~\ref{sub: existing methods}, \textsf{BD/BA} compares the dissimilarity $dis$ in aggregate statistics $\mathbf{c}_{t}$ with the potential publication error $err$ at each time to adaptively choose between publication and approximation. 
%Specifically, we still adopt the framework of two sub mechanisms. In $\mathcal{M}_{t,1}$, each distributed user reports a LDP perturbed value with the dissimilarity budget, then the central server estimates a private dissimilarity value with LDP. In $\mathcal{M}_{t,2}$, the central server compares the private dissimilarity measure with the potential perturbation error (i.e., estimation error in LDP with the publication budget) to empirically choose between approximation and perturbation. If approximation is chosen, the central server directly approximate the current statistics with the last released one; otherwise, each distributed user reports a new LDP perturbed value with the publication budget and the central server re-estimates the true aggregate and releases. Either the \textsf{BD} or \textsf{BA} strategy can be used here to further optimize the budget allocation in a sliding window.
However, in the local setting, since the central server cannot observe individuals' reports or directly obtain the true $\mathbf{c}_{t}$, the design of such LDP solutions is challenging. In particular, it is infeasible to accomplish the private dissimilarity calculation or data publication by adding noise over the true statistics, but we need to use FO protocols to do so. However, with FO protocols, it remains unclear how to model the \textit{dissimilarity $dis$} and \textit{publication error $err$} under LDP for empirically optimal strategy determination. % in $\mathcal{M}_{t,2}$.

\subsubsection{\textbf{Private dissimilarity estimation}}\label{subsub: pde} To address the above challenges, we first redefine the dissimilarity measure $dis^*$ as the square error between the true statistics $\mathbf{c}_t$ of current timestamp and the previous release $\mathbf{r}_l$, i.e.,
\begin{align}\label{eq: dis definition}
	\setlength{\abovedisplayskip}{3pt}
	\setlength{\belowdisplayskip}{3pt}
    dis^*=\frac{1}{d}\sum_{k=1}^d(\mathbf{c}_t[k]-\mathbf{r}_l[k])^2
\end{align}
%instead of $\frac{1}{d}\sum_{k=1}^d|\mathbf{c}_t[k]-\mathbf{r}_l[k]|$.
Then, in $\mathcal{M}_{t,1}$, we aim to obtain the dissimilarity $dis^*$ privately, i.e., from users' LDP perturbed data using the dissimilarity budget $\epsilon_{t,1}$. %The following theorem presents how to calculate $dis$.
%the relationship between $dis$ and $\overline{\mathbf{c}}_{t,1}$

\begin{theorem}\label{theorem: dissimilarity}
Let $\overline{\mathbf{c}}_{t,1}$ denote the unbiased estimate of $\mathbf{c}_t$ from an $\epsilon$-LDP frequency count over the perturbed data in $\mathcal{M}_{t,1}$.
Then, the following dissimilarity measure
\begin{align}\label{eq: estimation}	
dis=\frac{1}{d} \sum_{k=1}^d  (\overline{\mathbf{c}}_{t,1}[k]-\mathbf{r}_l[k])^2 -\frac{1}{d} \sum_{k=1}^d \text{Var}(\overline{\mathbf{c}}_{t,1}[k]).
\end{align}
is $\epsilon$-LDP and an unbiased estimation of $dis^*$ in Eq. (\ref{eq: dis definition}).
\end{theorem}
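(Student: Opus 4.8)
The plan is to prove the two assertions separately: the $\epsilon$-LDP guarantee follows from post-processing, while unbiasedness follows from a bias-variance decomposition that explains precisely why the variance term is subtracted. Throughout I would treat the previous release $\mathbf{r}_l$ as a fixed constant, since it was computed at an earlier timestamp $l<t$ using independent randomness; conditioning on the past, the only randomness in $dis$ comes from the frequency oracle at timestamp $t$.

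For the privacy claim, I would observe that the only quantity in Eq.~(\ref{eq: estimation}) depending on users' private inputs at timestamp $t$ is the estimate $\overline{\mathbf{c}}_{t,1}$, which by hypothesis is produced by an $\epsilon$-LDP frequency count over the perturbed data in $\mathcal{M}_{t,1}$. The vector $\mathbf{r}_l$ is a public constant, and the subtracted term $\frac{1}{d}\sum_k \text{Var}(\overline{\mathbf{c}}_{t,1}[k])$ is a data-independent constant determined by $\epsilon$, $n$, and $d$ (i.e.\ $V(\epsilon,n)$). Hence $dis$ is a deterministic function of the $\epsilon$-LDP output $\overline{\mathbf{c}}_{t,1}$ together with public constants, so the post-processing property of LDP immediately yields the $\epsilon$-LDP guarantee.

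For unbiasedness, the key identity is $\mathbb{E}[X^2] = \text{Var}(X) + (\mathbb{E}[X])^2$ applied coordinate-wise to $X_k = \overline{\mathbf{c}}_{t,1}[k] - \mathbf{r}_l[k]$. I would take the expectation of $dis$ over the FO randomness, using that $\overline{\mathbf{c}}_{t,1}[k]$ is an unbiased estimate of $\mathbf{c}_t[k]$, so $\mathbb{E}[X_k] = \mathbf{c}_t[k] - \mathbf{r}_l[k]$, and that shifting by the constant $\mathbf{r}_l[k]$ leaves the variance unchanged, so $\text{Var}(X_k) = \text{Var}(\overline{\mathbf{c}}_{t,1}[k])$. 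Substituting into the identity gives $\mathbb{E}[(\overline{\mathbf{c}}_{t,1}[k] - \mathbf{r}_l[k])^2] = \text{Var}(\overline{\mathbf{c}}_{t,1}[k]) + (\mathbf{c}_t[k] - \mathbf{r}_l[k])^2$; averaging over $k$ and subtracting exactly the variance term leaves $\frac{1}{d}\sum_k (\mathbf{c}_t[k] - \mathbf{r}_l[k])^2 = dis^*$, so $\mathbb{E}[dis] = dis^*$.

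The conceptual heart of the result, and the step deserving the most care, is recognizing that squaring an unbiased estimate is \emph{not} itself unbiased for the square of the mean: the estimation variance leaks into $\mathbb{E}[(\overline{\mathbf{c}}_{t,1}[k] - \mathbf{r}_l[k])^2]$ as an additive bias, and the subtracted $\frac{1}{d}\sum_k \text{Var}(\overline{\mathbf{c}}_{t,1}[k])$ is exactly the debiasing correction. The one subtlety I would flag is that exact cancellation requires the subtracted variance to equal the true variance of the estimator; since the GRR variance in Eq.~(\ref{var}) carries a mild data-dependent component through $f_k$, I would make explicit that the data-independent approximation $V(\epsilon,n)$ is used (as the paper adopts), under which the cancellation is exact.
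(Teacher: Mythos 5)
Your proof is correct and takes essentially the same route as the paper's: privacy via the post-processing property, and unbiasedness via the identity $\mathbb{E}[X^2]=\text{Var}(X)+(\mathbb{E}[X])^2$ applied to $X_k=\overline{\mathbf{c}}_{t,1}[k]-\mathbf{r}_l[k]$, which the paper derives by hand through an explicit expansion of $\text{Var}(\overline{\mathbf{c}}_{t,1}[k])$ around the constant $\mathbf{r}_l[k]$. Your closing caveat is unnecessary, though: the subtracted term $\frac{1}{d}\sum_{k=1}^d \text{Var}(\overline{\mathbf{c}}_{t,1}[k])$ is exactly computable rather than an approximation even for GRR, since the data-dependent parts enter only through $\sum_{k=1}^d f_k=1$, so the cancellation is exact without any further assumption.
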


\begin{proof}
	See Appendix~\ref{appendice: dis calculation}.
\end{proof}

Therefore, the dissimilarity can be calculated from $\overline{\mathbf{c}}_{t,1}$ while satisfying $\epsilon_{t,1}$-LDP. In the left term $\frac{1}{d} \sum_{k=1}^d (\overline{\mathbf{c}}_{t,1}[k]-\mathbf{r}_l[k])^2$ of Eq~(\ref{eq: estimation}), $\overline{\mathbf{c}}_{t,1}$ is obtained from FO while $\mathbf{r}_l$ is publicly known. The right term $\frac{1}{d} \sum_{k=1}^d \text{Var}(\overline{\mathbf{c}}_{t,1}[k])$ denoted as $V(\epsilon_{t,1},N)$, can be calculated based on the population $N$ and the dissimilarity budget $\epsilon_{t,1}$.

\begin{comment}

For example, if the GRR protocol is adopted, there is
%\begin{align}
%    \frac{1}{d} \sum_{k=1}^d\text{Var}(\overline{\mathbf{c}}_{t,1}[k])=N_{t,1}\cdot \frac{d-2+e^{\epsilon_{t,1}}}{(e^{\epsilon_{t,1}}-1)^2}.
%\end{align}

\begin{align*}
    &\frac{1}{d} \sum_{k=1}^d \text{Var}_\text{GRR}[\overline{\mathbf{c}}_{t}[k]] \\
    =&\frac{1}{d} \sum_{k=1}^d \left[N_{t,1} \cdot \frac{d-2+e^\epsilon}{(e^\epsilon-1)^2}+N_{t,1} f_k \cdot \dfrac{d-2}{e^\epsilon-1} \right] \\
    %=&N_{t,1} \cdot \frac{d-2+e^\epsilon}{(e^\epsilon-1)^2}+ \frac{1}{d} N_{t,1} \dfrac{d-2}{e^\epsilon-1} \cdot  \sum_{j=1}^d f[j]\\
    =&N_{t,1} \cdot \frac{d-2+e^\epsilon}{(e^\epsilon-1)^2}+\frac{1}{d} N_{t,1}\dfrac{d-2}{e^\epsilon-1}.
\end{align*}
If the OUE protocol is adopted, there is
\begin{align*}
    \frac{1}{d} \sum_{k=1}^d \text{Var}_\text{OUE}[\overline{c}_{t}[k]]=N\cdot \frac{d-2+e^\epsilon}{(e^\epsilon-1)^2}+\frac{N}{d}.
\end{align*}
Note that, both above equations are independent of $f_v$, which is private against the central server.

\end{comment}

\subsubsection{\textbf{Private strategy determination}}
%At current timestamp $t$, to release an aggregate $\mathbf{r}_t$ with privacy protection, there are two available strategies: i) fresh estimation (perturbation) via FO; ii) approximation by the last release $\mathbf{r}_l$. 
The key to choose the strategy of approximation or publication is to compare the dissimilarity (i.e., the potential \textit{approximation error}) with the potential \textit{publication error}. Considering that LDP protocols (e.g., GRR) are different from CDP mechanism, the LDP-based publication error should also be re-formulated.

Here, in the LDP setting, considering that $dis$ in Eq (\ref{eq: dis definition}) is an $L_2$ distance measure, we propose to use Mean Square Error (MSE) to measure the publication error, denoted as $err$.
Suppose $\overline{\mathbf{c}}_{t,2}$ as the histogram estimated via FO protocol (e.g., GRR), the estimation error can be measured as
\begin{align}
    err=\frac{1}{d} \sum_{k=1}^d(\overline{\mathbf{c}}_{t,2}[k]-\mathbf{c}_t[k])^2
\end{align}

Since $\overline{\mathbf{c}}_{t,2}$ is an unbiased estimation of $\mathbf{c}_t$, i.e., $\mathbb{E}({\overline{\mathbf{c}}_{t,2}})=\mathbf{c}_t$, %$\frac{1}{d} \sum_{k=1}^d \mathbb{E}(\overline{\mathbf{c}}_{t,2}[k]-\mathbf{c}_t[k])^2=\frac{1}{d} \sum_{k=1}^d\textup{Var}(\overline{\mathbf{c}}_{t,2}[k])$.
%Therefore, similar to the dissimilarity calculation, we can also measure the estimation error as
\begin{align}\label{eq: mse}
		\setlength{\abovedisplayskip}{3pt}
	\setlength{\belowdisplayskip}{3pt}
    err=\frac{1}{d} \sum_{k=1}^d\textup{Var}(\overline{\mathbf{c}}_{t,2}[k])
\end{align}
which denotes as $V(\epsilon_{t,2},N)$ and can be calculated from the user population $N$ and the \textit{publication budget} $\epsilon_{t,2}$.
And, taking GRR as the LDP FO, it can be written as
%\begin{align}
%    mse=\frac{1}{d} \sum_{k=1}^d\textup{Var}(\overline{\mathbf{c}}_{t,2}[k])=\frac{d-2+e^{\epsilon_{t,2}}}{N(e^{\epsilon_{t,2}}-1)^2}.
%\end{align}
\begin{align}
	err=\frac{1}{d} \sum_{k=1}^d\textup{Var}(\overline{\mathbf{c}}_{t,2}[k])=\frac{d-2+e^{\epsilon_{t,2}}}{N(e^{\epsilon_{t,2}}-1)^2}+\frac{d-2}{N(e^{\epsilon_{t,2}}-1)}. \nonumber
\end{align}
%which, should be noted that, it is independent of $f_v$ in Eq.~(\ref{var}).
It is worth noting that, $err$ is independent of $f_v$ in Eq.~(\ref{var}).

Based on above formulations, an empirically optimal strategy at current timestamp $t$ can be determined as follows.
\begin{itemize}
    \item If $dis<err$, the approximation strategy is chosen. For example, the server can directly publish the last released value without consuming the publication budget $\epsilon_{t,2}$.
    \item Otherwise, the perturbation strategy is chosen. Each user reports value via a LDP FO using the publication budget $\epsilon_{t,2}$ to the server, who releases a freshly estimated statistics.
\end{itemize}

\subsubsection{\textbf{Privacy budget allocation}}\label{subsub: BDBA}
%One remaining problem is how to allocate the publication budget in the sliding windows. Particularly, both \textsf{budget distribution} and \textsf{budget allocation} strategies in the centralized setting can be directly used here, as the budget allocation is orthogonal to the paradigm of DP. According to \textsf{BD} and \textsf{BA}, in any time window of size $w$, the sum of privacy budget is less than or equal to $\epsilon$, so \textsf{BD} and \textsf{BA} achieve $w$-event LDP.

From the high level, %Note that, both components of private dissimilarity estimation and private strategy determination will consumes privacy budget. Based on the sequential composition theorem, 
we evenly divide the entire budget in a time window, $\epsilon$, for two components: private dissimilarity estimation and private strategy determination. That is to say, the entire dissimilarity budget and publication budget, in a time window is $\sum_{i=t-w+1}^{t} \epsilon_{i,1}=\sum_{i=t-w+1}^{t} \epsilon_{i,2}=\epsilon/2$. In the private dissimilarity estimation, the dissimilarity budget is divided evenly to each timestamp in the time window, i.e., $\epsilon_{i,1}=2\epsilon/w$. 
However, we aim to invest the publication budget economically to the timestamps, which leads to two different methods, \textbf{LDP budget distribution} (\textsf{LBD}) and \textbf{LDP budget absorption} (\textsf{LBA}).

In \textbf{\textsf{LBD}}, the publication budget is distributed in an exponentially decreasing way to the timestamps where a publication to occur. 
%Algorithm~\ref{Alg BD} shows the details of \textsf{LBD}. At each time window, the entire budget $\epsilon$ is evenly divided into $\epsilon/2$ as dissimilarity budget and $\epsilon/2$ as publication budget, respectively. In sub mechanism $\mathcal{M}_{t,1}$, $\epsilon/2$ dissimilarity budget is uniformly distributed to each timestamp (Line 3). Then, all users apply an FO with the budget to report their data, which can be used to estimate a dissimilarity error $dis$ (Lines 4-6).
%At each timestamp $t$, before private strategy determination, the current remaining publication budget is calculated as, $\epsilon_\text{rm}=\epsilon/2-\sum_{i=t-w+1}^{t-1} \epsilon_{i,2}$.
%Particularly, at each time $t$, half of the remaining publication budget $\epsilon_\text{rm}$ is assigned as the \textit{potential publication budget} $\epsilon_{t,2}$ for the calculation of potential publication error. If publication strategy is chosen, $\epsilon_{t,2}$ is spent, meaning $\epsilon_{t,2}$ will be removed from the remaining publication budget at next timestamp. If approximation occurs, $\epsilon_{t,2}$ is then not truly used, i.e., $\epsilon_{t,2}=0$. Finally, the publication budget $\epsilon_{t-w,2}$ spent in timestamp $t-w~(t>w)$, out of the current sliding window is recycled. %The algorithmic pseudocode of \textsf{LBD} is shown as follows. % referred to Appendix~\ref{Alg BD}.
Algorithm~\ref{Alg BD} gives the details. For each time window, the entire budget $\epsilon$ is evenly divided into $\epsilon/2$ as dissimilarity budget and $\epsilon/2$ as publication budget, respectively. 
In sub mechanism $\mathcal{M}_{t,1}$, $\epsilon/2$ dissimilarity budget is uniformly distributed to each timestamp (Line 3). Then, all users apply an FO with the budget to report their data, which can be used to estimate a dissimilarity error $dis$ (Lines 4-6). In sub mechanism $\mathcal{M}_{t,2}$, the remaining publication budget at current time is calculated first (Line 7). Then, half of it is pre-assigned as the potential publication budget, which is used to estimate the potential publication error $err$ (Lines 8-9). By comparing $dis$ and $err$, a strategy will be chosen between publication and approximation. If publication is chosen, the potential publication budget is truly used to get a fresh publication $\overline{\mathbf{c}}_{t,2}$ (Lines 11-13). Otherwise, the last release is published as an approximation. At this time, the potential publication budget is not truly used and thus reset as $0$ (Line 15). 

\begin{algorithm}[htbp]\footnotesize
	\SetAlgoLined
	\caption{LDP Budget Distribution (\textsf{LBD})}\label{Alg BD}
	\KwIn{Total privacy budget $\epsilon$, window size $w$}
	\KwOut{Released statistics $R_t=(\mathbf{r}_1,\mathbf{r}_2,\ldots,\mathbf{r}_t, \ldots)$}
	\BlankLine
	Initialize $\mathbf{r}_0=\langle 0, \ldots, 0 \rangle^d$; \\
	\For{each timestamp $t$}
	{
		%\\[2pt]
		\tcp{Sub Mechanism $\mathcal{M}_{t,1}$:}
		Set dissimilarity budget $\epsilon_{t,1}=\epsilon/(2w)$; \\
		$\overline{D}_{t,1}\leftarrow$ All Users report via an FO with privacy budget $\epsilon_{t,1}$;\\
		Estimate $\overline{\mathbf{c}}_{t,1}\leftarrow \text{FO}(\overline{D}_{t,1},\epsilon_{t,1})$ ; \\
		Calculate $dis=\frac{1}{d}\sum\nolimits_{k=1}^d(\overline{\mathbf{c}}_{t,1}[k]-\mathbf{r}_{t-1}[k])^2-\frac{1}{d}\sum\nolimits_{k=1}^d\text{Var}(\mathbf{\overline{c}}_{t,1}[k])$;\\[5pt]
		
		\tcp{Sub Mechanism $\mathcal{M}_{t,2}$:}
		Calculate remaining publication budget $\epsilon_{rm}=\epsilon/2- \sum\nolimits_{i=t-w+1}^{t-1}\epsilon_{i,2}$;  \\
		Set potential publication budget $\epsilon_{t,2}=\epsilon_{rm}/2$; \\
		Calculate potential publication error $err$ by Eq.~(\ref{var});  \\
		\eIf {$dis > err$}
		{\tcp{Publication Strategy}
			$\overline{D}_{t,2}\leftarrow$ All Users report via an FO with budget $\epsilon_{t,2}$;\\
			Estimate $\overline{\mathbf{c}}_{t,2}\leftarrow \text{FO}(\overline{D}_{t,2},\epsilon_{t,2})$; \\
			\Return $\mathbf{r}_t=\overline{\mathbf{c}}_{t,2}$; } 
		{\tcp{Approximation Strategy} \Return $\mathbf{r}_t=\mathbf{r}_{t-1}$; set $\epsilon_{t,2}=0$.}
		%		\If{$t\geq w$}
		%		{\tcp{Recycling Budget}	
		%			$U_{A}=U_A \cup U_{t-w+1,1} \cup U_{t-w+1,2}$.}
	}
\end{algorithm}

In \textbf{\textsf{LBA}}, the publication budget is uniformly allocated budget at all timestamps then the unused budget is absorbed at the timestamps where publication is chosen. 
%Initially, the same amount of publication budget, $2\epsilon/w$, is evenly assigned to all timestamps in a time window. Then, at each timestamp $t$, if approximation was chosen at previous timestamps, the publication budgets at these timestamps are added to $\epsilon_{t,2}$; if publication was chosen at the previous timestamp, it must nullify the same budget from the succeeding timestamps to ensure the total budget within the active sliding window does not exceed $\epsilon$. Algorithm~\ref{Alg BA} gives the details of \textsf{LBA}. % The algorithmic pseudocode of \textsf{LBA} can be referred to Appendix~\ref{Alg BA}.
Algorithm~\ref{Alg BA} gives the details. Similarly, in each time window of size $w$, the entire budget $\epsilon$ is evenly divided into $\epsilon/2$ as dissimilarity budget and $\epsilon/2$ as publication budget, respectively. In sub mechanism $\mathcal{M}_{t,1}$, the process is identical to that of \textsf{LBD}. In sub mechanism $\mathcal{M}_{t,2}$, the number of timestamps to be nullified $t_N$ is calculated based on the used publication budget at the publication timestamp $l$, and then skipped with approximation (Lines 4-6). Then, the number of timestamps to be absorbed, and potential publication budget can be calculated to derive the potential publication error $err$ (Lines 8-10). In the next, by comparing $err$ and $dis$, an empirically optimal strategy is chosen between publication and approximation (Lines 11-16). 

\begin{algorithm}[htbp]\footnotesize
	\SetAlgoLined
	\caption{LDP Budget Absorption (\textsf{LBA})}\label{Alg BA}
	\KwIn{Total privacy budget $\epsilon$, window size $w$}
	\KwOut{Released statistics $R_t=(\mathbf{r}_1,\mathbf{r}_2,\ldots,\mathbf{r}_t, \ldots)$}
	\BlankLine
	Initialize $\mathbf{r}_0=\langle 0, \ldots, 0 \rangle^d$, last publication timestamp $l=0$, and $\epsilon_{l,2}=0$; \\
	\For{each timestamp $t$}
	{
		%\\[2pt]
		\tcp{Sub Mechanism $\mathcal{M}_{t,1}$:}
		Same as Lines 3-6 in \textbf{Algorithm~\ref{Alg BD}} \\[5pt]
		
		\tcp{Sub Mechanism $\mathcal{M}_{t,2}$:}
		%\tcp{Timestamps to be nullified}
		Calculate timestamps to be nullified $t_N= \frac{\epsilon_{l,2}}{ \epsilon/(2w)} -1$;\\
		\eIf{$t-l \leq t_N$}
		{
			%\tcp{To Nullify Current Timestamp}
			\Return $\mathbf{r}_t=\mathbf{r}_{t-1}$;}
		{
			%\tcp{To Absorb Unused Users}
			Calculate timestamps can be absorbed $t_A=t-(l+t_N)$;\\
			%Calculate remaining population size $N_{rm}=\lfloor N/(2w) \rfloor \cdot $min$(t_A,w)$;\\
			Set potential publication budget $\epsilon_{t,2}=\epsilon/(2w) \cdot $min$(t_A,w)$;\\
			Calculate potential publication error $err$ by Eq.~(\ref{var}); \\
			\eIf {$dis > err$}
			{\tcp{Perturbation Strategy}
				$\overline{D}_{t,1}\leftarrow$ All Users report via an FO with budget $\epsilon_{t,2}$;\\
				Estimate $\overline{\mathbf{c}}_{t,2}\leftarrow \text{FO}(\overline{D}_{t,2},\epsilon_{t,2})$; \\
				\Return $\mathbf{r}_t=\overline{\mathbf{c}}_{t,2}$, set $l=t$; }
			{\tcp{Approximation Strategy} \Return $\mathbf{r}_t=\mathbf{r}_{t-1}$; set $\epsilon_{t,2}=0$.}
		}
	}
\end{algorithm}

%One problem in $\mathcal{M}_{t,2}$ is how to assign publication budget $\epsilon_{t,2}$ for the potential perturbation at each timestamp $t$ in a sliding window.
%\begin{itemize}
%	\item In \textsf{LDP-BD}, the publication budget is distributed in an exponentially decreasing way to the timestamps where a publication to occur, and reuses the budget spent in timestamps out of the current sliding window.
%	\item In \textsf{LDP-BA}, the publication budget is uniformly allocated budget at all timestamps then the unused budget is absorbed at the timestamps where approximation is chosen.
%\end{itemize}

%\textbf{Key differences from the centralized case}: The dynamic schemes have been studied in the centralized case. However, the key difference here is the unified distortion analysis for LDP protocols. In the centralized case, the dissimilarity value can be directly obtained ;

\subsection{Analysis}
\subsubsection{Privacy Analysis}\label{sec: budget-division privacy analysis}
%\textsf{LBD} and \textsf{LBA} in this paper adopt a similar budget division and allocation methodologies as those in~\cite{Kellaris2014DifferentiallyPE}. Therefore, 

Both \textsf{LBD} and \textsf{LBA} satisfy $w$-event LDP. %Here, we mainly analyze their utility.
\begin{theorem}\label{theorem: LBD privacy}
	\textsf{LBD} and \textsf{LBA} satisfy $w$-event LDP for each user.
\end{theorem}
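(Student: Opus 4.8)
The plan is to reduce both claims to the composition result of Theorem~\ref{thm-w-event LDP}. Fixing a user, I would write the per-timestamp mechanism $\mathcal{M}_t$ as the sequential composition of the dissimilarity sub-mechanism $\mathcal{M}_{t,1}$ and the publication sub-mechanism $\mathcal{M}_{t,2}$. Each of these is an FO invocation (or, in the approximation branch, no access to the raw value at all), so each draws fresh randomness, and by the LDP guarantee of the FO, $\mathcal{M}_{t,1}$ is $\epsilon_{t,1}$-LDP while $\mathcal{M}_{t,2}$ is $\epsilon_{t,2}$-LDP. By sequential composition $\mathcal{M}_t$ is $(\epsilon_{t,1}+\epsilon_{t,2})$-LDP, and the randomness is independent across timestamps, so the hypotheses of Theorem~\ref{thm-w-event LDP} are met. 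It then remains only to verify that for every window $[i-w+1,i]$ the budget sum satisfies $\sum_{\tau=i-w+1}^{i}(\epsilon_{\tau,1}+\epsilon_{\tau,2})\le\epsilon$, from which $w$-event $\epsilon$-LDP follows immediately.

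I would split this budget bound into its two halves. The dissimilarity half is trivial: both algorithms set $\epsilon_{\tau,1}=\epsilon/(2w)$ at every timestamp, so $\sum_{\tau=i-w+1}^{i}\epsilon_{\tau,1}=w\cdot\epsilon/(2w)=\epsilon/2$. Hence it suffices to show the publication budget obeys $\sum_{\tau=i-w+1}^{i}\epsilon_{\tau,2}\le\epsilon/2$ in each window. For \textsf{LBD} I would argue by induction on $i$. Writing $S=\sum_{\tau=i-w+1}^{i-1}\epsilon_{\tau,2}$ for the partial window sum excluding the current timestamp, the update sets the potential budget to $\epsilon_{i,2}=\epsilon_{rm}/2=\tfrac12(\epsilon/2-S)$ when publishing and to $0$ when approximating; in either case the new window sum is at most $S+\tfrac12(\epsilon/2-S)=\tfrac12 S+\epsilon/4$. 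Since $S$ is a sum of nonnegative terms over $w-1$ timestamps contained in the previous size-$w$ window, the inductive hypothesis gives $S\le\epsilon/2$, whence the new sum is at most $\epsilon/2$; this simultaneously shows $\epsilon_{rm}\ge 0$ throughout, so no budget is over-committed.

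For \textsf{LBA} the argument is the crux and the main obstacle, because a single publication consolidates the nominal $\epsilon/(2w)$ allocations of several timestamps at once. The key invariant to maintain is that a publication at timestamp $l$ spending $\epsilon_{l,2}=\tfrac{\epsilon}{2w}\cdot\min(t_A,w)$ is followed by exactly $t_N=\min(t_A,w)-1$ forcibly nullified timestamps, so that one publication together with its nullified successors spans $\min(t_A,w)$ timestamps but consumes only $\min(t_A,w)\cdot\tfrac{\epsilon}{2w}$ budget, precisely the nominal allocation of those timestamps. I would formalize this by charging each publication's budget to the timestamps it absorbs and showing that within any sliding window at most $w$ timestamps' worth of nominal budget can be charged; the cap $\min(t_A,w)\le w$ guarantees no single publication already exceeds $\epsilon/2$, and the nullification prevents double-counting across publications. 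Combining the two halves yields the window sum $\le\epsilon/2+\epsilon/2=\epsilon$, and Theorem~\ref{thm-w-event LDP} then gives $w$-event LDP for both methods. One point that needs care in the write-up is that the publish/approximate decision is data-dependent through the private $dis$; I would note that this decision is a post-processing of the already-released $\mathcal{M}_{\cdot,1}$ outputs, so it incurs no extra cost, and that bounding the \emph{potential} publication budgets suffices because the approximation branch only lowers the realized consumption.
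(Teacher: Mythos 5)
Your overall skeleton is the same as the paper's proof: reduce to Theorem~\ref{thm-w-event LDP} via sequential composition of $\mathcal{M}_{t,1}$ and $\mathcal{M}_{t,2}$, note that the dissimilarity budgets contribute exactly $\sum_{\tau=i-w+1}^{i}\epsilon_{\tau,1}=w\cdot\epsilon/(2w)=\epsilon/2$ per window, and then bound the publication budgets by $\epsilon/2$. Your \textsf{LBD} induction is correct and is essentially the paper's induction; your version, which passes from the previous full window to the partial sum $S$ by nonnegativity, even handles the base case more uniformly than the paper's separate geometric-series computation for $1\le t\le w$. Your side remarks (the publish/approximate decision being post-processing of already-released LDP outputs; realized budgets never exceeding potential ones) are sound.

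The gap is in \textsf{LBA}, which you yourself identify as the crux. Your amortization charges a publication's budget $\epsilon_{l,2}=\min(t_A,w)\cdot\tfrac{\epsilon}{2w}$ to one side only — first to the publication plus its $t_N$ nullified successors, then (inconsistently) to "the timestamps it absorbs," i.e., its skipped predecessors — and neither one-sided scheme proves the per-window bound. The quantity to bound for a window $W$ is the budget \emph{emitted} by publications lying in $W$, whereas a charging scheme controls charge \emph{received} by timestamps in $W$; these coincide only if every charge emitted inside $W$ also lands inside $W$. Concretely, let $\tau$ be a publication that absorbed the $w-1$ skipped timestamps preceding it, so $\epsilon_{\tau,2}=\epsilon/2$, and take $W=[\tau-w+1,\tau]$: all of $\tau$'s nullified successors lie outside $W$, so successor-charging accounts for only $\epsilon/(2w)$ of the $\epsilon/2$ spent inside $W$. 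Symmetrically, predecessor-charging fails for the window $[\tau,\tau+w-1]$. The repair — which is exactly what the paper's proof (Appendix~\ref{appendice: privacy proof BD/BA}) does — is to associate with each publication its zero-budget companions on \emph{both} sides and verify a window-coverage count: if a size-$w$ window has $a$ slots before the publication and $b$ after ($a+b=w-1$), then writing $m=\min(t_A,w)$, the number of in-window companions is at least $\min(a,t_A-1)+\min(b,m-1)\ge\min(a,m-1)+\min(b,m-1)\ge\min(a+b,m-1)=m-1$. Averaging the publication's budget over itself and these in-window companions gives at most $\epsilon/(2w)$ per timestamp, and disjointness of companion sets across publications (nullified successors of one publication end before the absorbed predecessors of the next begin) yields $\sum_{\tau\in W}\epsilon_{\tau,2}\le\epsilon/2$. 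Without this two-sided coverage argument, the step "within any sliding window at most $w$ timestamps' worth of nominal budget can be charged" does not follow.
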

\begin{proof}
	See Appendix~\ref{appendice: privacy proof BD/BA}.
\end{proof}
\subsubsection{Utility Analysis}\label{sec: budget-division analysis}
For simplicity, in both \textsf{LBD} and \textsf{LBA}, we assume there are $m<w$ publications occur at the timestamps $p_1,p_2,...,p_m$ in the window of size $w$. Besides, no budget is recycled from past timestamps outside the window, and each publication approximates the same number of skipped/nullified publications.
Similar to the analysis of \textsf{LSP}, at any timestamp $t$, if publication occurs, then the MSE of the release $\mathbf{r}_t$ is $\text{MSE}_\text{pub}=\text{Var}[\mathbf{r}_t]$; if approximation is chosen, its MSE equals to the sum of the variance of last release at timestamp $l$ (i.e., $\text{Var}[\mathbf{r}_{l}]$), and the variance of the true statisics at the current timestamp $t$ from that at timestamp $l$ (i.e., $(\mathbf{c}_{t}-\mathbf{c}_{l})^2$), i.e., $\text{MSE}_\text{apr}=\text{Var}[\mathbf{r}_{l}]+(\mathbf{c}_{t}-\mathbf{c}_{l})^2$. Then we express MSE in a whole time window as follows
\begin{align}\label{eq: BD/BA-MSE}
\text{MSE}_{LBD/LBA} =&\dfrac{1}{w} \left[ \dfrac{w}{m}\sum\limits_{i = 1}^m\text{Var}[\mathbf{r}_{p_i}]
+ \sum_{i=1}^{m} \sum_{t=p_i}^{p_{i+1}-1} (\mathbf{c}_{t}-\mathbf{c}_{p_i})^2 \right]
%	=&\dfrac{1}{m}\sum\limits_{i = 1}^m V(\epsilon/2^i,N)+ \dfrac{1}{w}\sum_{i=1}^{m} \sum_{t=p_i}^{p_{i+1}-1} (\mathbf{c}_{t}-\mathbf{c}_{p_i})^2
%	\text{MSE} =&\dfrac{1}{w} \left[ \dfrac{w}{m}\sum\limits_{i = 1}^m\text{Var}[\mathbf{r}_{p_i}]
%	 + \sum_{i=0}^{m-1} \sum_{h=1}^{s-1} (\mathbf{c}_{1+is}-\mathbf{c}_{1+is+h})^2 \right] \\
%	\leq &\dfrac{1}{m} \sum_{i=1}^m \text{Var}[\mathbf{r}_{p_i}] + \dfrac{m}{w}  \mathop{\rm{max}}\limits_{t=1+is \atop i=0,1,...,m} \left( \sum_{k=1}^{s-1} (\mathbf{c}_t-\mathbf{c}_{t+k})^2 \right)
\end{align}
where the second term in the bracket solely dependes on the underlying data and shows the data-dependent characteristics of \textsf{LBD} and \textsf{LBA}. In the following, we analyze the left term in the bracket.

In \textsf{LBD}, since the budget is distributed to the $m$ publications in an exponentially decreasing way, the budget sequence of $\epsilon_{t,2}$ is then $\epsilon/4, \epsilon/8, \ldots, \epsilon/2^{m+1}$.  There is
\begin{align}
\sum\limits_{i = 1}^m\text{Var}_\textsf{LBD}[\mathbf{r}_{p_i}]=\sum\limits_{i = 1}^m V(\epsilon/2^{i+1}),N)<m \cdot V(\epsilon/2^{m+1},N)
\end{align}
where $V(\epsilon,n)$ denotes the estimation variance of an FO from $n$ users' LDP data using budget $\epsilon$.
As we can see, with the increase of $m$, the error of \textsf{LBD} would increase dramatically.  

In \textsf{LBA}, due to $m$ publications in the assumption, there are $w-m$ approximations. Since each publication approximates the same number of skipped/nullified publications, there are $\frac{w-m}{2\cdot m}$ skipped ( whose budgets are absorbed) and $\frac{w-m}{2\cdot m}$ nullified publications in average. Then, each publication receives those skipped budget $\frac{(\frac{w-m}{2\cdot m}+1)\cdot \epsilon}{2\cdot w}=\frac{(w+m)\cdot \epsilon}{4\cdot w \cdot m}$ and incurs MSE of $V(\frac{w+m}{4\cdot w \cdot m}\cdot \epsilon,N)$.
\begin{align}\label{eq: BAerror}
\sum\limits_{i = 1}^m\text{Var}_\textsf{LBA}[\mathbf{r}_{p_i}]=m\cdot V(\frac{w+m}{4\cdot w \cdot m}\cdot \epsilon,N)
\end{align}
Compared to \textsf{LBD}, \textsf{LBA}'s error increases with $m$ more mildly.

%\begin{lemma}\label{lemma: BDBA comparison}
%	BA always has smaller variance than BD.
%\end{lemma} 
%
%\begin{proof}
%   Suppose that BD has smaller variance than BA. Then, $m \cdot V(\epsilon/2^{m+1},N)<m\cdot V(\frac{w+m}{4\cdot w \cdot m}\cdot \epsilon,N)$
%\end{proof}

\subsubsection{Communication Analysis}\label{subsub: communication budget}
Given an FO, the LDP perturbed data in different methods has the same packet size in each communication. So, the communication cost can be simply measured by the average communication times of each user per timestamp, or \textit{communication frequency per user} (CFPU).  

In both \textsf{LBD/LBA}, each user have to communication twice with the server at publication timestamps and only once at approximation timestamps. Therefore, when there are $m$ publications in a window of size $w$, the average CFPU is $(2m+w-m)/{w}=1+m/w$.

\section{Population Division-based Methods}\label{sec: polo}
In this section, we first present a novel idea of population division for LDP over data streams. Then, we propose to transform the above budget division-based methods to the population division-based methods for better utility and communication efficiency.

\begin{figure}[tbp]
	\includegraphics[width=0.3\textwidth]{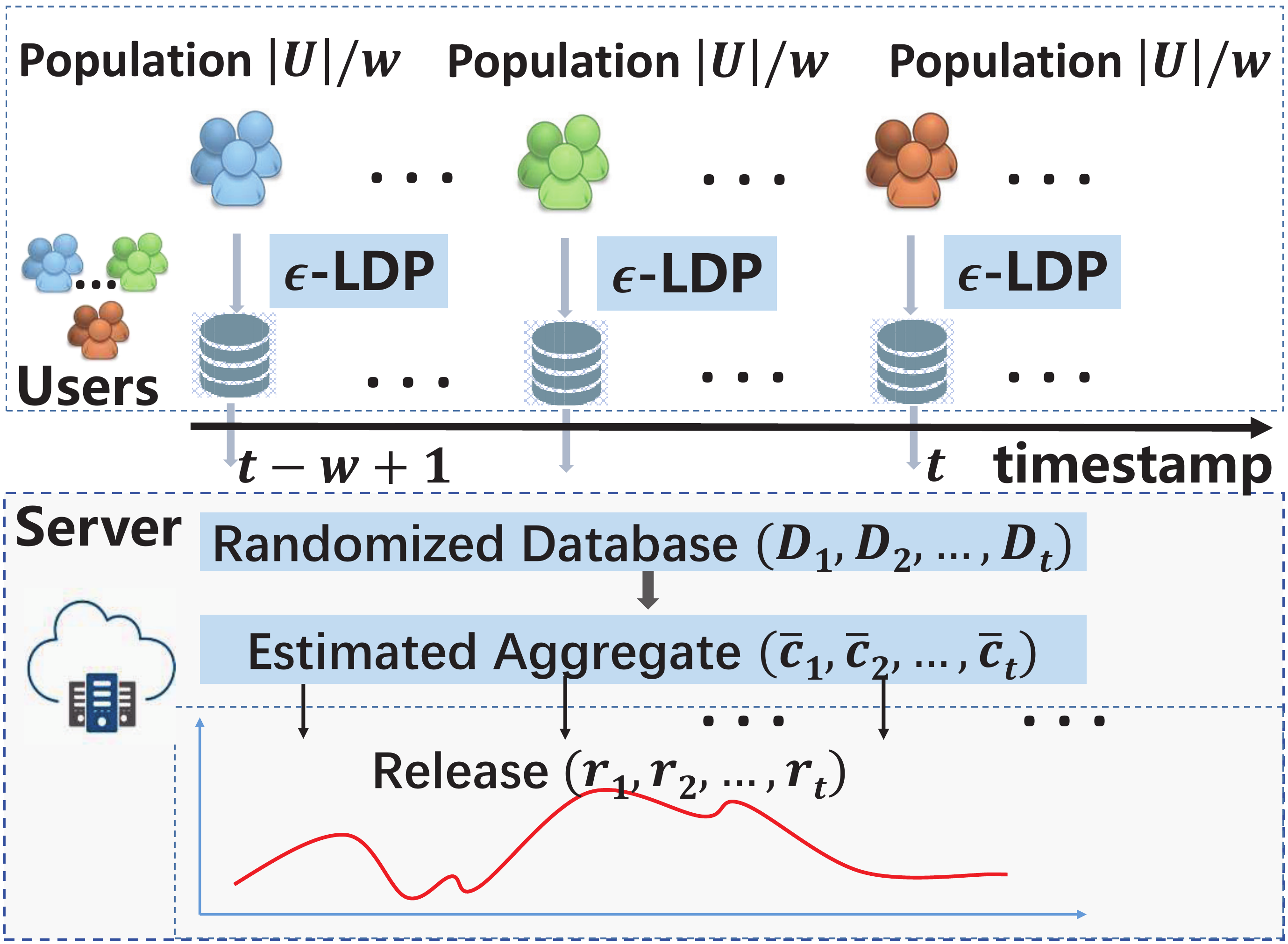}
\vspace{-2mm}	\caption{Illustration of population division framework}\label{fig: PDframework}
\end{figure}

\begin{comment}
\graphicspath{{figures/Overview/}}
\begin{figure}[htbp]
	\vspace{-0.2cm}
	%    	\subfigure[Budget Division-based \textbf{CDP}]{\label{fig:framework_cdp}
	%    		\begin{minipage}[t]{0.32\textwidth}
	%    			\centering
	%    			\includegraphics[width=5.2cm]{Frameworks_1.eps}
	%    		\end{minipage}\vspace{-3mm}
	%    	}%
	\subfigure[Budget Division Framework]{\label{fig:framework_ldp1}
		\begin{minipage}[t]{0.24\textwidth}
			\centering
			\includegraphics[width=4.25cm]{}
		\end{minipage}\vspace{-3mm}
	}%
	\subfigure[Population Division Framework]{\label{fig:framework_ldp2}
		\begin{minipage}[t]{0.24\textwidth}
			\centering
			\includegraphics[width=4.25cm]{}
		\end{minipage}\vspace{-3mm}
	}%
	\centering
	\caption{Budget Division vs. Population Division}
	\label{fig:frameworks}
	\vspace{-0.4cm}
\end{figure}%\vspace{-1mm}
\end{comment}

\subsection{Basic idea} \label{subsub: LPU}
The budget division framework provides a feasible solution to infinite streaming data collection with LDP.
However, we notice that, data utility in LDP scenarios is much more sensitive to privacy budget than that in CDP. Recall that $V(\epsilon,n)$ denotes the FO estimation variance from $n$ users with privacy budget $\epsilon$. According to Eq~(\ref{var}), with fixed $n$, $V(\epsilon,n)$ is $O((e^\epsilon-1)^{-1})$ in terms of average budget $\epsilon$. It increases sharply as the budget assigned to each timestamp becomes small. %, the estimate variance declines sharply. 
Recently, several previous studies \cite{Wang2017LocallyDPFE,Wang2017LocallyDPHHI} indicate that it can achieve much smaller overall error by partitioning users into groups and using the entire privacy budget in each group. With the fixed $\epsilon$, $V(\epsilon,n)$ is $O(n^{-1})$ in terms of average user population $n$, which increases much mildly as $n$ becomes small. %In other words, it is favorable to divide the population than dividing privacy budget in LDP.
Therefore, we adopt this idea in streaming data collection with LDP. Intuitively, a baseline methods can be derived to achieve $w$-event LDP.

{{\textbf{LDP Population Uniform Method} (\textsf{LPU})}}. At the beginning, the central server uniformly assign the uses into $w$ disjoint groups, each with roughly $N/w$ users~\footnote{Precisely, if $N \mod w \neq 0$, it may be $\lfloor N/w \rfloor$ for some groups or $\lfloor N/w \rfloor+1$ for the rests. For simplicity, we assume $N/w$ for each.}. At each timestamp, it requests a group of users that have never been requested before to report their value. In a window with $w$ timestamps, each group of users will only report once with the entire budget $\epsilon$. And after $w-1$ timestamps, each group users will be requested and report again for the new sliding window. In this case, any user does not report in each sliding window more than once, thus spending no more than $\epsilon$-LDP budget. Hence, $w$-event LDP is guaranteed for each user.
Fig.~\ref{fig: PDframework} illustrates the population division methodology. 

\begin{comment}

\begin{lemma}\label{lemma: uu privacy}
	The \textsf{LPU} algorithm satisfies $w$-event $\epsilon$-LDP.
\end{lemma}

\begin{proof}
	Within any sliding window of size $w$, each user spends privacy budget at most $\epsilon$. Therefore, according to Definition~\ref{def: w-event LDP}, this lemma is proved.
\end{proof}

\end{comment}

\begin{theorem}\label{lemma: mse comparison}
	Given the same FO protocol GRR or OUE, the MSE of \textsf{LPU} is smaller than that of \textsf{LBU}, i.e., $\text{MSE}_\textsf{LPU}<\text{MSE}_\textsf{LBU}$.
\end{theorem}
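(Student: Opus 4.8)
The plan is to reduce the claimed inequality to a comparison between the two FO estimation-variance regimes — full budget on a fraction of the population versus a fraction of the budget on the full population — and then to isolate a single elementary inequality that drives everything.

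First I would write down the two MSEs explicitly. From Section~\ref{subsub: LBU}, $\text{MSE}_\textsf{LBU}=V(\epsilon/w,N)$, since all $N$ users report at every timestamp with budget $\epsilon/w$. For \textsf{LPU}, at each timestamp the released histogram is the unbiased FO estimate produced by one group of $N/w$ users reporting with the full budget $\epsilon$, so $\text{MSE}_\textsf{LPU}=V(\epsilon,N/w)$. Hence the claim is exactly $V(\epsilon,N/w)<V(\epsilon/w,N)$.

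Next I would exploit the structure of $V$. For both GRR and OUE the variance factors as $V(\epsilon,n)=\tfrac{1}{n}\,g(\epsilon)$, where $g_{\mathrm{GRR}}(\epsilon)=\tfrac{d-2+e^{\epsilon}}{(e^{\epsilon}-1)^2}$ using the simplified variance of Eq.~(\ref{var}), and $g_{\mathrm{OUE}}(\epsilon)=\tfrac{4e^{\epsilon}}{(e^{\epsilon}-1)^2}$. The target inequality then becomes $\tfrac{w}{N}g(\epsilon)<\tfrac{1}{N}g(\epsilon/w)$, i.e. $g(\epsilon/w)>w\,g(\epsilon)$. Introducing $x=e^{\epsilon/w}>1$ so that $e^{\epsilon}=x^{w}$, and using the factorization $x^{w}-1=(x-1)S$ with $S=\sum_{i=0}^{w-1}x^{i}$, I would cancel the common factor $(x-1)^2$ in the two denominators. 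For OUE this collapses cleanly to $S^{2}>w\,x^{w-1}$, while for GRR it becomes $(d-2+x)S^{2}>w(d-2+x^{w})$, which splits into the $(d-2)$-part $S^{2}>w$ and the leading part $xS^{2}>w\,x^{w}$; both follow once $S^{2}>w\,x^{w-1}$ is known, since $x^{w-1}\ge1$.

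So everything reduces to the single lemma $S^{2}>w\,x^{w-1}$ for $x>1$ and $w\ge2$, which I regard as the crux. I would prove it by AM--GM applied to the $w$ terms $1,x,\dots,x^{w-1}$: their arithmetic mean strictly exceeds their geometric mean (strict because $x\neq1$), giving $S>w\,x^{(w-1)/2}$ and hence $S^{2}>w^{2}x^{w-1}\ge w\,x^{w-1}$. An equivalent route is to expand $S^{2}=\big(\sum_{i=0}^{w-1}x^{i}\big)^{2}$ as a polynomial in $x$: the coefficient of $x^{w-1}$ is exactly $w$ (its antidiagonal contains $w$ lattice points) and all coefficients are nonnegative with positive constant term, so $S^{2}>w\,x^{w-1}$ for $x>0$, $w\ge2$. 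The case $w=1$ is naturally excluded, since then \textsf{LPU} and \textsf{LBU} coincide and the inequality degenerates to equality. Finally I would note that the $f_k$-dependent term dropped from the exact GRR variance obeys the same comparison, as it only requires $w(e^{\epsilon/w}-1)<e^{\epsilon}-1$, i.e. $w<S$, which again holds whenever $x>1$ and $w\ge2$.
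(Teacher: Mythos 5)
Your proposal is correct and follows essentially the same route as the paper's own proof: both reduce the claim to $V(\epsilon,N/w)<V(\epsilon/w,N)$, substitute $z=e^{\epsilon/w}>1$, factor $z^{w}-1=(z-1)\,S$ with $S=\sum_{i=0}^{w-1}z^{i}$, and split the GRR comparison into a $(d-2)$-part and a leading exponential part, each reducing to $S^{2}>w$ and $S^{2}>w\,z^{w-1}$ respectively. Your write-up is in fact tighter than the paper's: the paper simply asserts $(z-1)^2\bigl[S^{2}-w z^{w-1}\bigr]>0$ without justification, omits the OUE case with ``details are similar,'' and silently drops the $f_k$-dependent variance term, whereas you prove the crux inequality via AM--GM (or the coefficient-counting expansion of $S^{2}$), carry out the OUE reduction explicitly, and check that the dropped $f_k$-term obeys the same comparison via $w<S$.
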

\begin{proof}
	See Appendix~\ref{appendice: UU and UP}.
\end{proof}\vspace{-2mm}
Note that, since only a portion of users participate in reporting, the population division methodology can also greatly reduce the communication cost. In \textsf{LPU}, the number of users upload perturbed data at each timestamp is only $1/w$ of the whole population in average. Therefore, the communication cost is $1/w$ of that in \textsf{LBU}. 

%\textcolor{blue}{Note that, despite less error than \textsf{UP}, MSE of \textsf{UU} would still increase with $w$.} Besides, the number of users upload perturbed data at each timestamp is only $1/w$ of the whole population in average. \textit{That is to say, the communication cost is $1/w$ of that in \textsf{UP}}. 
%It should note that, despite less error than \textsf{UP}, we can see from Eq.~(\ref{eq: UU MSE}) that MSE of \textsf{UU} also increases with $w$ .

%At each timestamp, the server randomly acquires a disjoint group of users in a subset of whole population to report via an FO, with the entire privacy budget $\epsilon$. In particular, each group of users choose only one timestamp within the window to report at. In such a case, any user does not report in each sliding window more than once, thus spending no more than $\epsilon$-LDP budget. How the acquisition is done by the server is beyond the scope of this paper. For example, it can randomly each time request a group of users from the IPs of clients. Note that, since only a portion of users participate in reporting, the population division-based strategy can also greatly reduce the communication cost.

\textsf{LSP} in Section~\ref{subsub: sample} can be also seen as a population division method. Particularly, all users are regarded as to be divided into $w$ groups, in which, one group has the whole population and the rests have zero users. Then, that one group of users are assigned to report at a single timestamp within the window using the entire budget $\epsilon$ while no users for the next $w-1$ timestamps.

However, both \textsf{LPU} and \textsf{LSP} cannot be adaptive to streams with unknown fluctuations, which still limits their utility.

\subsection{Adaptive Population Division Methods}
\textsf{LBD/LBA} provides a reference framework that improves the utility of baseline methods via adaptively assigning privacy budget according to the non-deterministic sparsity in data streams. In the following, we present two adaptive population division methods \textsf{LPU/LPA}, which migrates this idea to the population division framework to significantly enhance the utility.

\subsubsection{Overview}
%Overall, in the budget division based LDP solutions (i.e., BD and BA), privacy budget is dynamically divided and allocated to different timestamps while ensuring the total privacy budget in each sliding window does not exceed $\epsilon$. Under the framework of PoLo, the user population instead of privacy budget is divided and allocated over timestamps. We aim to dynamically divide the whole population into disjoint groups of different sizes. Each group of users chooses only one timestamp to report at. All users' reported data are locally perturbed with the entire privacy budget $\epsilon$. In such a case, any user does not report in each sliding window more than once.

%to develop the population dividing LDP solutions with better utility.
For better analogy to \textsf{LBD/LBA}, we still introduce the population division based methods with two sub mechanisms $\mathcal{M}_1$ and $\mathcal{M}_2$. 
%Recall that privacy budget is first evenly partitioned into the dissimilarity budget for $\mathcal{M}_1$ and the publication budget for $\mathcal{M}_2$ in the budge division case. Similarly, 
We first evenly partition the whole population $U$ of size $N$ into \textit{dissimilarity users} $U_1$ of size $|U_1|$ for $\mathcal{M}_1$ and \textit{publication users} $U_2$ of size $|U_2|$ for $\mathcal{M}_2$, each with $\lfloor N/2 \rfloor$ users. Similarly, $\mathcal{M}_1$ mainly achieves private dissimilarity calculation. Differently, under the population division framework, $\mathcal{M}_2$ accomplishes private strategies determination and participant users allocation.

\textbf{\textit{Private dissimilarity calculation}}: Section~\ref{subsub: pde} defines the dissimilarity measure $dis$ in the LDP setting. In $\mathcal{M}_1$, at each timestamp $t$, we still aim to estimate the dissimilarity $dis^*=\frac{1}{d}\sum\nolimits_{k=1}^d(\mathbf{c}_t[k]-\mathbf{r}_l[k])^2$ based on Eq.~(\ref{eq: estimation}). Similarly, we have to first obtain an unbiased estimation $\mathbf{\overline{c}}_t$ through an FO at timestamp $t$.
Differently, under the population division methodology, it can only be obtained from the LDP protected data (with privacy budget $\epsilon$) of dissimilarity users $U_{t,1}$ at timestamp $t$. We here partition the $\lfloor N/2 \rfloor$ dissimilarity users over the $w$ timestamps evenly. That is to say, at each timestamp $t$, $|U_{t,1}|=\lfloor N/(2w) \rfloor$ dissimilarity users report their value via an FO using the entire budget $\epsilon$.

%It should note that, due to partial estimation, the estimate value of $\mathbf{c}_t$ should be properly reconstructed. For example, for frequency estimation, it can be directed estimated; while for frequency count, the sampling ratio should be considered to reconstruct the estimation value.

%Under the population dividing LDP, we still measure the private dissimilarity as $dis=\frac{1}{d}\sum\nolimits_{k=1}^d(\mathbf{c}_t[k]-\mathbf{r}_l[k])^2$  by estimating $\mathbf{c}_t$ from the LDP protected data of dissimilarity users $U_{t,1}$. Instead of evenly dividing the dissimilarity budget, we here partition the $N/2$ dissimilarity users for $\mathcal{M}_1$ to the $w$ timestamps evenly for the estimation of dissimilarity $dis$. That is to say, at each timestamp $t$, $|U_{t,1}|=N/(2w)$ dissimilarity users report the measurements with the entire privacy budget $\epsilon$.

\textbf{\textit{Private strategy determination}}: In $\mathcal{M}_2$, the estimated dissimilarity $dis$ output by $\mathcal{M}_1$ (i.e., approximation error) and the potential publication error $err$ are compared to empirically choose a better strategy (i.e., with smaller error) from approximation and publication. The publication error $err$ equals to the estimation variance $V(\epsilon,|U_{t,2}|)$, can be calculated based on the available privacy budget $\epsilon_{t,2}$ as well as the number of potential publication users $|U_{t,2}|$, e.g., according to Eq.~(\ref{var}) in GRR. Under the population division framework, the budget is fixed as a constant $\epsilon$ and the publication error is determined by $|U_{t,2}|$, which is dynamically assigned in a sliding window. The more publication users, the less the publication error $err$ is. However, since any user only participates once in a window, the availability of publication users $U_{t,2}$ in each timestamp $t$ in the sliding window is limited and should be carefully assigned.

\textbf{\textit{Participant users allocation}}: In $\mathcal{M}_2$, with the above transition from budget division to population division, the adaptive budget allocation schemes, i.e., budget distribution (in \textsf{LBD}) and budget absorption (\textsf{LBA}) can be transferred for assigning the number of publication users $|U_{t,2}|$ under the population division framework. This also leads to two adaptive population division methods: \textit{population distribution \textsf{LPD}} and \textit{population absorption \textsf{LPA}}. %Particularly, in \textsf{UD}, the number of publication users is distributed in an exponentially decreasing way to the timestamps where a publication is to occur, and reuses the users assigned in timestamps out of the current sliding window. In \textsf{UA}, the number of publication users is uniformly assigned at all timestamps initially and then the unused users are absorbed at the timestamps where approximation is chosen. %Note that, the min population can only be divided into

In the following, we present the details of \textsf{LPD} and \textsf{LPA}. % to demonstrate how does the population division framework work for streaming data collection with LDP.

\begin{algorithm}[tbp]\footnotesize
	\SetAlgoLined
	\caption{LDP Population Distribution (\textsf{LPD})}\label{Alg UD}
	\KwIn{Total population $U$ of size $N=|U|$, privacy budget $\epsilon$, window size $w$}
	\KwOut{Released statistics $R_t=(\mathbf{r}_1,\mathbf{r}_2,\ldots,\mathbf{r}_t, \ldots)$}
	\BlankLine
	Initialize available userset $U_A=U$, and $\mathbf{r}_0=\langle 0, \ldots, 0 \rangle$; \\
	\For{each timestamp $t$}
	{
		%\\[2pt]
		\tcp{Sub Mechanism $\mathcal{M}_{t,1}$:}
		Sample users $U_{t,1}$ from $U_A$ with the size of $\lfloor N/(2w)\rfloor$, remove $U_{t,1}$ from $U_A$, i.e., $U_A=U_A \setminus U_{t,1}$; \\
		%$\overline{D}_{t,1}\leftarrow$ Call GRR($U_{t,1}$, $t$, $\epsilon$) in \textbf{Algorithm}~\ref{alg: Request-User-Response}; \\
		%$\overline{D}_{t,1}\leftarrow$ Request users in $U_{t,1}$ to perform GRR with $\epsilon$;\\
		$\overline{D}_{t,1}\leftarrow$ Users in $U_{t,1}$ report via an FO with privacy budget $\epsilon$;\\
		%Call GRR($U_{t,1}$, $t$, $\epsilon$) in \textbf{Algorithm}~\ref{alg: Request-User-Response}; \\
		Estimate $\overline{\mathbf{c}}_{t,1}\leftarrow \text{FO}(\overline{D}_{t,1},\epsilon)$ ; \\
		Calculate $dis=\frac{1}{d}\sum\nolimits_{k=1}^d(\overline{\mathbf{c}}_{t,1}[k]-\mathbf{r}_{t-1}[k])^2-\frac{1}{d}\sum\nolimits_{k=1}^d\text{Var}(\mathbf{\overline{c}}_{t,1}[k])$;\\[5pt]
		
		\tcp{Sub Mechanism $\mathcal{M}_{t,2}$:}
		Calculate remaining population size $N_{rm}=N/2- \sum\nolimits_{i=t-w+1}^{t-1}|U_{i,2}|$;  \\
		Set number of potential publication users $N_{pp}=N_{rm}/2$; \\
		%\textcolor{blue}{Set $|U_{t,2}|‘=N_{rm}/2-|U_{t,1}|$???}; \\
		Calculate potential publication error $err$ by Eq.~(\ref{var});  \\
		\eIf {$dis > err$ and $N_{pp}\geq u_\text{min}$}
		{\tcp{Publication Strategy}
			Sample a userset $U_{t,2}$ from $U_A$ with the size of $|U_{t,2}|=N_{pp}$, $U_A=U_A  \setminus U_{t,2}$;  \\
			$\overline{D}_{t,2}\leftarrow$ Users in $U_{t,2}$ report via an FO with budget $\epsilon$;\\
			%$\overline{D}_{t,2}\leftarrow$ Request users in $U_{t,2}$ to perform GRR with $\epsilon$;\\
			%$\overline{D}_{t,2}\leftarrow$ Call GRR($U_{t,2}$, $t$, $\epsilon$) in \textbf{Algorithm}~\ref{alg: Request-User-Response}; \\
			Estimate $\overline{\mathbf{c}}_{t,2}\leftarrow \text{FO}(\overline{D}_{t,2},\epsilon)$; \\
			%\Return $\mathbf{r}_t=\dfrac{|U_{t,1}|\cdot  \overline{\mathbf{c}}_{t,1}+|U_{t,2}|\cdot \overline{\mathbf{c}}_{t,2}}{|U_{t,1}|+|U_{t,2}|}$; } 
			\Return $\mathbf{r}_t=\overline{\mathbf{c}}_{t,2}$; } 
		{\tcp{Approximation Strategy} \Return $\mathbf{r}_t=\mathbf{r}_{t-1}$.}% set $N_{pp}=0$.}
		\If{$t\geq w$}
		{\tcp{Recycling Users}	
			$U_{A}=U_A \cup U_{t-w+1,1} \cup U_{t-w+1,2}$.}
	}
\end{algorithm}
%\vspace{-3mm}
\subsubsection{\textbf{LDP Population Distribution Method (\textsf{LPD})}}\label{subsub: LPD}

%\textsf{LDP-UD} starts with the entire population $U$ and (i) allocates some fixed dissimilarity users per timestamp, (ii) distributes publication users in an exponentially decreasing fashion to the timestamps where a publication is decided to occur, and (iii) recycles the publication users in timestamps falling outside the active window. %The publication occurs when the statistics at current timestamp $t$ are much dissimilar from the last release. To prevent distorting the dissimilarity and retain the accuracy of the current statistics, it is beneficial to invest a large group of users to estimate the statistics at publication timestamp $t$. \textsf{UD} hopes that few publications will take place in the same window. Hence, the first publications receive an exponentially higher portion of the total users than the subsequent ones as the window slides over time. When an old timestamp falls out of the active window, its dissimilarity and publication users are recycled, essentially resetting the available users for future publications.

Algorithm~\ref{Alg UD} presents the details of \textsf{LPD}. %As shown, UD starts with evenly dividing the population $U$ into two parts as dissimilarity users for $\mathcal{M}_1$ and publication users for $\mathcal{M}_2$, respectively.
Firstly, for calculation of dissimilarity $dis$ in $\mathcal{M}_1$ (Lines 3-6), the dissimilarity population $|U_{1}|$ is uniformly divided into disjoint groups of dissimilarity users $U_{t,1}$ at each timestamp, i.e., $|U_{t,1}|=\lfloor N/(2w)\rfloor$.
Next in $\mathcal{M}_2$, the remaining number of publication users $N_\text{rm}$ is calculated by removing the already used publication users in the last $w-1$ timestamps from the total number of publication users $N/2$ (Line 7). 
Then, the number of potential publication users is set as $N_{pp}=N_\text{rm}/2$ to calculate a potential publication error $err$ (Lines 8-9).
By comparing $err$ with $dis$, the publication or approximation strategy is decided then (Lines 10-17). In case of too many publications and $N_{pp}$ decays too quickly to have no available user, a threshold $u_\text{min}$ (e.g., $u_\text{min}=1$) is set (Line 10). 
Once publication is chosen, $N_{pp}$ new users will be sampled as actual publication users $U_{t,2}$ from $U_A$ to accomplish publication (Lines 11-14). 
Otherwise, $\mathbf{r}_t$ is approximated by $\mathbf{r}_{t-1}$, without using $N_{pp}$ users (Line 16). Finally, both the used dissimilarity users and publication users (may be \emph{null}) at timestamp $t-w+1$, which is falling outside of the next active window, are recycled as available users $U_A$ (Line 19). The recycling process ensures each user can contribute again after $w$ timestamps while guaranteeing no users participate more than once. Detail description of \textsf{LPD} can be referred to Appendix~\ref{apped: UD example}.

\subsubsection{\textbf{LDP Population Absorption Method (\textsf{LPA})}}\label{subsub: LPA}

Algorithm~\ref{Alg UA} presents the details of \textsf{LPA}. The private dissimilarity calculation process of $\mathcal{M}_{t,1}$ in \textsf{LPA} is the same as that in \textsf{LPD}. In $\mathcal{M}_{t,2}$, the basic idea is to uniformly allocated users at all timestamps then the unused publication users is absorbed at the timestamps where publication is chosen. Once a publication occurs at time $l$, the same number of users must be skipped from the succeeding timestamps to ensure available users within the active sliding window. So, the number of timestamps to be nullified $t_N$ is first calculated based on the number of publication users at timestamp $l$, and thus skipped with approximation (Lines 4-6). After that, based on the timestamps can be absorbed, the number of potential publication users $N_{pp}$ is calculated at each time $t$, which can further derive the potential publication error $err$ (Lines 8-10). By comparing $err$ with $dis$, $\mathcal{M}_{t,2}$ decides whether to freshly publish with the potential publication users (Lines 11-15) or continues to approximate with the last release (Lines 16-18). Similarly, both the used dissimilarity users and publication users at timestamp $t-w+1$ are finally recycled as available users $U_A$ (Lines 20-22). Detail description of \textsf{LPA} can be referred to Appendix~\ref{apped: UA example}.
%the number of users will exceed the total users as the active window slides over time.

%\vspace{-2mm}
\begin{algorithm}[htbp]\footnotesize
	\SetAlgoLined
	\caption{LDP Population Absorption (\textsf{LPA})}\label{Alg UA}
	\KwIn{Total population $U$ of size $N=|U|$, privacy budget $\epsilon$, window size $w$}
	\KwOut{Released statistics $R_t=(\mathbf{r}_1,\mathbf{r}_2,\ldots,\mathbf{r}_t, \ldots)$}
	\BlankLine
	Initialize available userset $U_A=U$, and $\mathbf{r}_0=\langle 0,\ldots, 0 \rangle$, last publication timestamp $l=0$, and $\epsilon_{l,2}=0$;\\
	\For{each time $t$}{
		%\\[2pt]
		\tcp{Sub Mechanism $\mathcal{M}_{t,1}$:}
		Same as Lines 3-6 in \textbf{Algorithm~\ref{Alg UD}} \\[5pt]
		%Sample a userset $U_{t,1}$ from $U_A$ with the size of $\lfloor N/2w\rfloor$, $U_A=U_A-U_{t,1}$;\\
		%Request $\overline{D}_{t,1}\leftarrow$ Request-User-Response($U_{t,1}$, $t$, $\epsilon$);\\
		%Estimate $\overline{c}_{t,1}\leftarrow \frac{n}{\lfloor n/g\rfloor}\cdot\text{FO}(\overline{D}_{t,1},\epsilon)$ ;\\
		%Identify last update $r_l$ at time $l$;\\
		%Calculate $dis=(\overline{c}_{t,1}-r_{t-1})^2-\text{Var}[\overline{c}_{t,1}]$ ;%\\[10pt]
		
		\tcp{Sub Mechanism $\mathcal{M}_{t,2}$:}
		Calculate timestamps to be nullified $t_N= \frac{|U_{l,2}|}{\lfloor N/(2w) \rfloor} -1$;\\
		%Set $t\_nullify=\lceil \frac{N_{l,2}}{|D_t|/(2w)} \rceil-1$ and $t\_absorb=t-(l+t\_nullify)$ ;\\
		\eIf{$t-l \leq t_N$}
		{
			%\tcp{To Nullify Current Timestamp}
			\Return $\mathbf{r}_t=\mathbf{r}_{t-1}$;}
		{
			%\tcp{To Absorb Unused Users}
			Calculate timestamps can be absorbed $t_A=t-(l+t_N)$;\\
			%Calculate remaining population size $N_{rm}=\lfloor N/(2w) \rfloor \cdot $min$(t_A,w)$;\\
			Set number of potential publication users $N_{pp}=\lfloor N/(2w) \rfloor \cdot $min$(t_A,w)$;\\
			%\textcolor{blue}{Set $|U_{t,2}|=N_{rm}-|U_{t,1}|???$;}\\
			Calculate potential publication error $err$ by Eq.~(\ref{var}); \\
			\eIf {$dis > err$}
			{\tcp{Publication Strategy}
				Sample a userset $U_{t,2}$ from $U_A$ with the size of $|U_{t,2}|=N_{pp}$, $U_A=U_A \setminus U_{t,2}$;\\
				%$\overline{D}_{t,1}\leftarrow$ Request users in $U_{t,2}$ to perform GRR with $\epsilon$;\\
				$\overline{D}_{t,1}\leftarrow$ Users in $U_{t,2}$ report via an FO with budget $\epsilon$;\\
				%$\overline{D}_{t,2}\leftarrow$ Call GRR($U_{t,2}$, $t$, $\epsilon$) in \textbf{Algorithm}~\ref{alg: Request-User-Response};\\
				Estimate $\overline{\mathbf{c}}_{t,2}\leftarrow \text{FO}(\overline{D}_{t,2},\epsilon)$; \\
				%\Return $\mathbf{r}_t=\dfrac{|U_{t,1}|\cdot  \overline{\mathbf{c}}_{t,1}+|U_{t,2}|\cdot \overline{\mathbf{c}}_{t,2}}{|U_{t,1}|+|U_{t,2}|}$.}
				\Return $\mathbf{r}_t=\overline{\mathbf{c}}_{t,2}$, set $l=t$;}
			{\tcp{Approximation Strategy} \Return $\mathbf{r}_t=\mathbf{r}_{t-1}$.}
		}
		\If{$t\geq w$}
		{	\tcp{Recycling Users}	
			$U_{A}=U_A \cup U_{t-w+1,1} \cup U_{t-w+1,2}$;}
	}
\end{algorithm}
%\vspace{-3mm}

\begin{comment}

\textbf{Privacy Analysis}. %\label{sec: budget-division privacy analysis}
We claim that both \textsf{LPD} and \textsf{LPA} satisfy $w$-event LDP because each user reports to the server at most once in a time window consisting of $w$ timestamps and the data in each report goes through an FO with $\epsilon$ as privacy budget. 
\begin{theorem}\label{theorem: privacy analysis}
	\textsf{LPD} and \textsf{LPA} satisfies $w$-event LDP for each user.
\end{theorem}\vspace{-2mm}
\begin{proof}\vspace{-2mm}
	See Appendix~\ref{appendice: privacy proof UD/UA}.
\end{proof}

\textbf{Utility Analysis}.
Similarly, in \textsf{LPA},  
\begin{align}
	\sum\limits_{i = 1}^m\text{Var}_\textsf{UA}[\mathbf{r}_{p_i}]=&m\cdot V(\epsilon,\frac{w+m}{4\cdot w \cdot m}\cdot N)
\end{align}
which is smaller than $m\cdot V(\frac{w+m}{4\cdot w \cdot m}\cdot\epsilon, N)$ in Eq.~(\ref{eq: BAerror}) of the budget absorption method \textsf{LBA}, given the same assumptions.

\textbf{Communication Analysis}.
In \textsf{LPA}, all $m$ publications in a window need $m \cdot (\frac{w+m}{4\cdot w \cdot m}\cdot N+N/(2w))$ users to communicate and the rest timestamps need $(w-m)\cdot N/(2w)$ users. Therefore, the average CFPU is $\frac{1}{w\cdot N}[m \cdot (\frac{w+m}{4\cdot w \cdot m}\cdot N+N/(2w))+(w-m)\cdot N/(2w)]=\frac{1}{2w}+\frac{w+m}{4w^2}$.

\end{comment}

%\begin{comment}

\subsection{Analysis}
\subsubsection{Privacy Analysis}\label{sec: budget-division privacy analysis}
\textsf{LPD} and \textsf{LPA} satisfy $w$-event LDP because each user reports to the server at most once in a time window of size $w$ and each report goes through an FO with $\epsilon$-LDP. 
\begin{theorem}\label{theorem: privacy analysis}
\textsf{LPD} and \textsf{LPA} satisfies $w$-event LDP for each user.
\end{theorem}\vspace{-2mm}
\begin{proof}\vspace{-2mm}
See Appendix~\ref{appendice: privacy proof UD/UA}.
\end{proof}

\subsubsection{Utility Analysis}
With the same assumptions, similar MSE expression can be obtained as Eq.~(\ref{eq: BD/BA-MSE}) in Section~\ref{sec: budget-division analysis}.
Then, under the population division framework, in \textsf{LPD}, since the population is distributed to the $m$ publications in an exponentially decreasing way, the population alloction sequence of $N_{t,2}$ is then $N/4, N/8, \ldots, N/2^{m+1}$. There is
\begin{align}
\sum_{i = 1}^m\text{Var}_\textsf{LPD}[\mathbf{r}_{p_i}]=\sum_{i = 1}^m V(\epsilon,N/2^{i+1}) %<m \cdot V(\epsilon,N/2^{m+1})
\end{align}
Therefore, the error of \textsf{LPD} would still increase with $m$. However, according to Lemma~\ref{lemma: mse comparison}, we can conclude that $V(\epsilon,N/2^{m+1})<V(\epsilon/2^{m+1},N)$. That is to say, \textsf{LPD} can achieve less error than \textsf{LBD}.
Similarly, in \textsf{LPA},  
\begin{align}
\sum\limits_{i = 1}^m\text{Var}_\textsf{UA}[\mathbf{r}_{p_i}]=&m\cdot V(\epsilon,\frac{w+m}{4\cdot w \cdot m}\cdot N)
\end{align}
which is smaller than $m\cdot V(\frac{w+m}{4\cdot w \cdot m}\cdot\epsilon, N)$ in Eq.~(\ref{eq: BAerror}) of the budget absorption method \textsf{LBA}, given the same assumptions.
\subsubsection{Communication Analysis}\label{subsub: communication population}
In \textsf{LPD}, all $m$ publications in a window need $\sum\limits_{i = 1}^m (N/2^{i+1}+N/(2w))=(\frac{1-(1/2)^m}{2}+\frac{m}{2w})\cdot N$ users to communicate and the rest $w-m$ approximations need $\frac{w-m}{2w}\cdot N$ users. Therefore, the average CFPU is $\frac{1}{w\cdot N}[(\frac{1-(1/2)^m}{2}+\frac{m}{2w})\cdot N+\frac{w-m}{2w}\cdot N]=\frac{1}{w}-\frac{1}{w\cdot 2^{m+1}}$.

In \textsf{LPA}, all $m$ publications in a window need $m \cdot (\frac{w+m}{4\cdot w \cdot m}\cdot N+N/(2w))$ users to communicate and the rest timestamps need $(w-m)\cdot N/(2w)$ users. Therefore, the average CFPU is $\frac{1}{w\cdot N}[m \cdot (\frac{w+m}{4\cdot w \cdot m}\cdot N+N/(2w))+(w-m)\cdot N/(2w)]=\frac{1}{2w}+\frac{w+m}{4w^2}$.

\subsection{Discussion}
In this subsection, we briefly discuss the differences of our methods from existing ones, open problems, and future directions.

Remark 1: LDP methods \textsf{LBD/LBA} and \textsf{LPD/LPA} proposed above are inspired from \textsf{BD/BA} in the centralized setting of DP. But they are different in many aspects, including but not limited to, the information that the server can access to and operations it can perform, the perturbation mechanisms, the measurement of error in perturbation and approximation, and the population division framework instead of budget division. 

Remark 2: The \textsf{LPD} and \textsf{LPA} methods proposed above can be applied to a large spectrum of IoTs scenarios that massive reliable devices persistently monitor the environment or events, such as smart metering systems, security and video cameras. 
Note that, in mobile scenarios, the number of joining devices may be time-varying, e.g., new users may join in and churn randomly, which may make this framework complicated. %However, these schemes still work, as long as most devices participates, will be observed continuously for a period of time, i.e., at least longer than $w$ timestamps.
%However, as long as it can guarantee the sampling ratio of the whole population satisfying the algorithms, the mechanisms are still valid. Particularly, in above mechanisms, as long as any user once participates in the mechanism, will be observed continuously for a period of time, i.e., at least longer than $w$ timestamps.

Remark 3: Besides \textsf{LPD} and \textsf{LPA}, the population division-based LDP framework can be easily applied and extended to other state-of-the-art DP methods for streams (including user-level DP), such as FAST~\cite{Fan2014AnAA}, PeGaSus~\cite{Chen2017PeGaSusDD} and RescueDP~\cite{Wang2016RescueDPRS}, which may achieve better utility but the techniques in these methods often require complicated parameters tuning with extra effort. %Hence, we briefly discussed their potential adoption of the population division framework in Appendix~\ref{appendix: extension algorithms}.

\section{Performance Evaluation}\label{sec: experiment}
In this section, we conducted extensive experiments to evaluate the performance of our proposed algorithms.

%In this section, we implemented extensive experiments on both synthetic and real-world datasets to show two main results. One is that the proposed population dividing framework Polo can overall outperforms the budget dividing algorithms, especially in terms of utility and communication cost. The second is that among the several cases of Polo framework, \textsf{UD} and \textsf{UA} achieve the best performances in common, while the uniformly strategy \textsf{UU} achieves the best performance when statistics fluctuate seriously and one-shot strategy \textsf{Sampling} achieves the best performance when statistics fluctuate smoothly.

\subsection{Experimental Setup}
 %The detailed experiment setup is as follows.

\subsubsection{\textbf{Synthetic Datasets}}
%We used both three synthetic datasets and three real-world datasets in the experiments.
%To reflect the influence of datasets, we used three synthetic datasets to control the user population, statistics value, and statistics fluctuation to present a comprehensive comparison between population division and budget division algorithms.

We synthesized binary streaming datasets with different sequence models. Given a probability process model $p_t=f(t)$, the length of time $T$, and user population $N$,  we first generated a probability sequence $(p_1, p_2, \ldots, p_T)$ with $T$ timestamps. Then, at each timestamp $t$, we randomly chose a portion of $p_t$ users from the total $N$ users to set their true report value $v^j_t$ as $1$, and set the rest as $0$. The following typical sequence patterns were used.

\begin{itemize}
	\item \textsf{LNS} is a linear process $p_t=p_{t-1}+\mathcal{N}(0,Q)$, where $p_0=0.05$ and $\mathcal{N}(0,Q)$ is Gaussian noise with the standard variance $\sqrt{Q}=0.0025$.
	\item \textsf{Sin} is a sequence composed by a sine curve $p_t=A sin(bt)+h$ with $A=0.05$, $b=0.01$ and $h=0.075$.
	\item \textsf{Log} is a series with the logistic model $p_t=A/(1+e^{-bt})$ where $A=0.25$ and $b=0.01$.
\end{itemize}

Without specifying, we used above models and default parameters to generate synthetic binary streams with $800$ timestamps of $200,000$ users. 
To demonstrate the varying fluctuations, we set $N$ fixed but changed the parameters $Q$ in \textsf{LNS} and $b$ in \textsf{Sin} respectively to obtain different datasets. %, which are denoted as \textsf{LNS-Low}, \textsf{LNS-Med}, and \textsf{LNS-High}. 
%To demonstrate the varying fluctuations, we set $Q=50$, $500$, and $5000$ respectively to obtain three datasets with low, median, and high fluctuations, which are denoted as \textsf{LNS-Low}, \textsf{LNS-Med}, and \textsf{LNS-High}. 
To demonstrate the varying populations, we used the probability sequences generated with the default parameters above, but performed different number of sampling processes to obtain datasets with different population $N$.

%(1) \textsf{LNS} is a series generated by the process $c_t=|c_{t-1}+\mathcal{N}(0,Q)|$ with initial $c_{1}$ and Gaussian variable $\mathcal{N}(0,Q)$. %It contains 800 timestamps and we set the number of users is 200000.
%To demonstrate the impact of data fluctuations on the performance of compared algorithms, we generated 
%
%(2) \textsf{Sinusoidal} is a sequence composed by a sine curve $c_t=A sin(bt)+h$ with $A=8000$ and $h=12000$.
%
%(3) \textsf{Logistic} is a synthetic series generated by the logistic model $c_t=A/(1+e^{-bt})$ with $A=50000$ and $b=-0.01$.
\subsubsection{\textbf{Real-world Datasets}}
%\textit{Real-world Datasets.}
To evaluate the practical performance of algorithms, the following three real-world datasets with non-binary values were also used. %, involving check-ins, trajectories, and web-click logs.
%Three real-world datasets, involving check-ins, taxi trajectories, and AD click logs, were used for simulating $d$-dimensional (with $d\geq2$) data releases.

\begin{itemize}
	\item\textsf{Taxi}\footnote{\url{https://www.microsoft.com/en-us/research/publication/t-drive-trajectory-data-sample/}} contains the real-time trajectories of $10,357$ taxis during the period of Feb. 2 to Feb. 8, 2008 within Beijing. We obtained $N=10,357$ data streams for each taxi by extracting $T=886$ timestamps (each at $10$-minute level) and partitioning area into $5$ grids, i.e., $d=5$.
	\item\textsf{Foursquare}\footnote{\url{https://sites.google.com/site/yangdingqi/home/foursquare-dataset}} includes $33,278,683$ check-ins of Foursquare users from Apr. 2012 to Sep. 2013, where each record includes time, place and user ID. We transformed it into $N=265,149$ data streams with the length of $T=447$ timestamps, each records a user's check-in sequence over $d=77$ countries.
	\item \textsf{Taobao}\footnote{\url{https://tianchi.aliyun.com/dataset/dataDetail?dataId=56}} contains the AD click logs of 1.14 million customers at Taobao.com. For simplicity, we first grouped the AD commodities into $d=117$ categorizes. Then, we extracted all the $N=1,023,154$ customers' click data streams, where each item corresponds to the categorize of the user's last click during each ten minutes in three consecutive days, i.e., $T=432$ timestamps.
\end{itemize}

\subsubsection{\textbf{Compared Algorithms}}
We compared the following algorithms. All are implemented using Matlab 2020.

%\begin{table}[htbp]\small
%	\centering
%	\caption{All proposed and compared algorithms}\label{tab: algs}
\vspace{3pt}\begin{small} \centering
\begin{tabular}{|l|c|c|c|c|} 
	\hline
	\multicolumn{1}{|c|}{\multirow{2}{*}{\textbf{\begin{tabular}[c]{@{}c@{}}Scheme \\ Dimensions\end{tabular}}}} & \multicolumn{2}{c|}{\textbf{\begin{tabular}[c]{@{}c@{}}Non-adaptive\\ allocation\end{tabular}}} & \multicolumn{2}{c|}{\textbf{\begin{tabular}[c]{@{}c@{}}Adaptive\\ allocation\end{tabular}}}                           \\ \cline{2-5} 
	\multicolumn{1}{|c|}{}                                                                                       & \multicolumn{1}{l|}{Uniform}                   & \multicolumn{1}{l|}{Sampling}                  & \multicolumn{1}{l|}{Distribution}                         & \multicolumn{1}{l|}{Absorption}                           \\ \hline
	\textbf{\begin{tabular}[c]{@{}l@{}}Budget\\ division\end{tabular}}                                           & \textsf{LBU}                                            & \multirow{2}{*}{\textsf{LSP}}                           & \begin{tabular}[c]{@{}c@{}}\textsf{LBD}\\ (Algorithm~\ref{Alg BD})\end{tabular} & \begin{tabular}[c]{@{}c@{}}\textsf{LBA}\\ (Algorithm~\ref{Alg BA})\end{tabular} \\ \cline{1-2} \cline{4-5} 
	\textbf{\begin{tabular}[c]{@{}l@{}}Population\\ division\end{tabular}}                                       & \textsf{LPU}                                            &                                                & \begin{tabular}[c]{@{}c@{}}\textsf{LPD}\\ (Algorithm~\ref{Alg UD})\end{tabular} & \begin{tabular}[c]{@{}c@{}}\textsf{LPA}\\ (Algorithm~\ref{Alg UA})\end{tabular} \\ \hline
\end{tabular}
\end{small}
%\end{table}

\textsf{LBU} (Sec. \ref{subsub: LBU}) and \textsf{LPU} (Sec. \ref{subsub: LPU}) are baseline methods that uniformly divide budget and population, respectively. \textsf{LSP} (Sec. \ref{subsub: sample} and \ref{subsub: LPU}) invests the entire budget and users at sampling timestamps with fixed interval. \textsf{LBD}/ \textsf{LBA} (Algs.~\ref{Alg BD}, \ref{Alg BA} in Sec. \ref{subsub: BDBA}) and \textsf{LPD}/\textsf{LPA} (Algs.~\ref{Alg UD}, \ref{Alg UA} in Sec.~\ref{subsub: LPD} and~\ref{subsub: LPA}) adaptively allocate the budget and population via two different schemes, respectively.% The distribution scheme exponentially distributes budget at publication timestamps, while the absorption scheme uniformly assigns budget at first and absorbs those unused budget at skipped timestamps.

All experiments were conducted on a PC with an Intel Core i5-6300HQ 3.20GHz and 16GB memory.

\subsubsection{\textbf{Performance Metrics}}
We evaluated the performance of different algorithms in terms of data utility, event monitoring efficiency, and communication efficiency. The utility was measured as the \textit{mean relative error} (MRE) between the released and true statistics. %Given an FO, the LDP perturbed data in different methods has the same size in each communication, the communication cost was mainly compared by the \textit{communication frequency per user} (CFPU), i.e., the number of communications per user per timeslots.
The event monitoring efficiency was measured as the ratio that, from perturbed reports, the server successfully detects extreme events, i.e., the statistics of which are greater than a given threshold. The communication efficiency is mainly compared by counting the \textit{communication frequency per user} (CFPU).

\subsection{Overall Utility}

Fig.~\ref{fig:ErrorEpsilon} shows the release accuracy of all compared $w$-event LDP methods on all synthetic and real-world datasets, with different privacy budget $\epsilon$. These methods are categorized into budget division-based (\textsf{LBU, LBD, LBA}) and population division-based methods (\textsf{LSP, LPU, LPD, LPA}). Overall, the error of all methods decreases with $\epsilon$, which shows the tradeoff between data utility and privacy.  Besides, the population division-based methods significantly outperform budget-division ones with much smaller MRE. This is because LDP is more sensitive to the budget division than population division.
\textsf{LBD/LBA} generally shows smaller error than the straightforward method of \textsf{LBU}. This is because \textsf{LBD/LBA} can utilize temporal correlations in data streams to reduce the privacy budget consumption rate.  
The advantage of \textsf{LPD/LPA} is clearer as noise does not increase dramatically when the population is divided. Although \textsf{LSP} achieves even smaller error than \textsf{LPD/LPA}, its performance varies dramatically across datasets. %For example, it has quite large estimation error on the periodically fluctuated dataset \textsf{Sin}.
\begin{figure}[tbp]
	\centering	
	\subfigure[\textsf{LNS}, $w=20$]{
		\label{LMAEw-L} %% label for first subfigure
		\includegraphics[width=115pt,height=80pt]{./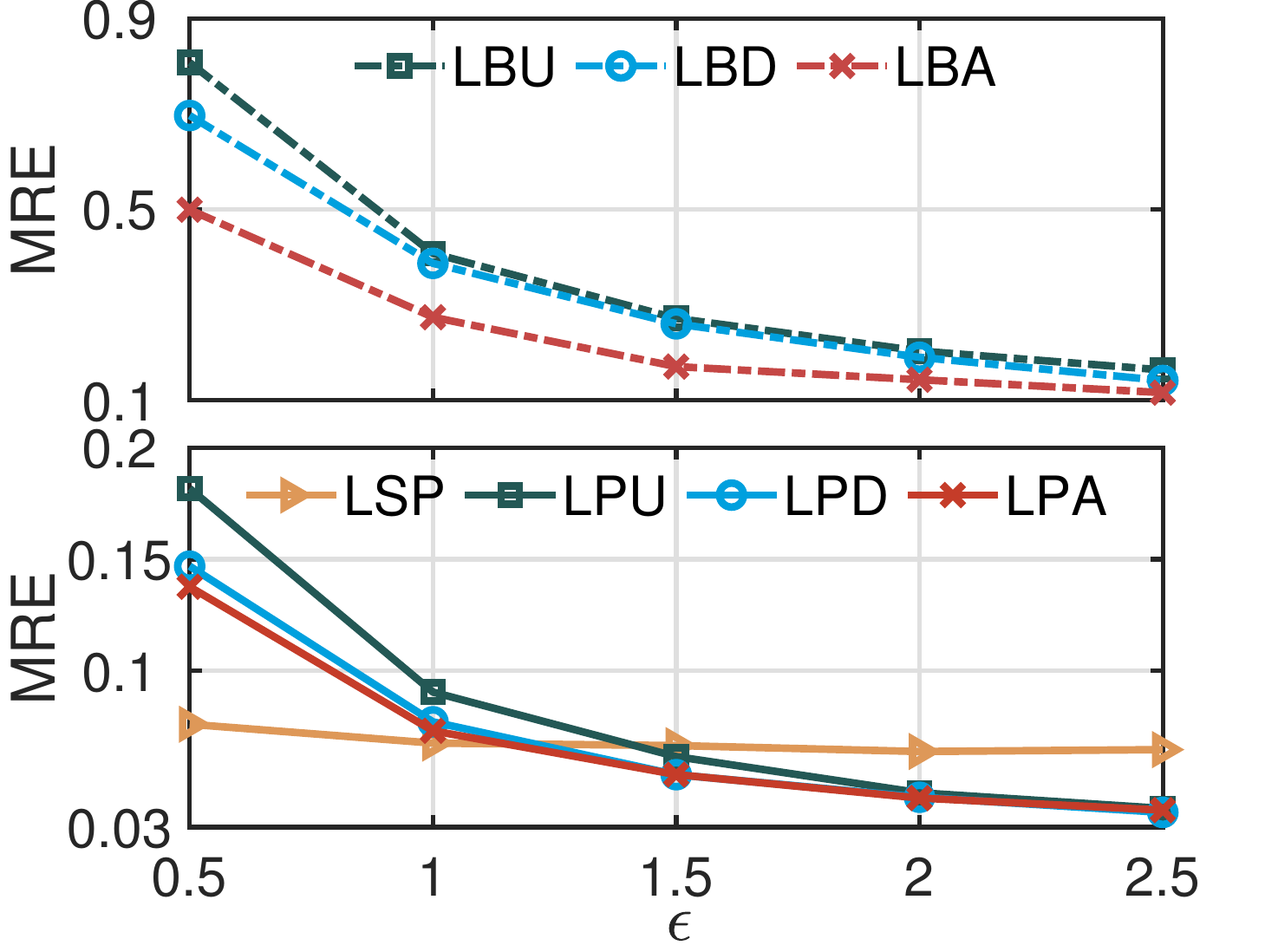}}\vspace{-1.5mm}
	\subfigure[\textsf{Sin}, $w=20$]{
		\label{LMAEw-L} %% label for first subfigure
		\includegraphics[width=115pt,height=80pt]{./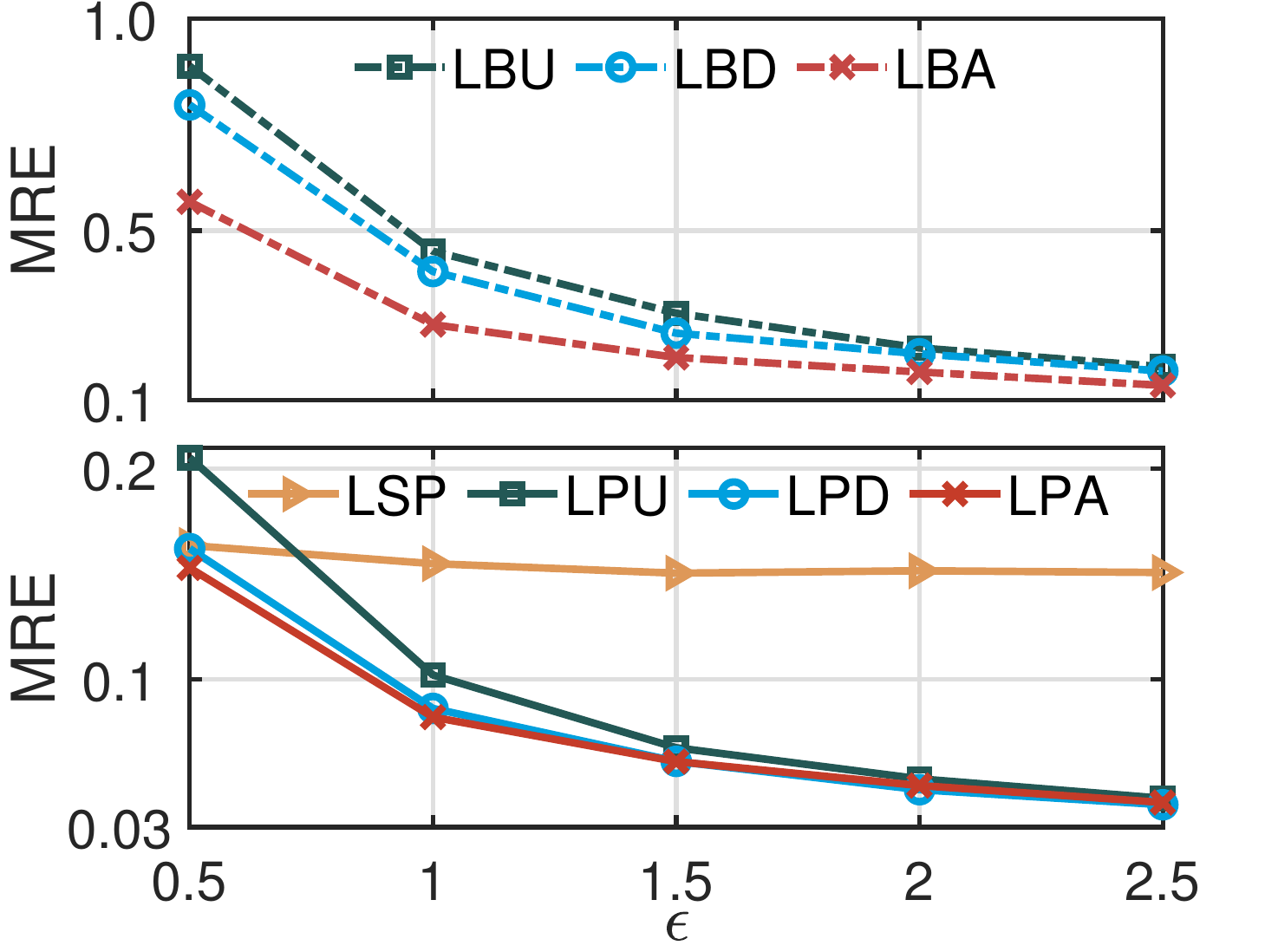}}\vspace{-1.5mm}
	\subfigure[\textsf{Log}, $w=20$]{
		\label{LMAEw-L} %% label for first subfigure
		\includegraphics[width=115pt,height=80pt]{./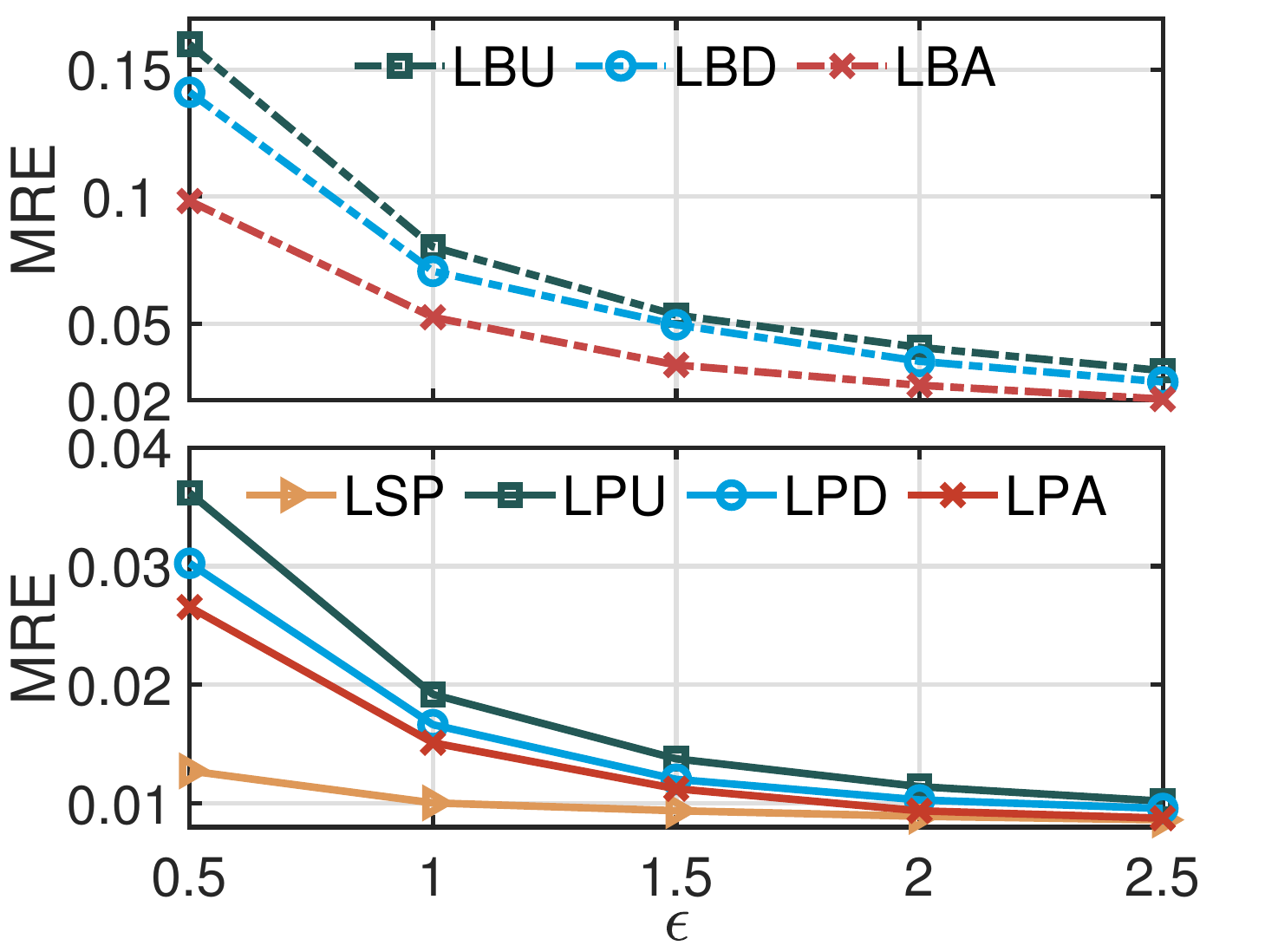}}\vspace{-1.5mm}
	\subfigure[\textsf{Taxi}, $w=20$]{
		\label{LMAEw-L} %% label for first subfigure
		\includegraphics[width=115pt,height=80pt]{./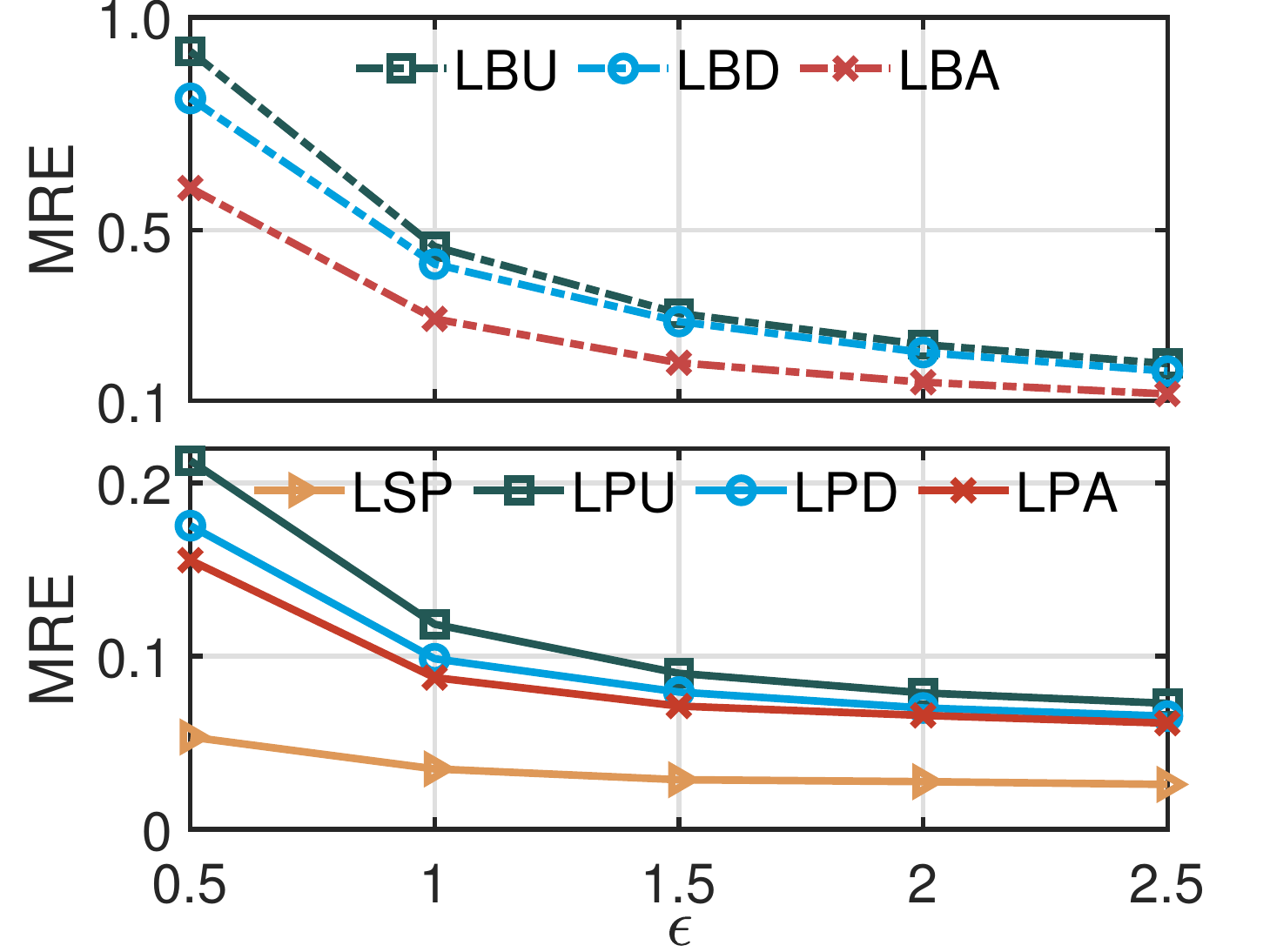}}\vspace{-1.5mm}
	\subfigure[\textsf{Foursquare}, $w=20$]{
		\label{LMAEw-L} %% label for first subfigure
		\includegraphics[width=115pt,height=80pt]{./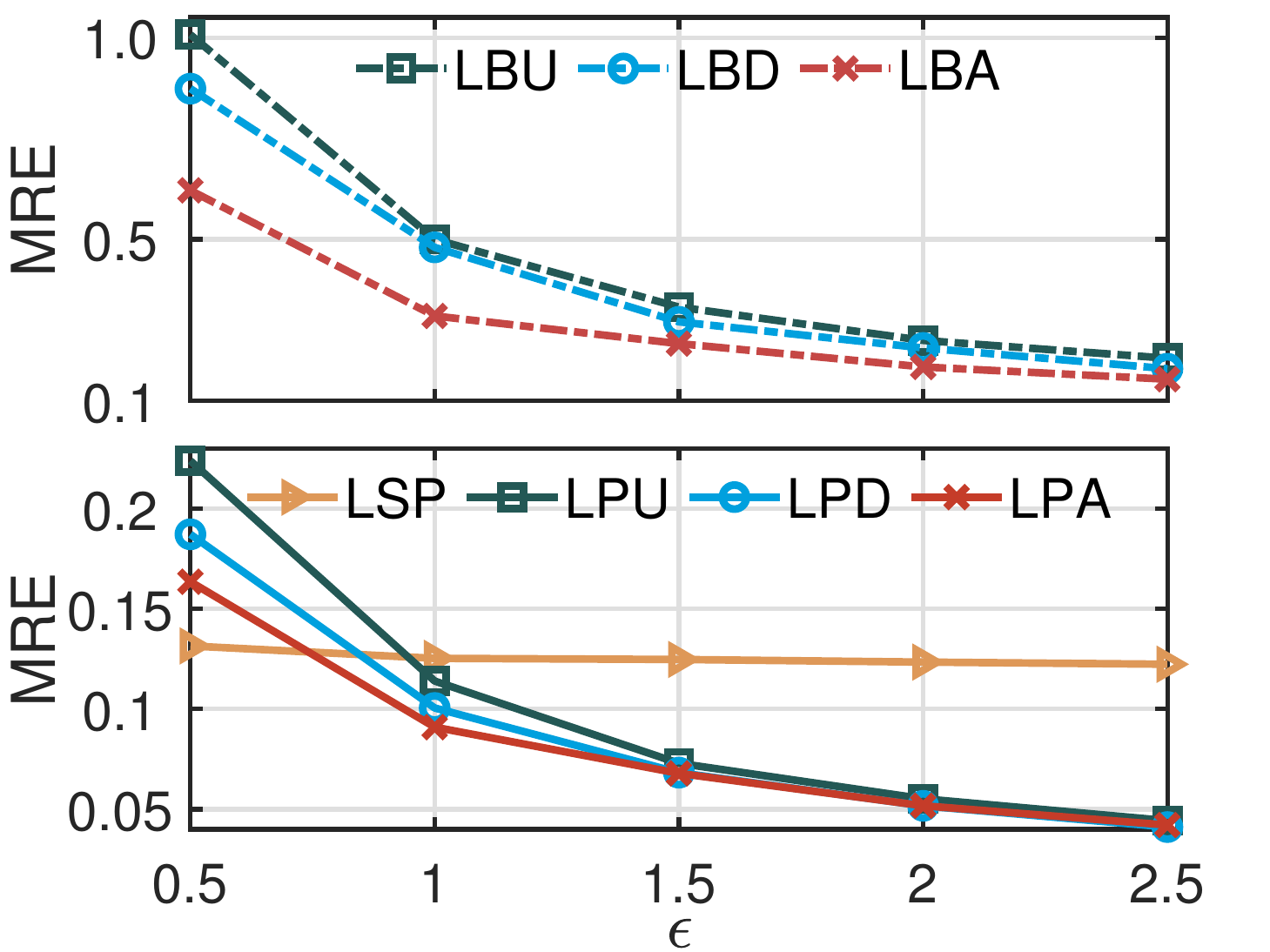}}\vspace{-1.5mm}	
	\subfigure[\textsf{Taobao}, $w=20$]{
		\label{LMAEw-L} %% label for first subfigure
		\includegraphics[width=115pt,height=80pt]{./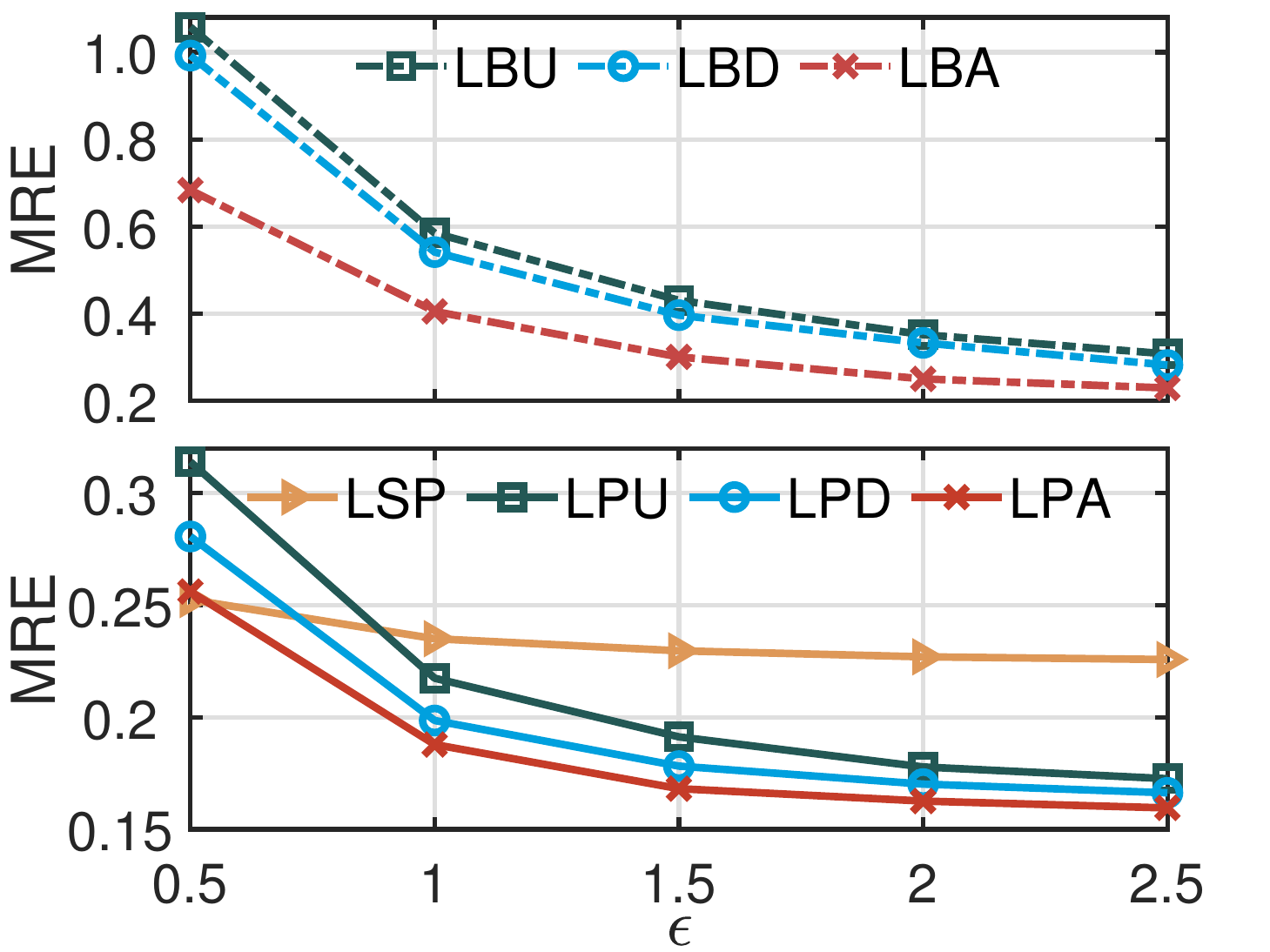}}\vspace{-1.5mm}
	\caption{Data utility with different $\epsilon$}\centering
	\label{fig:ErrorEpsilon} %% label for entire figure
	%	\vspace{-0.5cm}
\end{figure}
%\vspace{-0.3cm}
\begin{figure}[htbp]
	\centering
	\subfigure[\textsf{LNS}, $\epsilon=1$]{
		\label{LMAEw-F} %% label for first subfigure
		\includegraphics[width=115pt,height=80pt]{./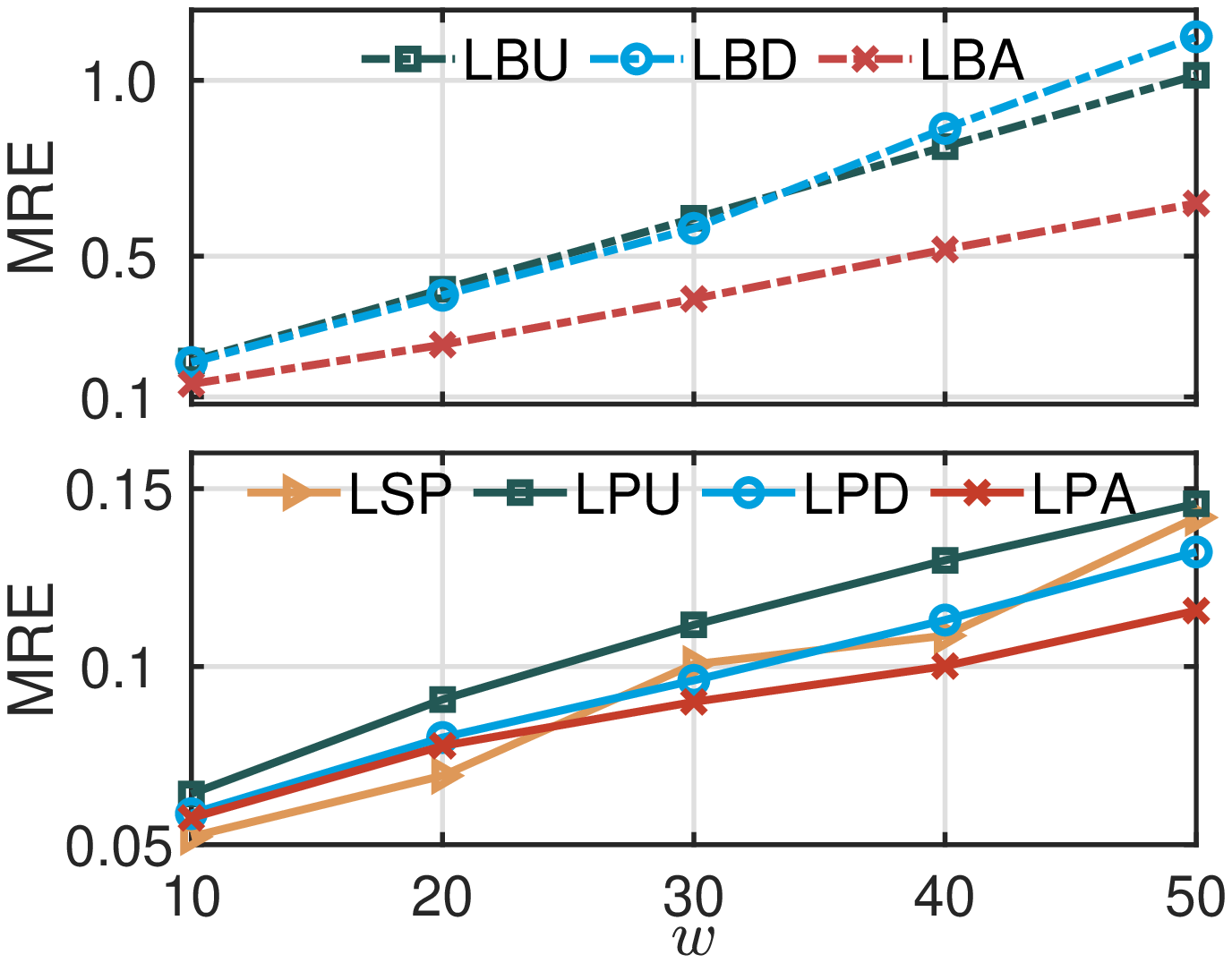}}\vspace{-1.5mm}
	\subfigure[\textsf{Sin}, $\epsilon=1$]{
		\label{LMAEw-F} %% label for first subfigure
		\includegraphics[width=115pt,height=80pt]{./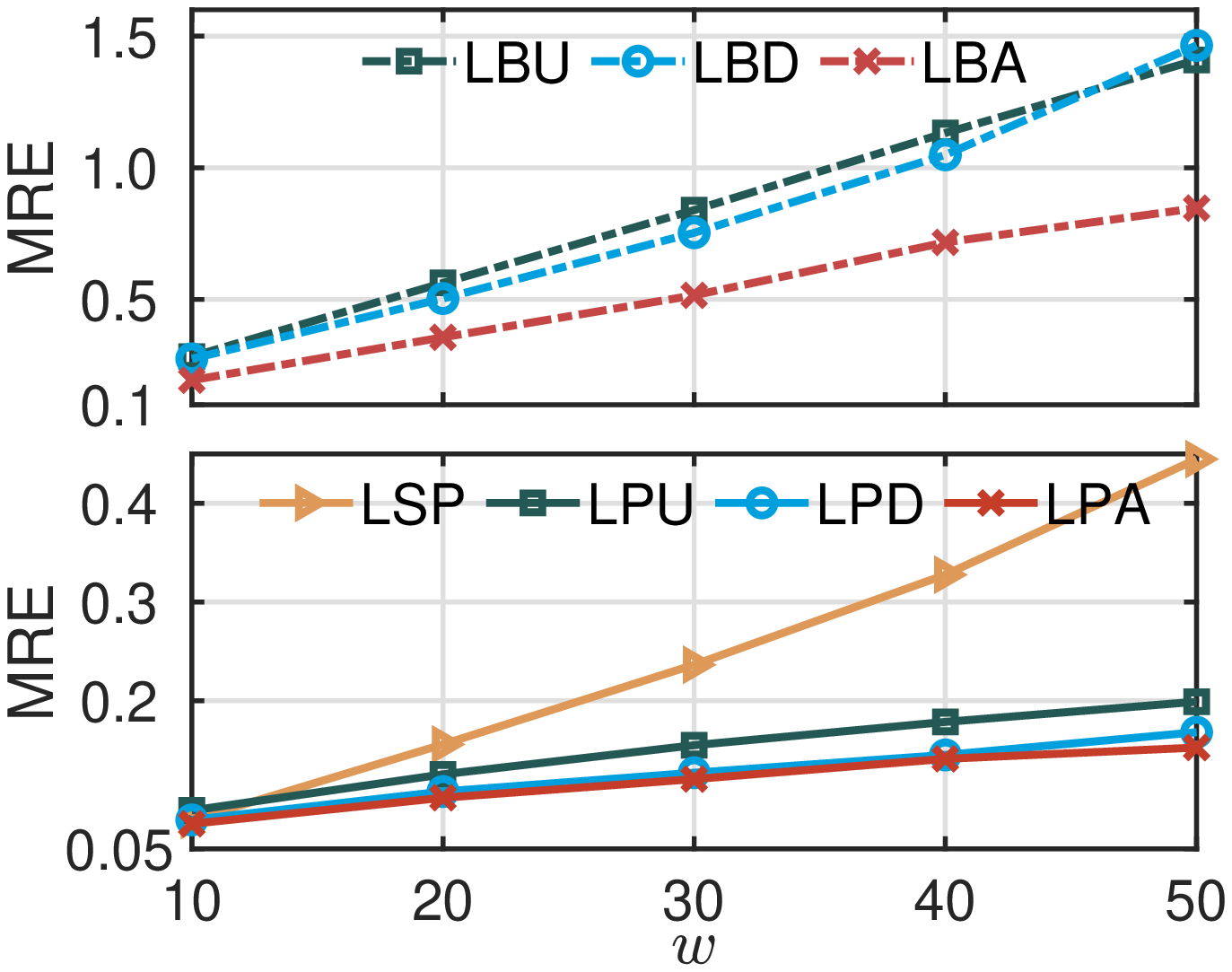}}\vspace{-1.5mm}
	\subfigure[\textsf{Log}, $\epsilon=1$]{
		\label{LMAEw-F} %% label for first subfigure
		\includegraphics[width=115pt,height=80pt]{./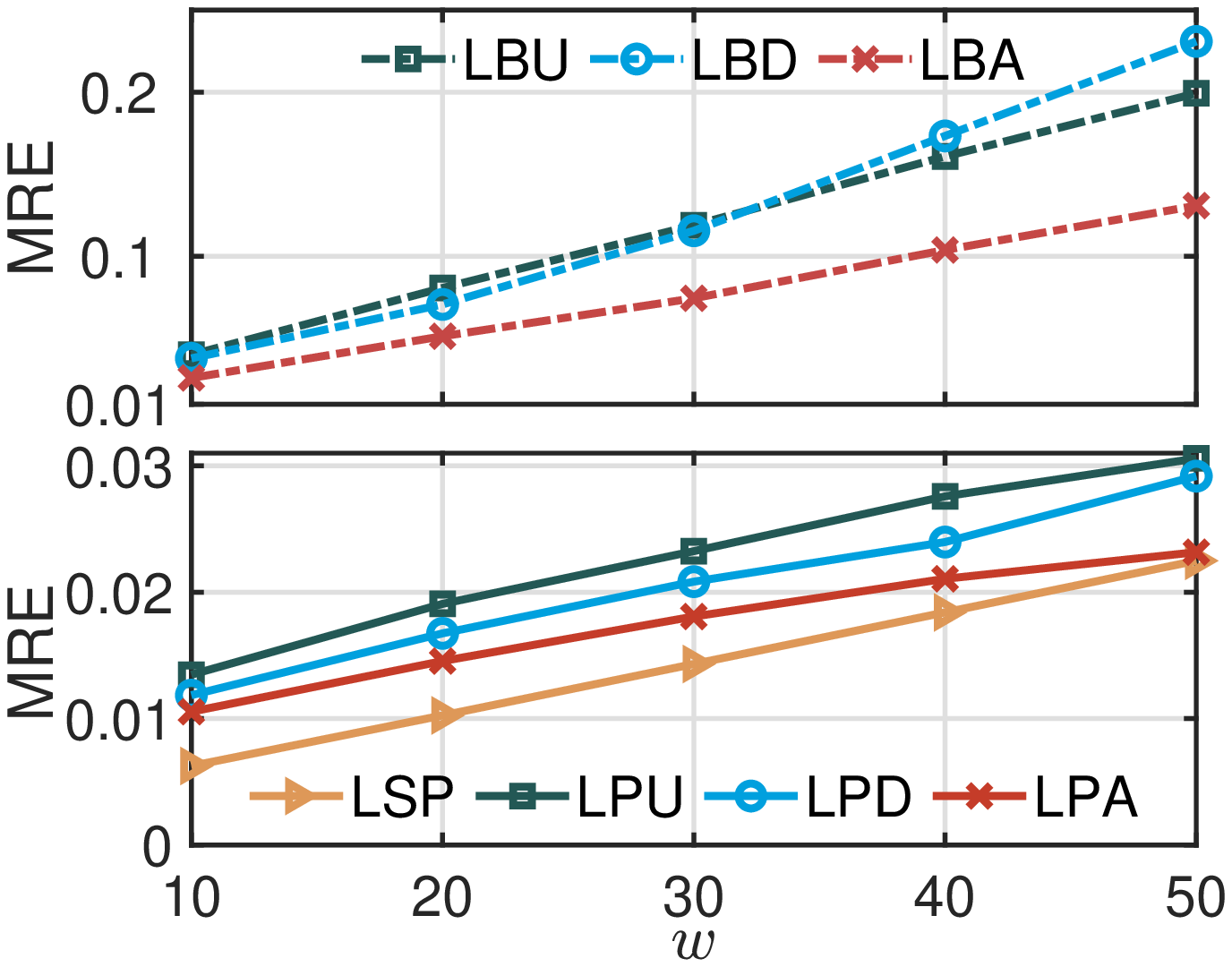}}\vspace{-1.5mm}
	\subfigure[\textsf{Taxi}, $\epsilon=1$]{
		\label{LMAEw-F} %% label for first subfigure
		\includegraphics[width=115pt,height=80pt]{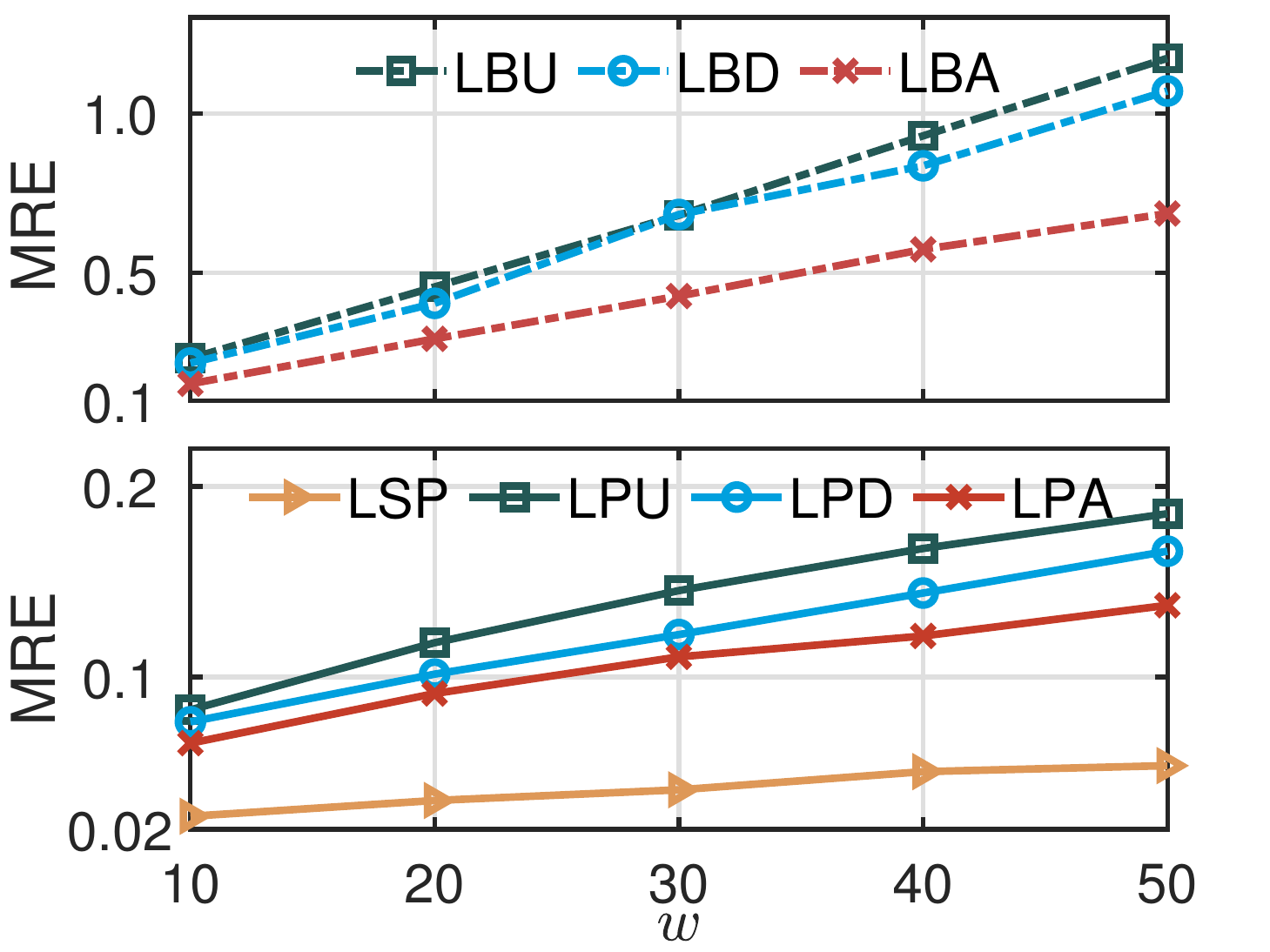}}\vspace{-1.5mm}
	\subfigure[\textsf{Foursquare}, $\epsilon=1$]{
		\label{LMAEw-F} %% label for first subfigure
		\includegraphics[width=115pt,height=80pt]{./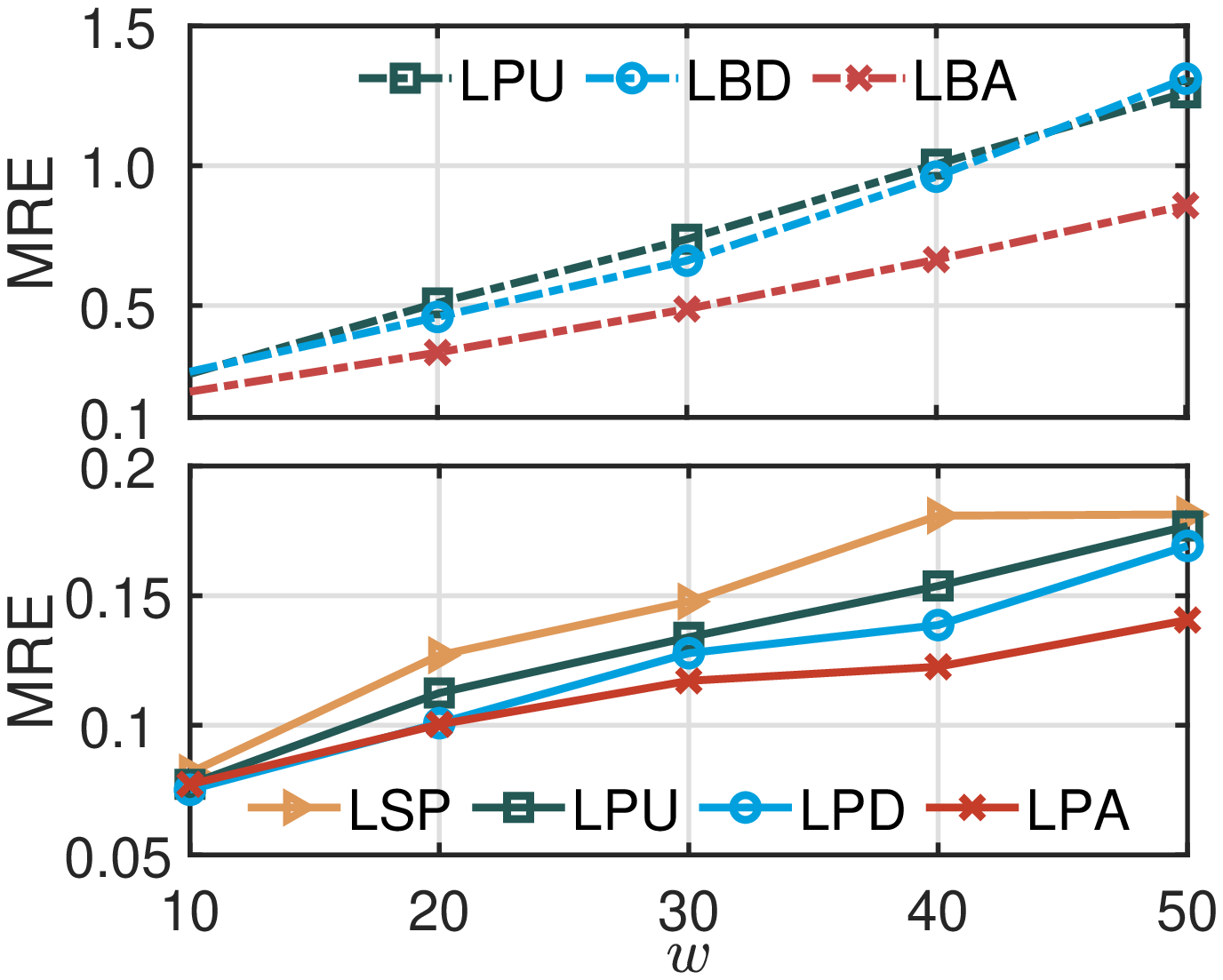}}	\vspace{-1.5mm}
	\subfigure[\textsf{Taobao}, $\epsilon=1$]{
		\label{LMAEw-F} %% label for first subfigure
		\includegraphics[width=115pt,height=75pt]{./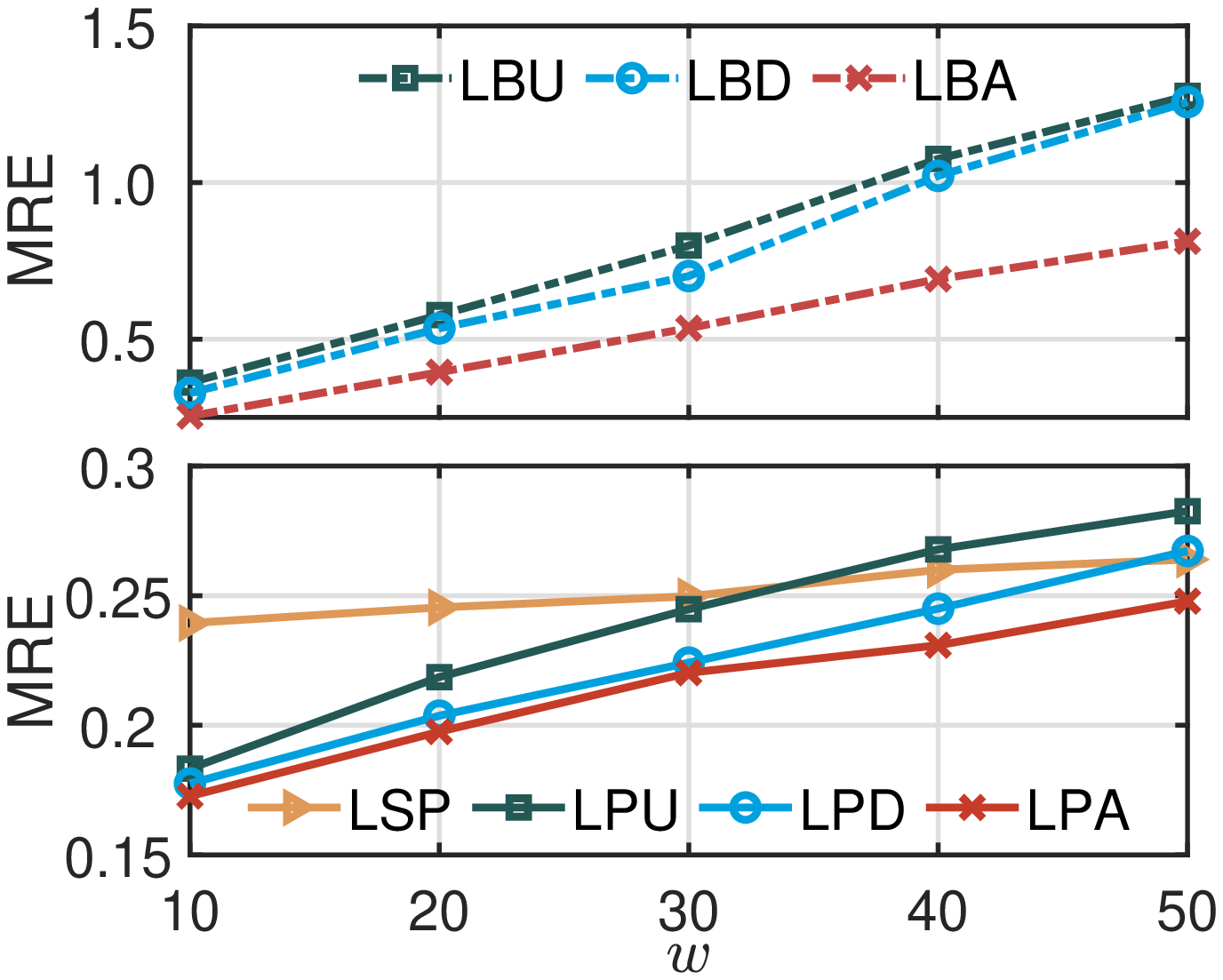}}	\vspace{-1.5mm}
	\caption{Data utility with different $w$}\centering
	\label{fig:ErrorWindow} %% label for entire figure
	%\vspace{-0.5cm}
\end{figure}	
%\vspace{-0.3cm}

Fig.~\ref{fig:ErrorWindow} shows the release accuracy of all compared $w$-event LDP methods, with different window size $w$, on all datasets. In general, the MRE of all methods increases with $w$, since less privacy budget or users will be allocated to each timestamp. For budget division methods, with the increasing $w$, \textsf{LBD} distributes budget in an exponentially decaying way and allocates very small budget for the newest timestamp in the window, thus causing quite large estimation error. For example, when $w$ is large, \textsf{LBD} may even have larger MRE than \textsf{LBU}. \textsf{LBA} to avoid this issue and well adapt to data fluctuations. For population division methods, despite the similar trends, \textsf{LPD} manages to achieve smaller MRE than baseline \textsf{LPU} while \textsf{LPA} performs even much better. Besides, as shown, with large $w$,  \textsf{LPD/LPA} gains more prominent advantages.% Similarly, with the increase of window size, \textsf{UA} generally outperforms \textsf{UD} without exponential population division.

%For both budget division and population division schemes, with the increase of privacy budget $\epsilon$ or the decrease of $w$, the error difference between data-dependent and non-dependent schemes gradually diminishes. This is because LDP noise is small compared with the true data and data dependent schemes have rather limited improvement via dynamic allocation of budget or users.

%As \textsf{BA} generally has less error than \textsf{BD}, we only compare population division schemes with \textsf{BA} in the remainder.

\subsection{Impact of Dataset Parameters}
To demonstrate the impact of dataset (population size and data fluctuations) on utility performance, we changed population $N$, and noise variance $Q$ and period parameter $b$ in the synthetic datasets \textsf{LNS} and \textsf{Sin} respectively. Note that, while varying the population size $N$, we kept the frequency fixed. 

Figs.~\ref{LMREn-L} and \ref{LMREn-S} show the MRE of compared methods on \textsf{LNS} and \textsf{Sin} with respect to different population size $N$. As shown, the MRE of all methods decreases with $N$. This is because that, enlarging $N$ while fixing the frequency leads to better estimation accuracy. % increases with $N$. %Then the estimation would be overwhelmed by noise. %Then, the estimation noise after LDP protection, decreases with the increase of $N$.
Figs.~\ref{LMREq-L} and \ref{LMREq-S} show the MRE on \textsf{LNS} with different variance $Q$ and \textsf{Sin} with different period parameter $b$, respectively. On \textsf{LNS}, the MRE of all methods increases with $Q$, which measures the fluctuation in streams. This result verifies that these methods are data-dependent and perform better on streams with few changes. We can also see that budget division methods including \textsf{LBU}, \textsf{LBD} and \textsf{LBA} have much larger error than the population division methods. Although \textsf{LSP} manages to have the smallest error when the variance is small (i.e., $\sqrt{Q}=0.001, 0.002$), it grows fast and surpasses \textsf{LPD} and \textsf{LPA} with the increase of $Q$. \textsf{LPD} and \textsf{LPA} induce much smaller error, and perform slightly worse than \textsf{LPU} when the variance is large. Note that, period parameter $b$ also represents data fluctuations and larger $b$ means larger fluctuations. Similar conclusions can be obtained from \textsf{Sin} that population division based LDP methods manage to achieve much higher utility, and \textsf{LPD} and \textsf{LPA} manage to further improve the utility on steady streams.
\begin{figure}[htbp]
	\centering
	\subfigure[Varying population $N$ (\textsf{LNS})]{
		\label{LMREn-L} %% label for first subfigure
		\includegraphics[width=115pt,height=80pt]{./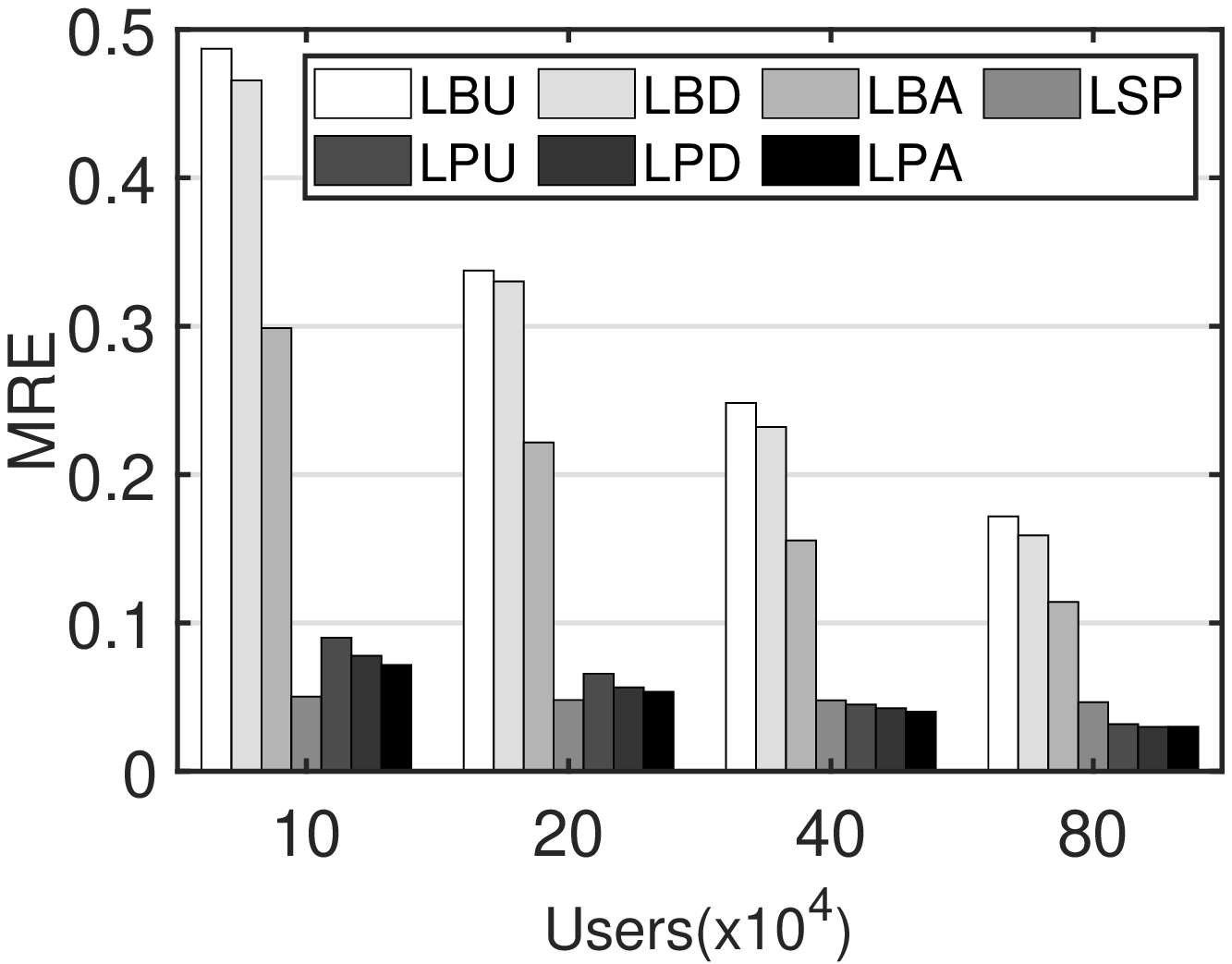}}
	\subfigure[Varying population $N$ (\textsf{Sin})]{
		\label{LMREn-S} %% label for first subfigure
		\includegraphics[width=115pt,height=80pt]{./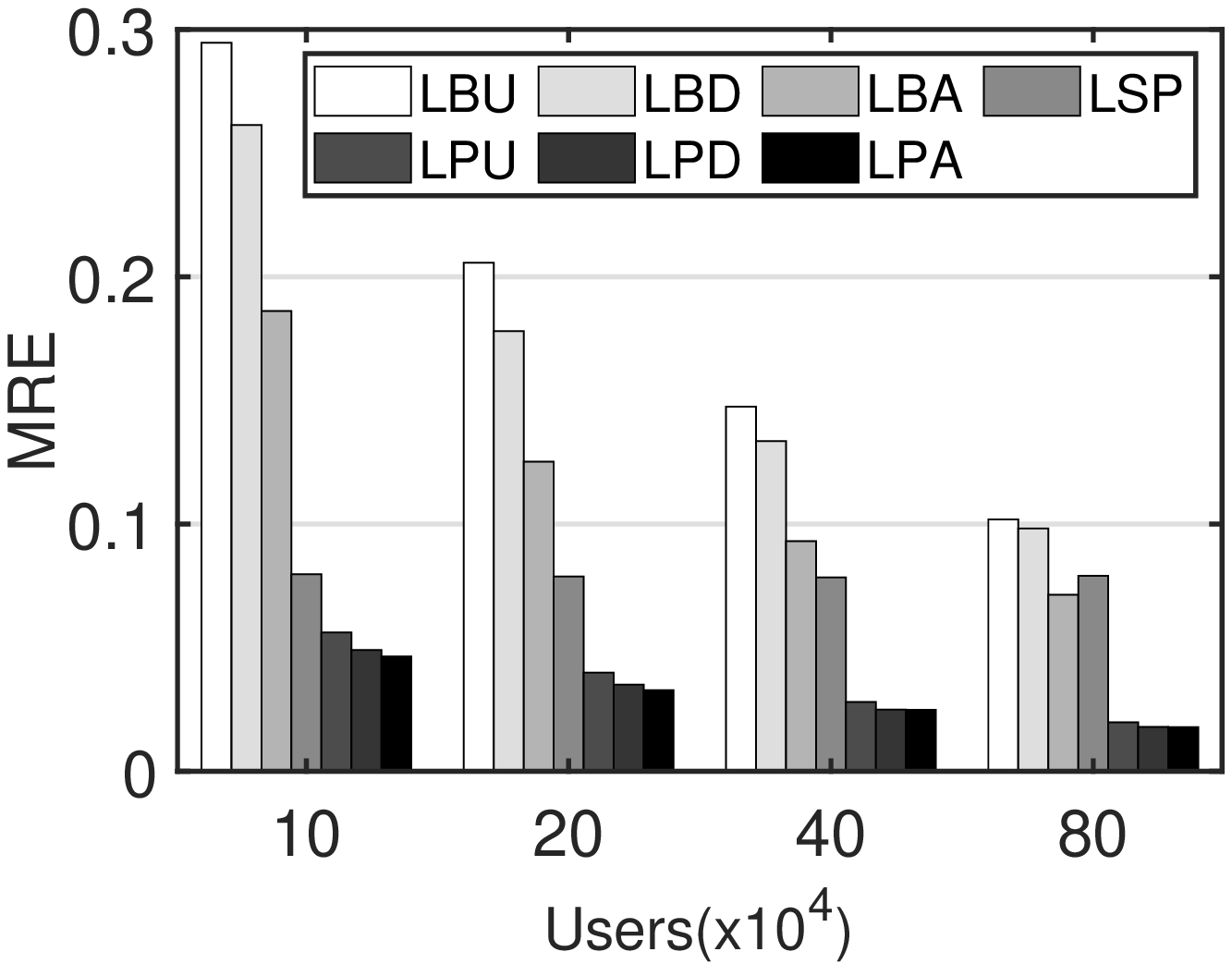}} \\[-10pt]
	\subfigure[Varying fluctuation $Q$ (\textsf{LNS})]{
		\label{LMREq-L} %% label for first subfigure
		\includegraphics[width=115pt,height=80pt]{./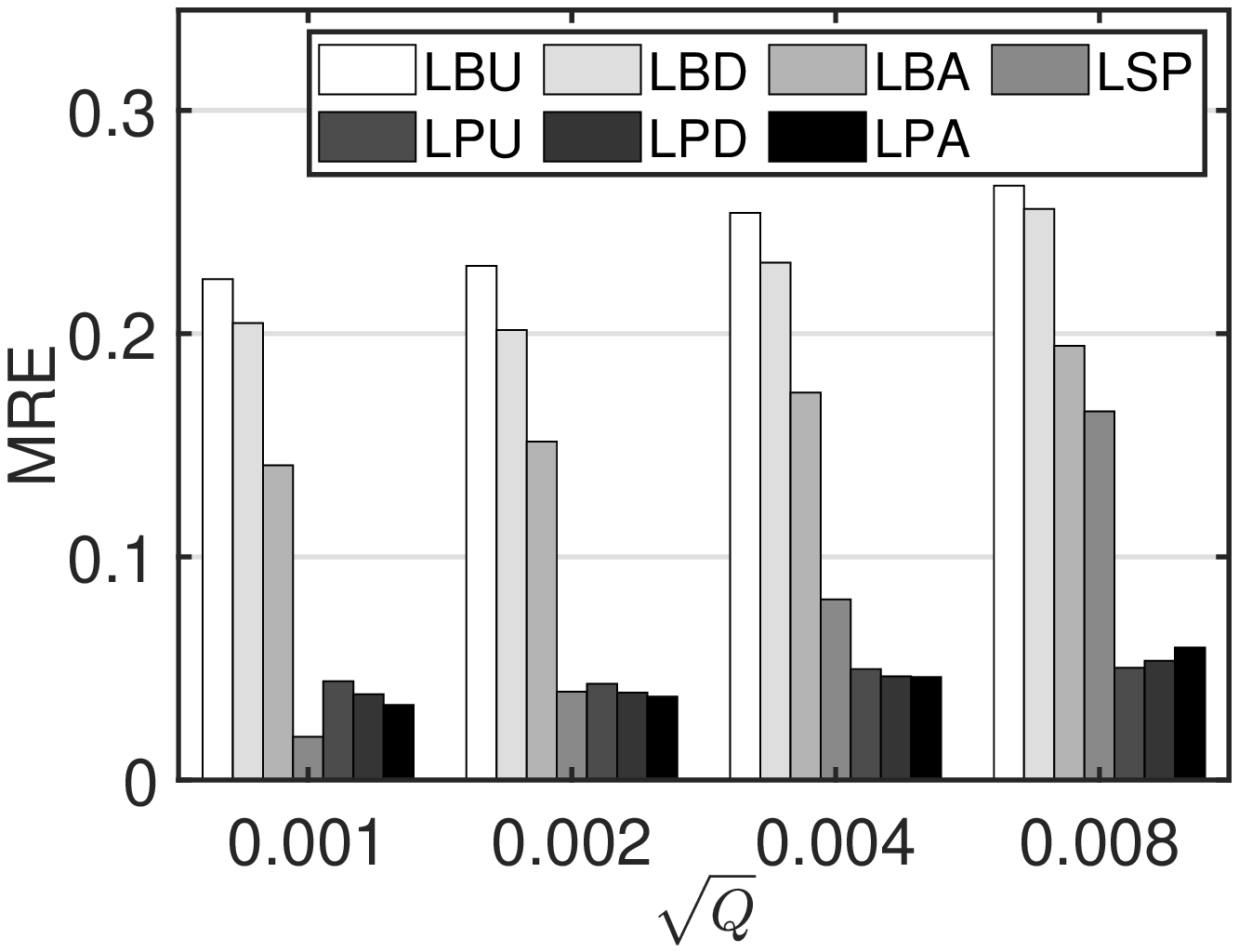}}
	\subfigure[Varying fluctuation $b$ (\textsf{Sin})]{
		\label{LMREq-S} %% label for first subfigure
		\includegraphics[width=115pt,height=80pt]{./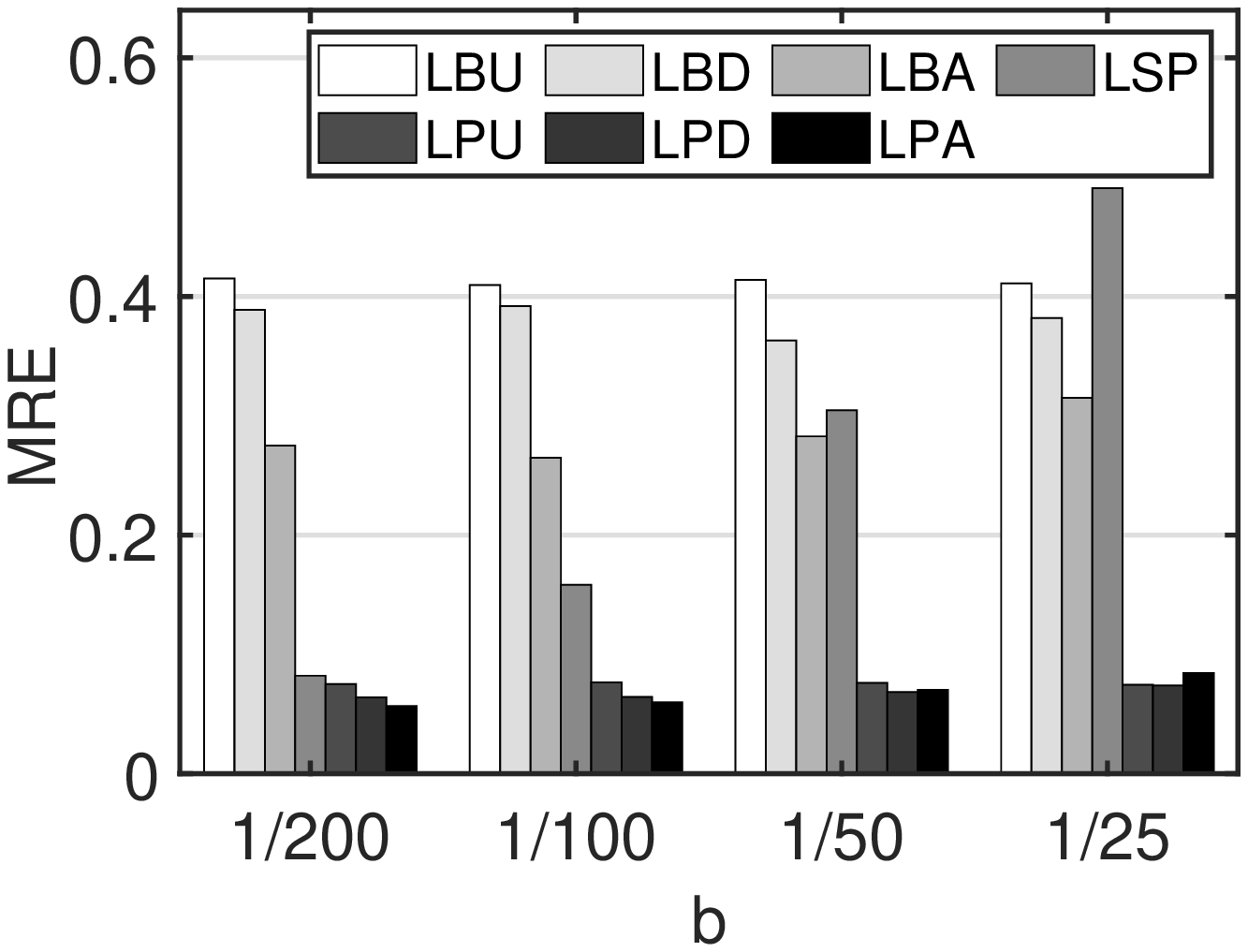}}
	\vspace{-4mm} 
	\caption{Impact of dataset parameters ($\epsilon=1$, $w=30$)} %\vspace{-8mm}
	\label{fig:ErrorwLDP} %% label for entire figure
	%\vspace{-0.4cm}
\end{figure}

\subsection{Event Monitoring}

The overall distance metric of MRE on the whole stream cannot reflect the estimate accuracy at individual timestamps. Instead, event monitoring is commonly used in data streams to detect whether the estimate at each timestamp is larger than a given threshold $\delta$. 
Fig. \ref{fig:ROC} displays the ROC curves for detecting the above-threshold points on all six dataset. On synthetic binary datasets \textsf{LNS}, \textsf{Sin}, \textsf{Log}, $\delta$ was directly set as $0.75 \times (\max(\bf{c})-\min(\bf{c}))+\min(\bf{c})$. On other three non-binary real-world datasets, we monitored the mean-value $\bm{c}_\text{mean}$ of the histogram $\bm{c}_t$ and $\delta$ was set as $0.75 \times (\max(\bf{c}_\text{mean})-\min(\bf{c}_\text{mean}))+\min(\bf{c}_\text{mean})$.
Overall, the population division methods generally perform better than the budget division method \textsf{LBA}, as they can achieve higher accuracy in above-threshold value detection. Despite varying on different datasets, \textsf{LPD} and \textsf{LPA} in general outperform the other three methods. 
Although \textsf{LSP} manages to have much smaller MRE in Figs.~\ref{fig:ErrorEpsilon} and \ref{fig:ErrorWindow}, it generally performs the worst on most datasets. This is because too many approximations are adopted in \textsf{LSP}, which hinders its efficiency in detecting real-time changes. %Note that, the distance metric of MRE cannot reflect the estimate bias on individual timestamps.

\begin{figure}[hbtp]
	%\vspace{-0.2cm}
	\centering
	\subfigure[\textsf{LNS}]{
		\label{ROC-L} %% label for first subfigure
		\includegraphics[width=110pt,height=75pt]{./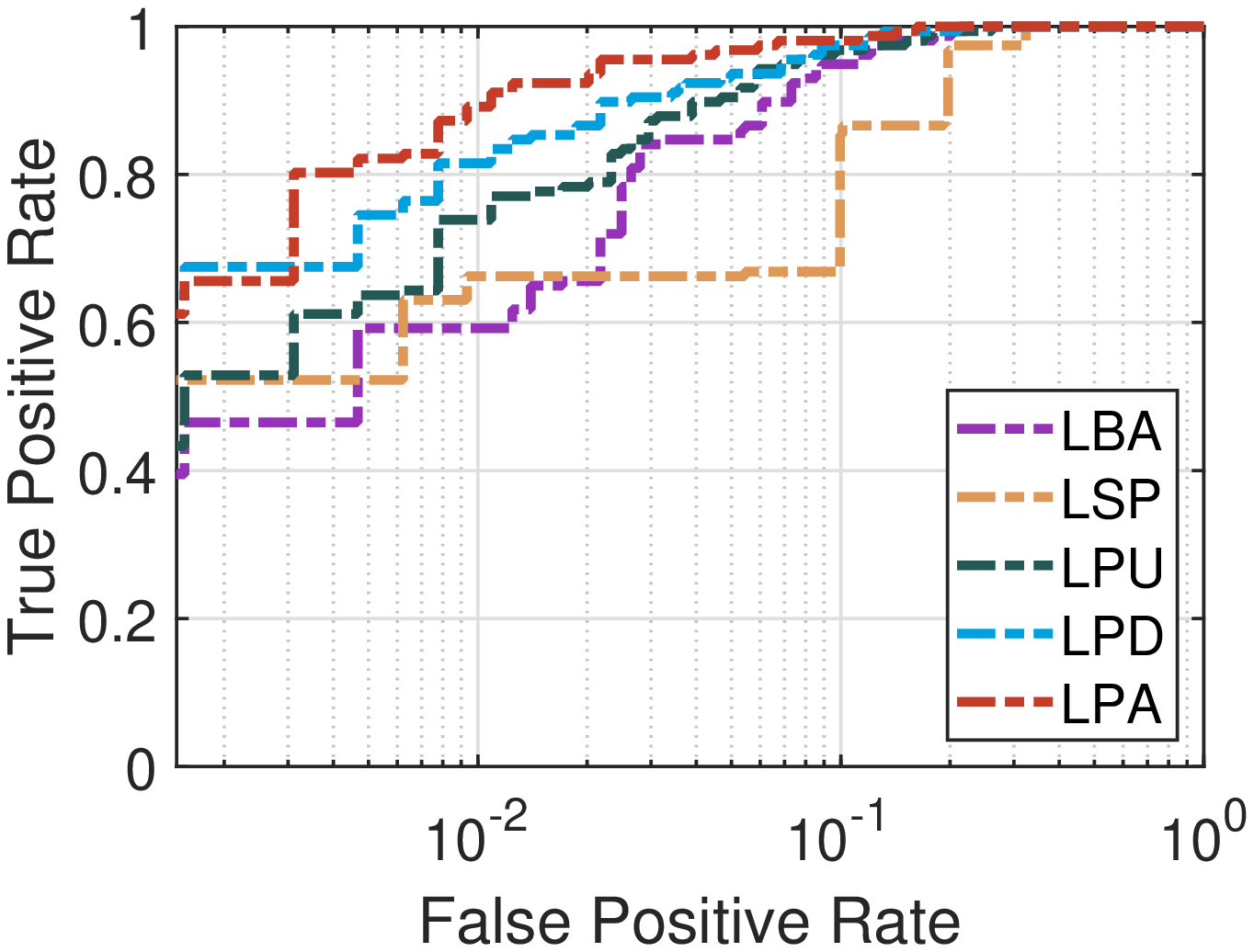}}%\vspace{-0.2cm}
	\subfigure[\textsf{Sin}]{
		\label{ROC-F} %% label for first subfigure
		\includegraphics[width=110pt,height=75pt]{./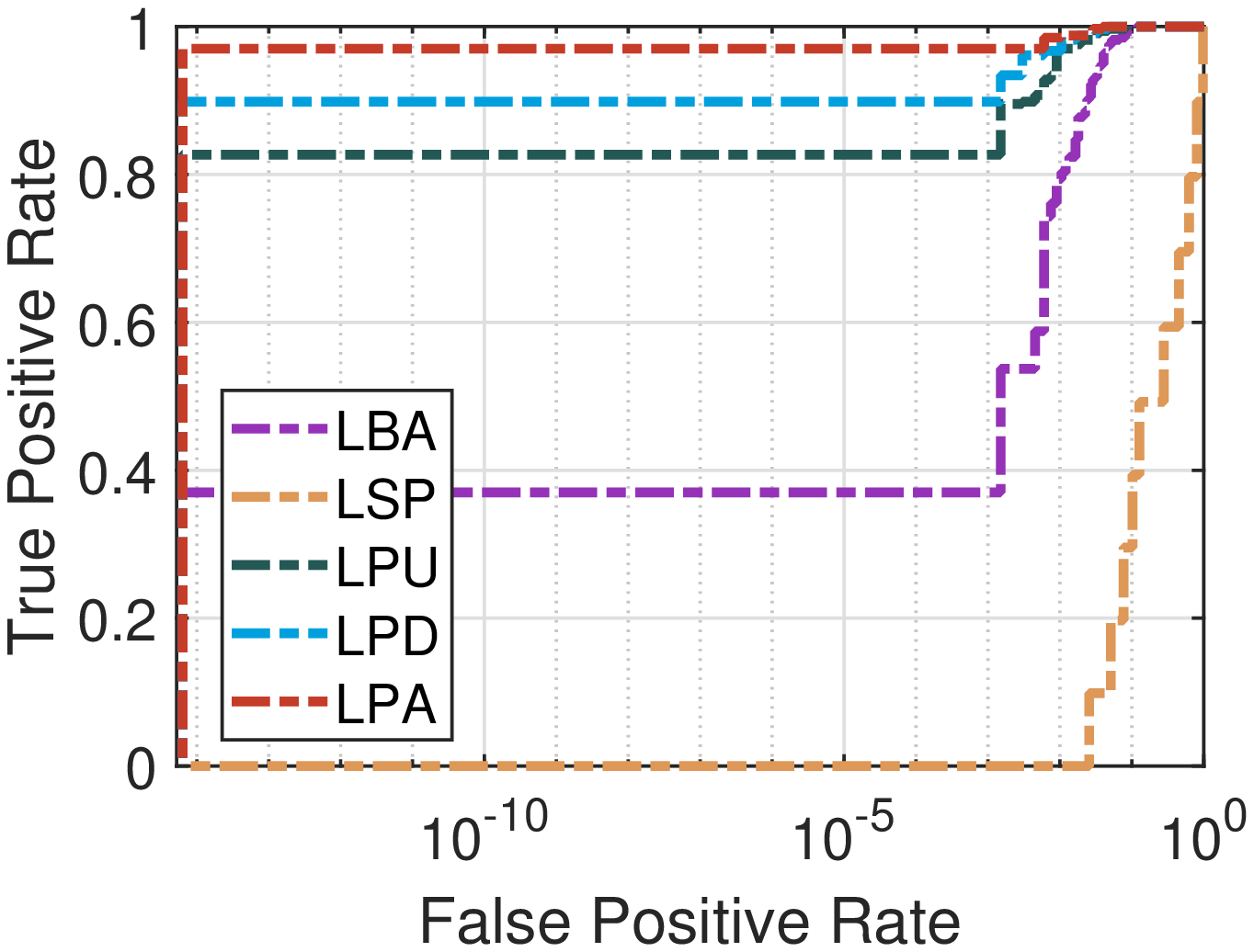}} \\[-10pt] %\vspace{-0.2cm} \\
	\subfigure[\textsf{Log}]{
		\label{ROC-T} %% label for first subfigure
		\includegraphics[width=110pt,height=75pt]{./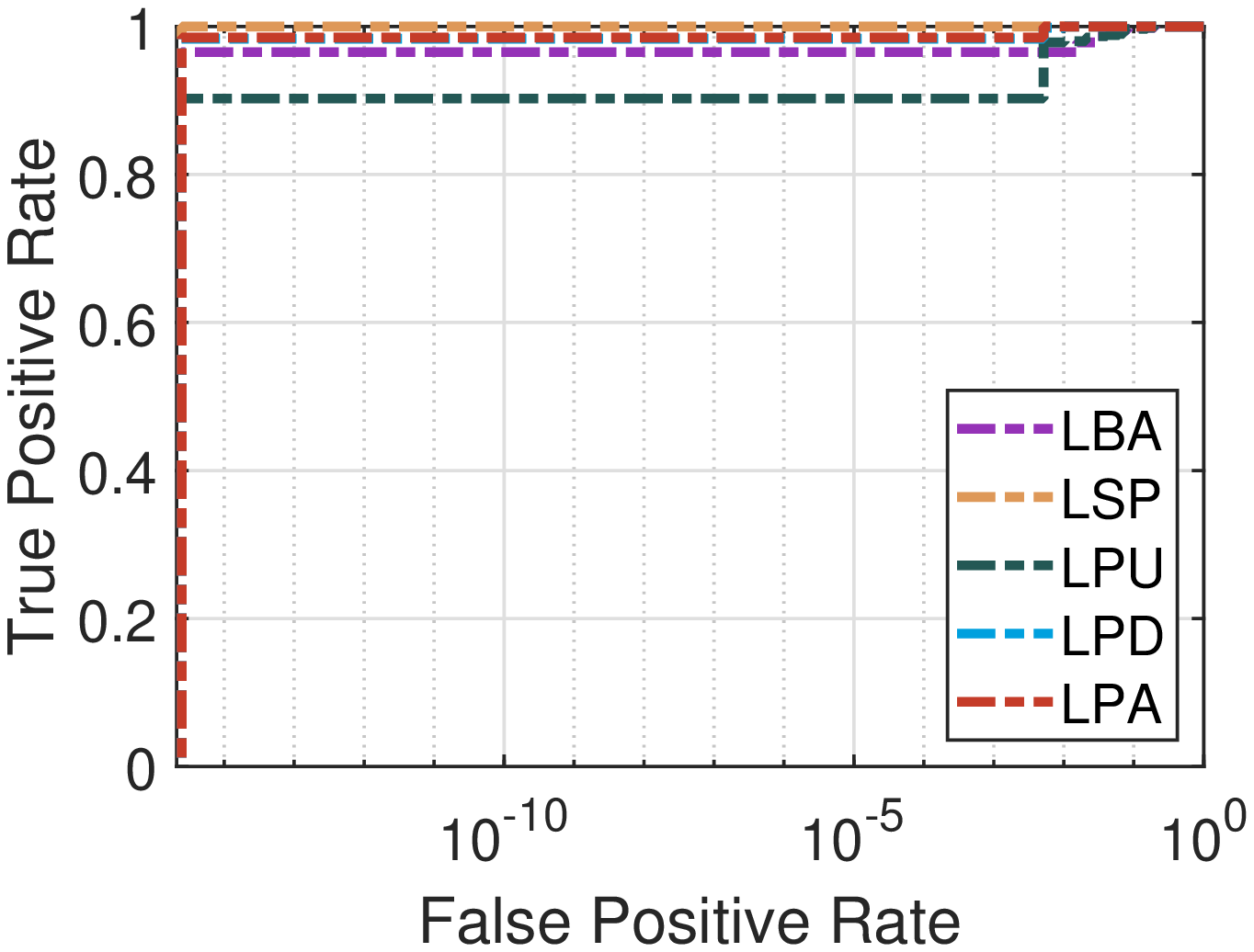}} %\vspace{-0.1cm}
	\subfigure[\textsf{Taxi}]{
		\label{ROC-T} %% label for first subfigure
		\includegraphics[width=110pt,height=75pt]{./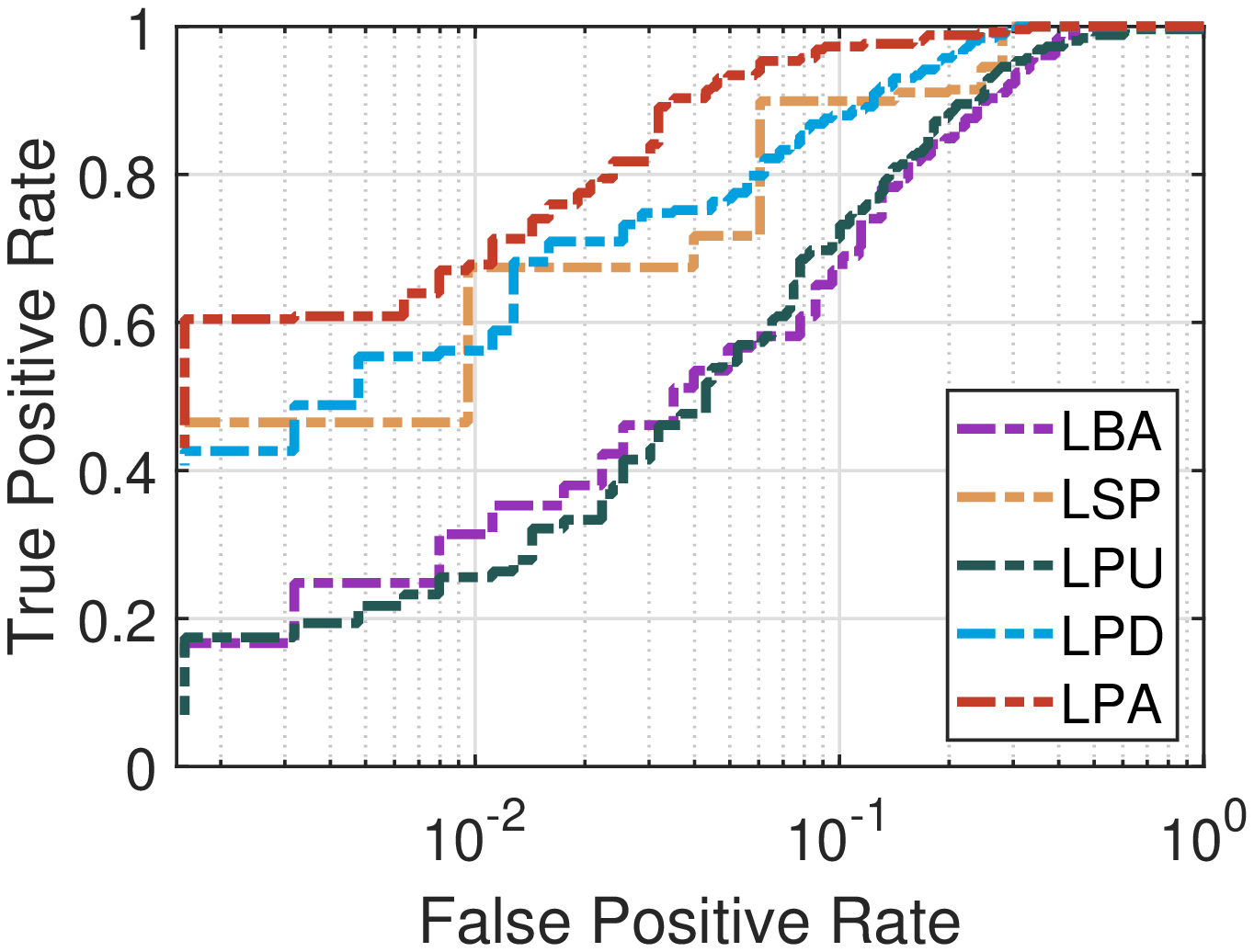}} \\[-10pt] %\vspace{-0.1cm}
	\subfigure[\textsf{Foursquare}]{
		\label{ROC-F} %% label for first subfigure
		\includegraphics[width=110pt,height=75pt]{./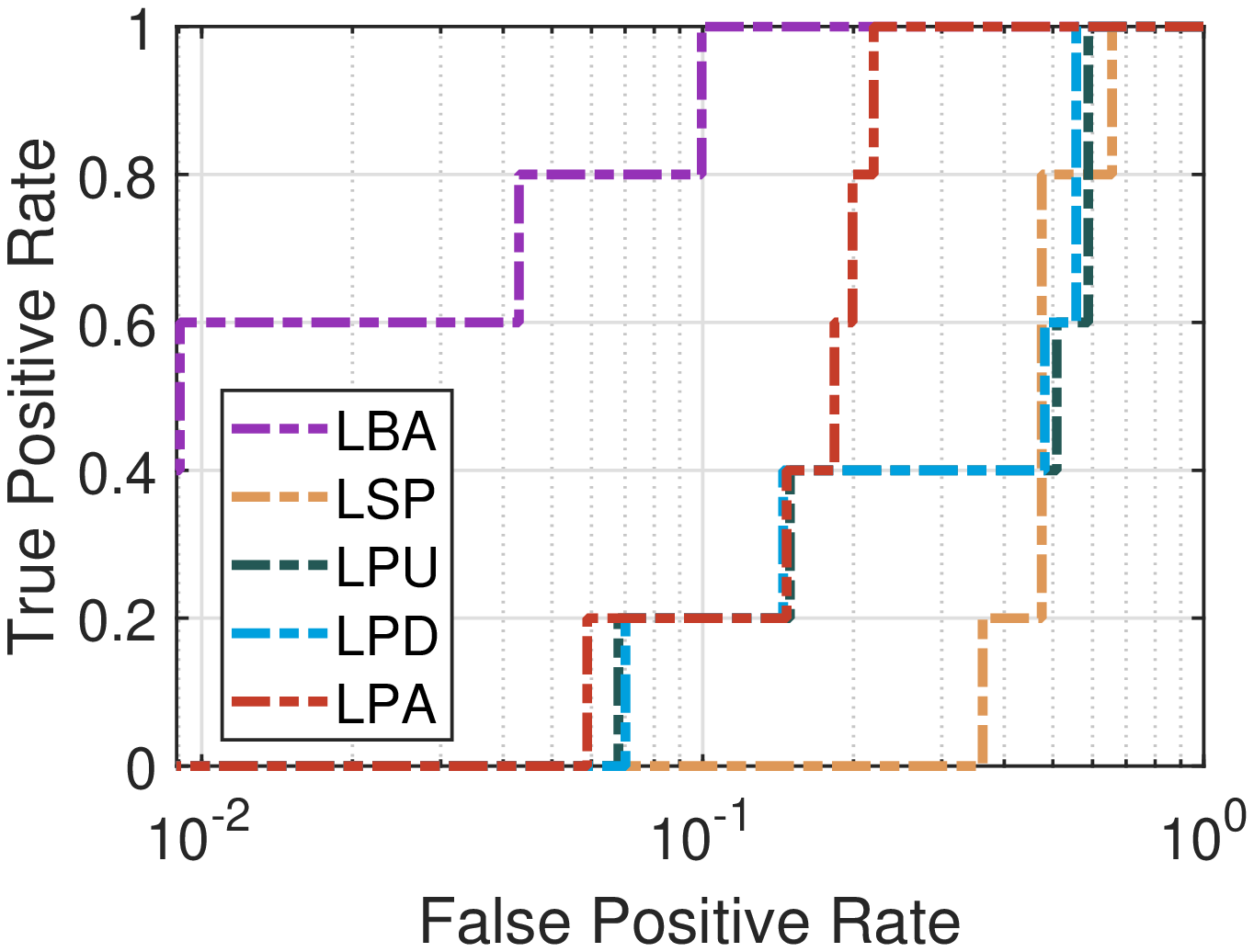}} %\vspace{-0.1cm}
	\subfigure[\textsf{Taobao}]{
		\label{ROC-T} %% label for first subfigure
		\includegraphics[width=110pt,height=75pt]{./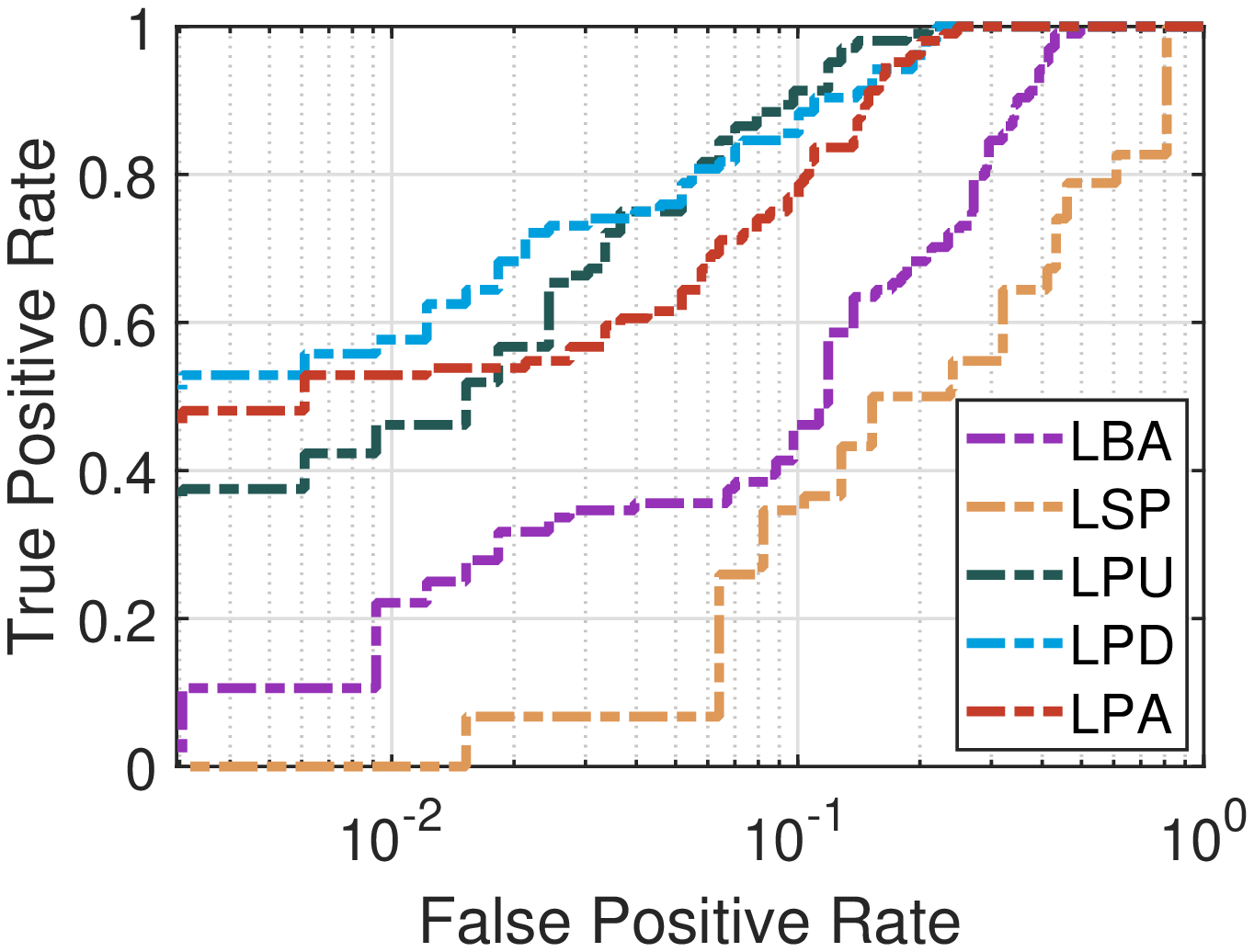}}
	\vspace{-4mm}
	\caption{ROC Curve for Event Monitoring ($\epsilon=1$, $w=50$)} %\vspace{-15mm}
	\label{fig:ROC} %% label for entire figure
	%	\vspace{-0.4cm}
\end{figure}

\subsection{Communication Efficiency}
%The population division method can not only improve the estimation utility, but also greatly reduce the communication cost. 
%Given the FO, each LDP report has the same data size. Hence, the communication overhead can be approximately measured by the average number of users upload LDP reports per timestamp. 
%Considering that only a fraction of users contributes their data at each timestamp, Polo can reduce the communication cost significantly than budget division algorithms, in which all users upload their reports at each timestamp. 

Fig. \ref{fig:communication} compares the average communication frequency per user (CFPU) of different methods, with different parameters, on \textsf{LNS}. 

\begin{figure}[htbp]
	\centering
	\subfigure[CFPU wrt. $N$]{
		\label{LNS Com N} %% label for first subfigure
		\includegraphics[width=115pt,height=80pt]{./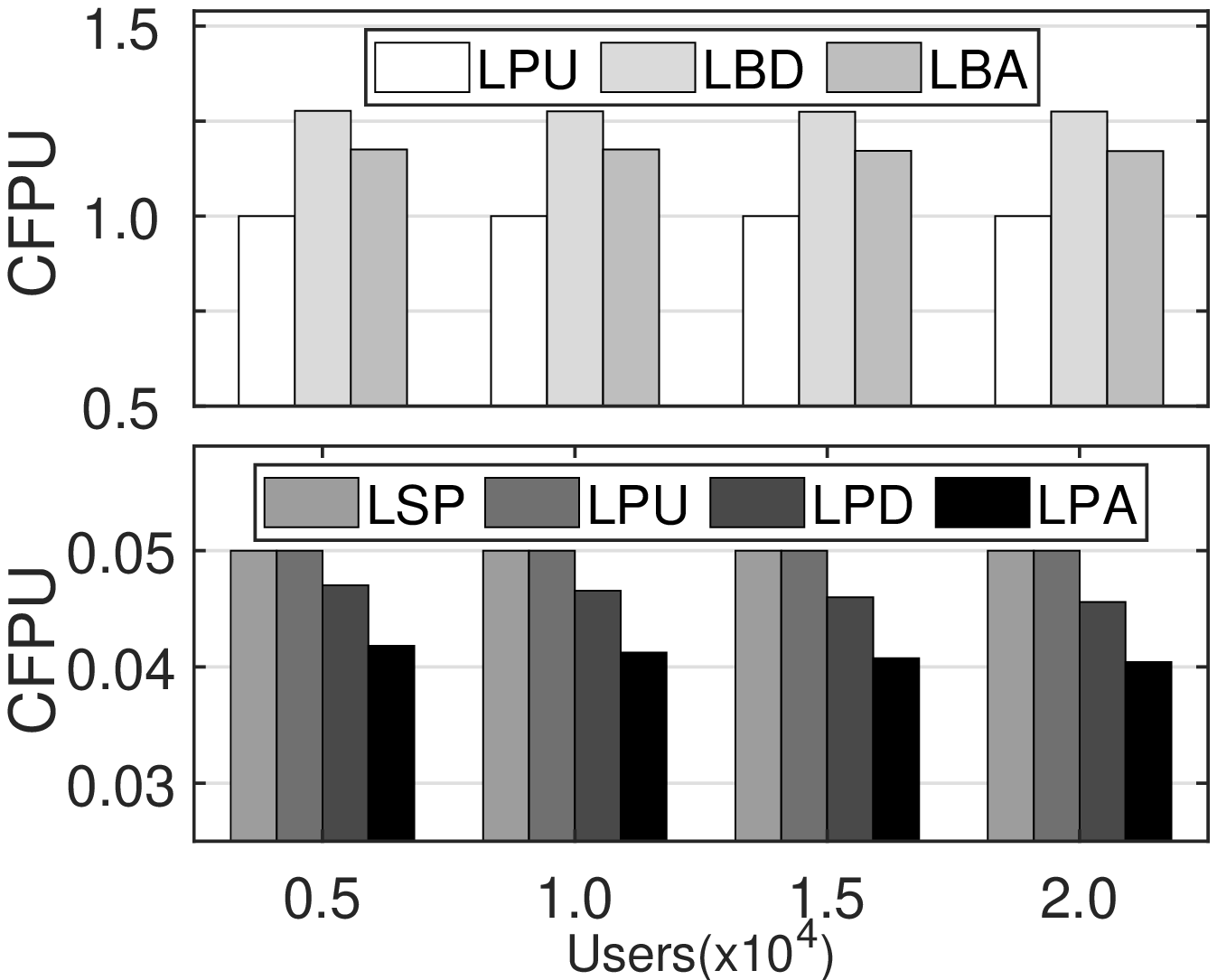}}%\vspace{-1mm}
	\subfigure[CFPU wrt. $Q$]{
		\label{LNS Com Q} %% label for first subfigure
		\includegraphics[width=115pt,height=80pt]{./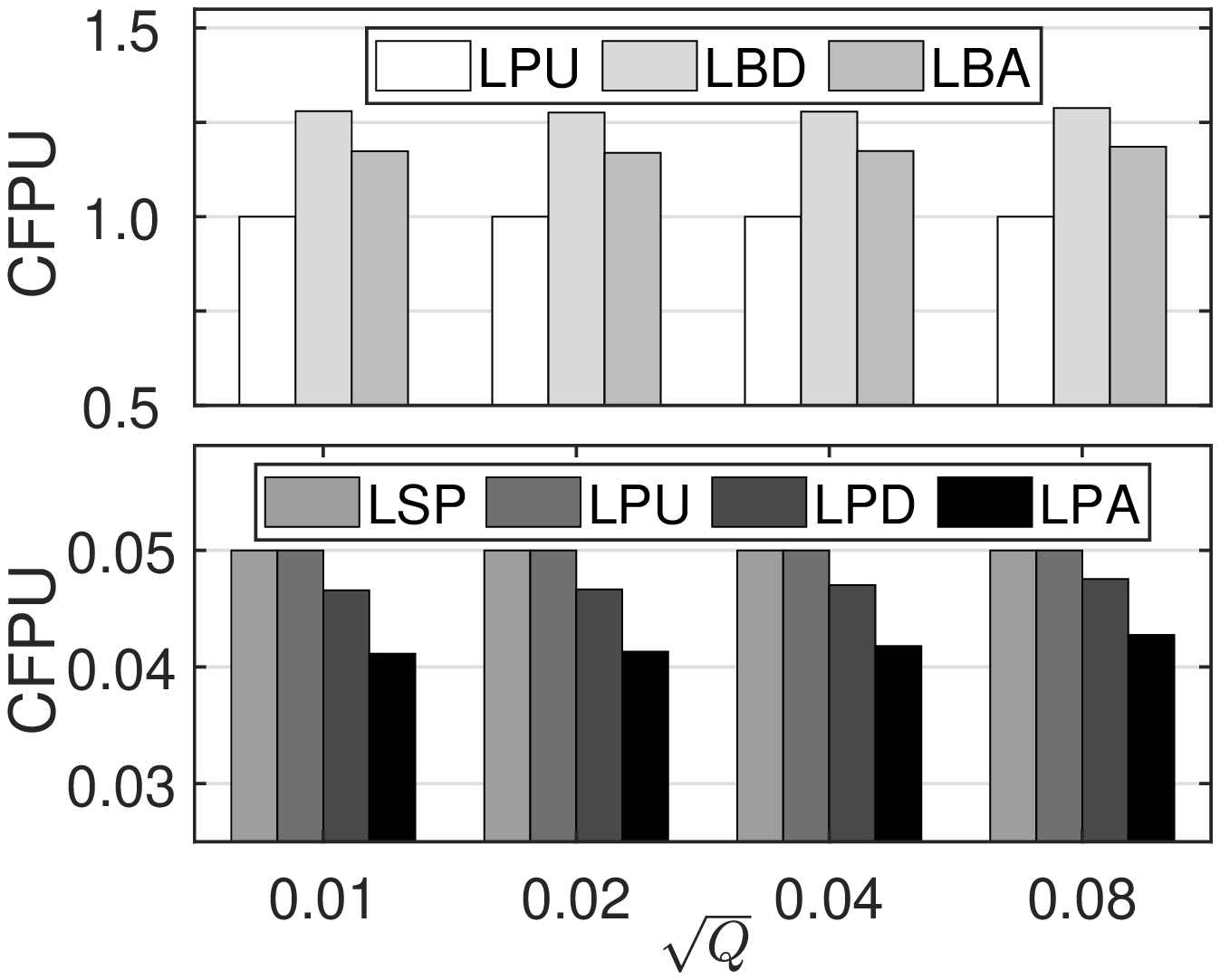}}\\[-10pt] %\vspace{-1mm}
	\subfigure[CFPU wrt. $\epsilon$]{
		\label{LNS Com e} %% label for first subfigure
		\includegraphics[width=115pt,height=80pt]{./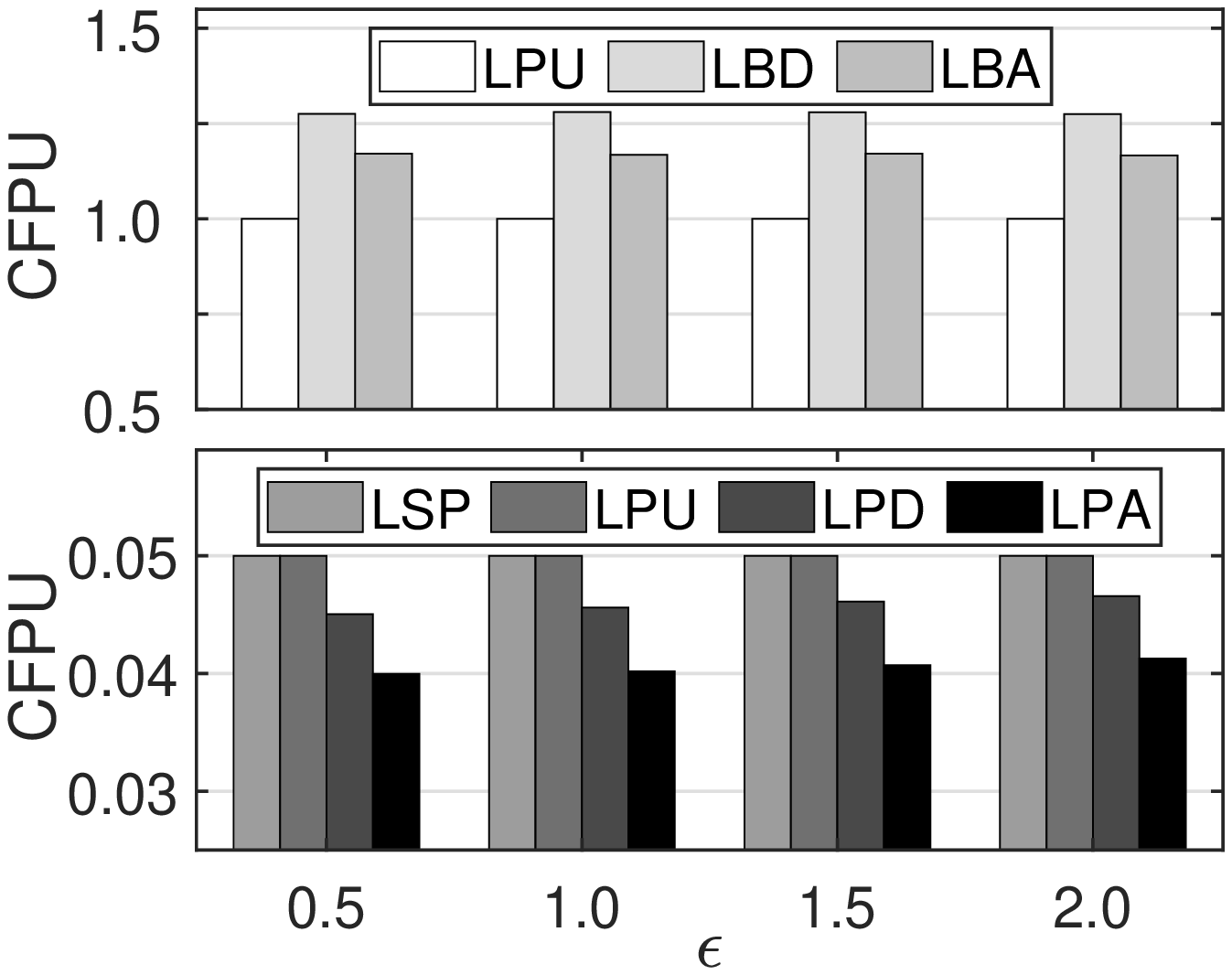}} 
	\subfigure[CFPU wrt. $w$]{
		\label{LNS Com w} %% label for first subfigure
		\includegraphics[width=115pt,height=80pt]{./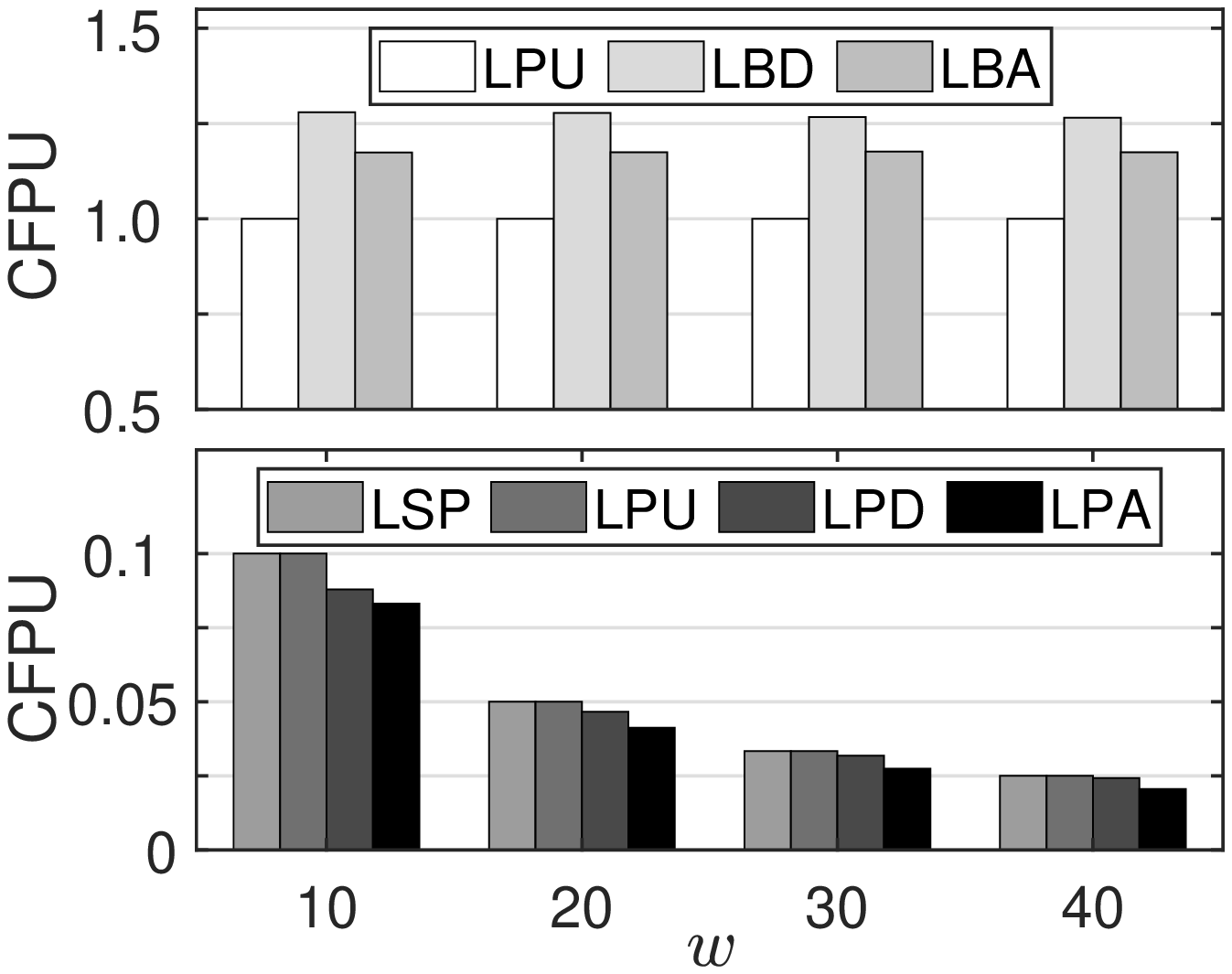}} %\vspace{-0.5mm}
	\vspace{-4mm} \caption{CFPU with respect to different parameters (\textsf{LNS})}\centering
	\label{fig:communication} %% label for entire figure
	%\vspace{-3mm}
\end{figure}
%\vspace{-3mm}

Fig.~\ref{LNS Com N} depicts the impact of population $N$ on CFPU. For budget division methods, the average CFPU is above $1.0$ since each user has to report at least once at each timestamp. For \textsf{LBU}, each user just reports once at each timestamp. For \textsf{LBD} and \textsf{LBA}, some users may report twice for both $\mathcal{M}_1$ and $\mathcal{M}_2$ at each timestamp. Differently, for population division methods, the CFPU is significantly less since only a small portion of users contribute data at each timestamp. In both \textsf{LSP} and \textsf{LPU}, each user only reports once in a window of $w=20$ timestamps, with the average chance of $1/w=0.05$ per timestamp. In \textsf{LPD} and \textsf{LPA}, users are adaptively divided and chosen to report according to data density of streams, therefore, the CFPU can be effectively reduced. %For both \textsf{LBD/LBA} and \textsf{LPD/LPA}, the CFPU also decreases with $N$. This is because that, in our synthesized dataset with the same distribution, with the increase of $N$, publication will incur more LDP noise and occur infrequently, which is consistent with Figs.~\ref{LMREn-L} and \ref{LMREn-S}.
%Recall that, in Figs.~\ref{LMREn-L} and \ref{LMREn-S}, with the increase of population $N$, LDP noise becomes more prominent and the approximation strategy is more favorable. Hence, less users send their reports to the server. 

Fig.~\ref{LNS Com Q} shows CFPU with respect to data variance $Q$. Similarly, $Q$ has no impact on data-independent methods \textsf{LBU}, \textsf{LSP} and \textsf{LBU}. However, with the increase of $Q$, data-dependent methods \textsf{LBD/LBA} and \textsf{LPD/LPA} have a larger CFPU, although it is not evident for \textsf{LBD/LBA}. The former is because that, with more fluctuations, data-dependent methods would increase publication frequency to adapt to stream changes. The latter is because that the larger noise in \textsf{LBD/LBA} overwhelms the impact of data fluctuations. % inherently incur larger noise than \textsf{UD/UA}, which reduces the impact of fluctuation.

Fig.~\ref{LNS Com e} presents the CFPU with respect to $\epsilon$. For budget division methods, it generally keeps the same since a slight increase of $\epsilon$ has rather limited impact. However, the CFPU increases with $\epsilon$ in \textsf{LPD} and \textsf{LPA}. The reason is that, with more budget $\epsilon$, the publication error would become smaller and more publications would be chosen in the adaptive methods.

Fig.~\ref{LNS Com w} shows the CFPU, with respect to $w$. Among the budget division methods, \textsf{LBU} is data-independent and keeps unchanged. For \textsf{LBD} and \textsf{LBA}, the CFPU shows a slight decrease with $w$. This is because the publication noise increases with $w$ and the approximation strategy is more favorable. Among the population division methods, besides data-dependent methods \textsf{LPD} and \textsf{LPA}, the CFPU in \textsf{LSP} and \textsf{LPU} also decreases with $w$ because each user reports at each timestamp with a probability of $1/w$. 

Table~\ref{table: communication} further summarizes the CFPU of different methods over other datasets. As shown, %the portion of reporting users in population division methods is much less than that in budget division methods. Therefore, we can conclude that, with the same communication protocal, 
the population division methods manage to have much less communication overhead. Data-adaptive methods \textsf{LPD} and \textsf{LPA} further reduce the communication cost via exploiting the sparsity in data streams. All above results are consistent with the analysis in Sections~\ref{subsub: communication budget} and \ref{subsub: communication population}.

\begin{table}[htbp]\scriptsize
	\centering
	\caption{CFPU Comparison on All Datasets}\label{table: communication} 
		\vspace{-2mm} \begin{tabular}{|c|c|c|c|c|c|c|}
		\hline
		\multicolumn{2}{|c|}{\multirow{2}{*}{$\epsilon=1, w=20$}} & \multicolumn{2}{c|}{Synthetic Datasets} & \multicolumn{3}{c|}{Real-world Datasets} \\ \cline{3-7} 
		\multicolumn{2}{|c|}{}                          & \textbf{Sin}      & \textbf{Log }    & \textbf{Taxi}    & \textbf{Foursquare } & \textbf{Taobao}  \\ \hline
		\multirow{3}{*}{B}      & \textsf{LBU}      & 1.0000          & 1.0000       & 1.0000  & 1.0000      & 1.0000  \\ \cline{2-7} 
		& \textsf{LBD}      & 1.2719          & 1.2671       & 1.2734  & 1.2733      & 1.2962  \\ \cline{2-7} 
		& \textsf{LBA}      & 1.1709          & 1.1687       & 1.1685  & 1.1775      & 1.1996  \\ \hline
		\multirow{4}{*}{P}  & \textsf{LSP}  & 0.0500          & 0.0500       & 0.0500  & 0.0500      & 0.0500  \\ \cline{2-7} 
		& \textsf{LPU}      & 0.0500          & 0.0500       & 0.0500  & 0.0500      & 0.0500  \\ \cline{2-7} 
		& \textsf{LPD}      & 0.0457          & 0.0457       & 0.0461  & 0.0458      & 0.0467  \\ \cline{2-7} 
		& \textsf{LPA}      & 0.0404          & 0.0403       & 0.0406  & 0.0403      & 0.0418  \\ \hline
		\multicolumn{2}{|c|}{\multirow{2}{*}{$\epsilon=2, w=20$}} & \multicolumn{2}{c|}{Synthetic Datasets} & \multicolumn{3}{c|}{Real-world Datasets} \\ \cline{3-7} 
		\multicolumn{2}{|c|}{}                          & \textbf{Sin}     & \textbf{Log }   & \textbf{Taxi}    & \textbf{Foursquare}  & \textbf{Taobao}  \\ \hline
		\multirow{3}{*}{B}      & \textsf{LBU}      & 1.0000          & 1.0000       & 1.0000  & 1.0000      & 1.0000  \\ \cline{2-7} 
		& \textsf{LBD}      & 1.2800          & 1.2823       & 1.2762  & 1.2692      & 1.3243  \\ \cline{2-7} 
		& \textsf{LBA}      & 1.1731          & 1.1737       & 1.1682  & 1.1704      & 1.2350  \\ \hline
		\multirow{4}{*}{P}  & \textsf{LSP}  & 0.0500          & 0.0500       & 0.0500  & 0.0500      & 0.0500  \\ \cline{2-7} 
		& \textsf{LPU}      & 0.0500          & 0.0500       & 0.0500  & 0.0500      & 0.0500  \\ \cline{2-7} 
		& \textsf{LPD}      & 0.0466          & 0.0468       & 0.0475  & 0.0468      & 0.0475  \\ \cline{2-7} 
		& \textsf{LPA}      & 0.0414          & 0.0413       & 0.0425  & 0.0412      & 0.0434  \\ \hline
		\multicolumn{2}{|c|}{\multirow{2}{*}{$\epsilon=2, w=40$}} & \multicolumn{2}{c|}{Synthetic Datasets} & \multicolumn{3}{c|}{Real-world Datasets} \\ \cline{3-7} 
		\multicolumn{2}{|c|}{}                          & \textbf{Sin}      & \textbf{Log}    & \textbf{Taxi}    & \textbf{Foursquare}  & \textbf{Taobao}  \\ \hline
		\multirow{3}{*}{B}      & \textsf{LBU}      & 1.0000          & 1.0000       & 1.0000  & 1.0000      & 1.0000  \\ \cline{2-7} 
		& \textsf{LBD}      & 1.2643          & 1.2575       & 1.2641  & 1.2487      & 1.2771  \\ \cline{2-7} 
		& \textsf{LBA}      & 1.1729          & 1.1676       & 1.1755  & 1.1670      & 1.2046  \\ \hline
		\multirow{4}{*}{P}  & \textsf{LSP}  & 0.0250          & 0.0250       & 0.0250  & 0.0250      & 0.0250  \\ \cline{2-7} 
		& \textsf{LPU}      & 0.0250          & 0.0250       & 0.0250  & 0.0250      & 0.0250  \\ \cline{2-7} 
		& \textsf{LPD}      & 0.0242          & 0.0245       & 0.0244  & 0.0245      & 0.0245  \\ \cline{2-7} 
		& \textsf{LPA}      & 0.0206          & 0.0207       & 0.0210  & 0.0204      & 0.0214  \\ \hline
	\end{tabular}%\vspace{-4mm}
\\{*B and P refer to budget division and population division, respectively.}
	%\vspace{-3mm}
\end{table}

\section{Conclusion}\label{sec: conclusion}
We propose LDP-IDS, a decentralized privacy-preserving scheme for infinite streaming data collection and analysis. %with local differential privacy (LDP). 
We first formalize the definition of $w$-event LDP for infinite data streams.
Then, based on the budget division methodology, we present several baseline methods that can satisfy $w$-event LDP for streaming data collection and analysis. 
Furthermore, we propose a novel framework of population division, which can achieve significant utility improvement and communication reduction for streaming data collection and analysis with LDP. Specifically, considering the non-deterministic sparsity in data streams, two data-adaptive methods are also presented to achieve further utility improvement. 
Through theoretical analysis and experiments with real-world datasets, we demonstrate the superiority of our LDP solutions aganist the budget division-based benchmark methods in terms of estimation accuracy, practical event monitoring efficiency and communication cost. % under various system parameters. %Particularly, extensive experiments on both synthetic and real-world datasets demonstrate that our LDP solutions achieves up to 75\% and 96\% reduction in utility loss and communication overhead, respectively.

%\clearpage
%\small
\bibliographystyle{ACM-Reference-Format}
\bibliography{Reference}
%\clearpage
\normalsize
\appendix

\section{Proofs to Theorems}

\subsection{Proof to Theorem \ref{thm-w-event LDP}}\label{appendice: w-event LDP}
   
%   	\begin{proof}
%	For any $t_1,t_2$ satisfying $t_1<t_2, t_2-t_1+1 \le w$, since all the mechanisms use independent randomness, the following holds for stream $v_i = (v^i_{t_1},...,v^i_{t_2})$ and any mechanism output $(o_1,...,o_t) \in O$:
%	
%	$\text{Pr}[\mathcal{M}(v^i)=(o_{t_1},...,o_{t_2})] = \prod_{k=t_1}^{t_2} \text{Pr}[\mathcal{M}_k(v^{i}_{k})=o_{k}]. $
%	
%	Since $\mathcal{M}_k$ is $\epsilon_k$-LDP, we have
%	$\dfrac{\text{Pr}[\mathcal{M}(v^{i}_{k})=o_{k}]}{\text{Pr}[\mathcal{M}(v^{j}_{k})=o_{k}]} \leq e^{\epsilon_{k}} $. And from (\ref{thm-w-event LDP}),it holds that $\sum_{k=t_1}^{t_2} \epsilon_{k} \le \epsilon$. Therefore,	
%	\begin{align*}
%	&\dfrac{\text{Pr}[\mathcal{M}(v^i)=(o_{t_1},...,o_{t_2})]}{\text{Pr}[\mathcal{M}(v^j)=(o_{t_1},...,o_{t_2})]}
%	= \prod_{k=t_1}^{t_2} \dfrac{\text{Pr}[\mathcal{M}(v^{i}_{k})=o_{k}]}{\text{Pr}[\mathcal{M}(v^{j}_{k})=o_{k}]} \\
%	\leq& \prod_{k=t_1}^{t_2} e^{\epsilon_{k}} = e^{\sum_{k=t_1}^{t_2} \epsilon_{k}} \leq e^\epsilon
%	\end{align*}
%\end{proof}

\begin{proof}\label{proof:average w privacy}
	%The proof of the above theorem follows the proof in \cite{kellaris2014differentially} directly.
	Considering the independent randomness of each mechanism $\mathcal{M}_i$, for stream prefix $V_t$ and a given output transcript $\mathbf{O^*}=(\mathbf{o}_1, \mathbf{o}_2,...,\mathbf{o}_t) \in \mathcal{O}$, there is
	\begin{align}
		\text{Pr}[\mathcal{M} (V_t)\in \mathbf{O^*}]=\prod_{i=1}^t \text{Pr}[\mathcal{M}_i (v_i)=\mathbf{o}_i].
	\end{align} 
	Similarly, for any $w$-neighboring stream prefix $V'_t$ and the same $\mathbf{O^*}=(\mathbf{o}_1, \mathbf{o}_2,...,\mathbf{o}_t)$, there is
	\begin{align}
		\text{Pr}[\mathcal{M} (V'_t)\in \mathbf{O^*}]=\prod_{i=1}^t \text{Pr}[\mathcal{M}_i (v'_i)=\mathbf{o}_i].
	\end{align} 
	According to the definition of $w$-neighboring, there exists $\tau \in [t]$, such that $v_i=v'_i$ for $1 \leq k \leq \tau-w$ and $\tau+1 \leq i \leq t $. Then, there is
	\begin{align}
		\frac{\text{Pr}[\mathcal{M} (V_t) \in \mathbf{O^*}]}{\text{Pr}[\mathcal{M} (V'_t) \in \mathbf{O^*}]}=\prod_{i=\tau-w+1}^\tau \frac{\text{Pr}[\mathcal{M}_i (v_i)=\mathbf{o}_i]}{\text{Pr}[\mathcal{M}_i (v'_i)=\mathbf{o}_i]}
	\end{align}
	Note that database pairs $v_i$ and $v'_i$ are neighboring for $\tau-w+1 \leq i \leq i$, and $\mathcal{M}_i$ satisfies $\varepsilon_{i}$-DP. So, there is $\frac{\text{Pr}[\mathcal{M}_i (v_i)=\mathbf{o}_i]}{\text{Pr}[\mathcal{M}_i (v'_i)=\mathbf{o}_i]} \leq e^{\varepsilon_i}$. Then, we can have
	\begin{align}
		\log\left( \frac{\text{Pr}[\mathcal{M} (V_t)\in \mathbf{O^*}]}{\text{Pr}[\mathcal{M} (V'_t)\in \mathbf{O^*}]} \right) 
		&\leq \log \left(\prod_{i=\tau-w+1}^\tau e^{\epsilon_i} \right) \\ \nonumber
		&=\sum_{i=\tau-w+1}^{\tau} \epsilon_i \\ \nonumber
		%&=\sum\limits_{k=\tau-w+1}^\tau \varepsilon_k.
	\end{align}
	Therefore, for any $\mathbf{O}\in \mathcal{O}$, we have 
	\begin{align}
	\log\left( \frac{\text{Pr}[\mathcal{M} (V_t)\in \mathbf{O}]}{\text{Pr}[\mathcal{M} (V'_t)\in \mathbf{O}]} \right) \leq \sum\limits_{k=\tau-w+1}^\tau \epsilon_k.
	\end{align}
	And, if formula $\sum\limits_{k=\tau-w+1}^\tau \epsilon_k=\epsilon$ holds, then we have $\log\left( \frac{\text{Pr}[\mathcal{M} (V_t)\in \mathbf{O}]}{\text{Pr}[\mathcal{M} (V'_t)\in \mathbf{O}]} \right) \leq \epsilon$, which concludes the proof.
\end{proof}

\subsection{Proof to Theorem \ref{theorem: dissimilarity}}\label{appendice: dis calculation}
\begin{proof}
The LDP guarantee can be directly obtained according to the post-processing theorem~\cite{Dwork2014TheAF}. In the following, we mainly prove $dis$ is an unbiased estimation of $dis^*$.

For any $k$ that $1 \leq k \leq d$, since $\overline{\mathbf{c}}_{t,1}[k]$ is an unbiased estimate of $\mathbf{c}_t[k]$, we can have
\begin{align}\label{eq: unbiase}
&\mathbb{E}[\overline{\mathbf{c}}_{t,1}[k]]=\mathbf{c}_t[k]\\ &\text{Var}(\overline{\mathbf{c}}_{t,1}[k])=\mathbb{E}(\overline{\mathbf{c}}_{t,1}[k]-\mathbf{c}_t[k])^2
\end{align}
We rewrite the variance formula as
\begin{align}	
\nonumber
&\text{Var}(\overline{\mathbf{c}}_{t,1}[k])=\mathbb{E}(\overline{\mathbf{c}}_{t,1}[k]-\mathbf{c}_t[k])^2 \\
\nonumber
=&\mathbb{E}((\overline{\mathbf{c}}_{t,1}[k]-\mathbf{r}_l[k])-(\mathbf{c}_t-\mathbf{r}_l[k]))^2\\ \nonumber	
=&\mathbb{E}[(\overline{\mathbf{c}}_{t,1}[k]-\mathbf{r}_l[k])^2+(\mathbf{c}_t[k]-\mathbf{r}_l[k])^2] \\&-2\mathbb{E}[(\overline{\mathbf{c}}_{t,1}[k]-\mathbf{r}_l[k])\cdot (\mathbf{c}_t[k]-\mathbf{r}_l[k])]
\end{align}
Both $\mathbf{c}_t[k]$ and $\mathbf{r}_l[k]$ are constant value, then the above equation can be further written as
\begin{align}\label{eq: derivation}
\nonumber
&\text{Var}(\overline{\mathbf{c}}_{t,1}[k])=\mathbb{E}(\overline{\mathbf{c}}_{t,1}[k]-\mathbf{c}_t[k])^2 \\
\nonumber
=&\mathbb{E}(\overline{\mathbf{c}}_{t,1}[k]-\mathbf{r}_l[k])^2+(\mathbf{c}_t[k]-\mathbf{r}_l[k])^2-2(\mathbf{c}_t[k]-\mathbf{r}_l[k])^2\\
=&\mathbb{E}(\overline{\mathbf{c}}_{t,1}[k]-\mathbf{r}_l[k])^2-(\mathbf{c}_t[k]-\mathbf{r}_l[k])^2.
\end{align}
Hence, there is
\begin{align}\label{eq: derivation}
\mathbb{E}(\overline{\mathbf{c}}_{t,1}[k]-\mathbf{r}_l[k])^2
=(\mathbf{c}_t[k]-\mathbf{r}_l[k])^2+\text{Var}(\overline{\mathbf{c}}_{t,1}[k])
\end{align}
Therefore, the expectation of $dis$ in Eq~(\ref{eq: estimation}) satisfies
	\begin{align}
	\nonumber
	&\mathbb{E}{(dis)}\\ \nonumber
	=&\mathbb{E}{\left(\frac{1}{d} \sum_{k=1}^d  (\overline{\mathbf{c}}_{t,1}[k]-\mathbf{r}_l[k])^2-\frac{1}{d} \sum_{k=1}^d \text{Var}(\overline{\mathbf{c}}_{t,1}[k])\right)} \\ \nonumber
	=&\frac{1}{d} \sum_{k=1}^d  \mathbb{E}(\overline{\mathbf{c}}_{t,1}[k]-\mathbf{r}_l[k])^2-\frac{1}{d} \sum_{k=1}^d \text{Var}(\overline{\mathbf{c}}_{t,1}[k]) \\ \nonumber	
	=&\frac{1}{d} \sum_{k=1}^d \left( (\mathbf{c}_t[k]-\mathbf{r}_l[k])^2+\text{Var}(\overline{\mathbf{c}}_{t,1}[k]) \right) -\frac{1}{d} \sum_{k=1}^d \text{Var}(\overline{\mathbf{c}}_{t,1}[k]) \\ \nonumber
	=&\frac{1}{d} \sum_{k=1}^d (\mathbf{c}_t[k]-\mathbf{r}_l[k])^2=dis^*
	\end{align}
\end{proof}

\subsection{Proof to Theorem \ref{theorem: LBD privacy}}\label{appendice: privacy proof BD/BA}

\begin{proof}
	We prove the privacy guarantee of \textsf{LBD} and \textsf{LBA} as followings. 
	
	\textbf{(1) \textsf{LBD} satisfies $w$-event LDP:}
	
	In sub mechanism $\mathcal{M}_{1}$, the dissimilarity budget $\epsilon_{t,1}$ at each timestamp $t$ is $\epsilon/(2w)$. Then, for every $t$, there is 
	\begin{align}
		\sum_{k=t-w+1}^t \epsilon_{k,1}=\epsilon/2.
	\end{align}
	%According to Theorem~\ref{thm-w-event LDP}, the sub mechanism $\mathcal{M}_{1}$ satisfies $\epsilon/2$-LDP for each window of size $w$. 
	
	In sub mechanism $\mathcal{M}_{2}$, %the publication budget $\epsilon_{t,2}$ at each timestamp $t$ equals to $(\epsilon/2-\sum_{k=t-w+1}^{t-1} \epsilon_{k,2})/2$ if publication occurs, and $0$ if approximation occurs (no publication, no privacy budget consumption). 
	at each timestamp $t$, at most half of the remaining publication budget is allocated if publication occurs. That is to say, $\epsilon_{t,2}=(\epsilon/2-\sum_{k=t-w+1}^{t-1} \epsilon_{k,2})/2$.

	Firstly, for any $1 \leq t \leq w$, \textsf{LBD} distributes the budget in a sequence of $\epsilon/4, \epsilon/8, \ldots$,
	there will be at most $w$ publications as a time window consisting of $w$ timestamps. So, 
	\begin{align}
		\sum_{i = 1}^t \epsilon_{k,2} \leq (\epsilon/2) \cdot (1-\frac{1}{2^w})\leq \epsilon/2.
	\end{align}
	
	Suppose that $\sum_{k=t-w+1}^t \epsilon_{k,2} \leq \epsilon/2$ holds for $t=w+m$, i.e., $\sum_{k=m+1}^{w+m} \epsilon_{k,2} \leq \epsilon/2$. Then, at timestamp $t=w+m+1$, there is
	\begin{align}\label{eq: m+2}
		\sum_{k=m+2}^{w+m+1} \epsilon_{k,2}=\sum_{k=m+2}^{w+m} \epsilon_{k,2}+\epsilon_{w+m+1, 2}
	\end{align}  
Since $\epsilon_{w+m+1, 2}$ is half of the remaining publication budget at time $w+m+1$ , there is 
\begin{align}\label{eq: w+m+1}
	\epsilon_{w+m+1, 2} \leq (\epsilon/2-\sum_{m+2}^{w+m} \epsilon_{k,2})/2.
\end{align}
By substituting Eq.~\ref{eq: w+m+1} in to Eq.~\ref{eq: m+2}, there is
	\begin{align}\label{eq: m+2}
	\sum_{k=m+2}^{w+m+1} \epsilon_{k,2}=&\sum_{k=m+2}^{w+m} \epsilon_{k,2}+(\epsilon/2-\sum_{m+2}^{w+m} \epsilon_{k,2})/2 \\ \nonumber
	=&\epsilon/4+(\sum_{k=m+2}^{w+m} \epsilon_{k,2})/2 \leq \epsilon/4+\epsilon/4 =\epsilon/2.
\end{align}  
This implies that, if $\sum_{k=t-w+1}^t \epsilon_{k,2} \leq \epsilon/2$ holds for $t=w+m$, then it also holds for $t=w+m+1$. Besides, $\sum_{k=t-w+1}^t \epsilon_{k,2} \leq \epsilon/2$ always holds for $1\leq t \leq w$. Therefore, we can prove that, for every timestamp $t \geq 1$, there is 
\begin{align}
	\sum_{k=t-w+1}^t \epsilon_{k,2} \leq \epsilon/2.
\end{align}

Because \textsf{LBD} executes $\mathcal{M}_{t,1}$ and $\mathcal{M}_{t,2}$ sequentially at each timestamp $t$, the total privacy budget in a window of size $w$ should be
\begin{align}
	\sum_{k=t-w+1}^t \epsilon_{k}=\sum_{k=t-w+1}^t \epsilon_{k,1}+\sum_{k=t-w+1}^t \epsilon_{k,2} \leq \epsilon,
\end{align}
which proves that \textsf{LBD} satisfies $w$-event $\epsilon$-LDP.

\textbf{(2) \textsf{LBA} satisfies $w$-event LDP:}

The sub mechanism $\mathcal{M}_{t,1}$ in \textsf{LBA} is identical to that in \textsf{LBD}. That is, for every $t$, there is 
\begin{align}
	\sum_{k=t-w+1}^t \epsilon_{k,1}=\epsilon/2.
\end{align}

The sub mechanism $\mathcal{M}_{t,2}$ in \textsf{LBA} is $\epsilon_{t,2}$-LDP where $\epsilon_{t,2}$ may be nullified, or be absorbed, or absorb unused budgets from previous timestamps. 

Without loss of generality, we suppose $t$ be a publication timestamp that absorbed the unused budget from the last $\alpha$ timestamps. Then, according to Algorithm~\ref{Alg BA}, the publication budget $\epsilon_{t,2}$ at current timestamp equals to $(1+\alpha)\cdot \frac{\epsilon}{2\cdot w}$, but the publication budget at both the preceding $\alpha$ timestamps ( $i \in [t-\alpha, i-1]$ which are absorbed) and the succeeding $\alpha$ timestamps ( $i \in [t+1, t+\alpha]$ which are nullified), will be $0$, i.e., $\epsilon_{i,2}=0$.

Then, any window of size $w$ sliding over timestamp $i$, must cover at least $\alpha$ timestamps with $\epsilon_{i,2}=0$. Suppose that, there are $n$ timestamps having $\epsilon_{i,2}=0$ due to the absorption or nullfication by timestamp $t$. Then, the sum of budget of publication timestamp $t$ and the $n$ zero-budget timestamps is at most $(1+\alpha)\cdot \frac{\epsilon}{2\cdot w}$. This is equivalent to the case where each of these $n+1$ $timestamps$ is assigned with uniform budget of $(1+\alpha)\cdot \frac{\epsilon}{2\cdot w \cdot (n+1)} \leq \frac{\epsilon}{2\cdot w}$. This also holds for other publication timestamp $t'$ that absorbed its previous unused budget in the same window as $t$. Then, the total publication budget in a time window of size $w$, summing up the non-zero budget at publication timestamps and zero-budget at nullified and absorbed timestamps, would be $\sum_{k=t-w+1}^t \epsilon_{k,2} \leq \frac{\epsilon}{2\cdot w} \cdot w= \epsilon/2$.

Similarly, as  \textsf{LBA} also executes $\mathcal{M}_{t,1}$ and $\mathcal{M}_{t,2}$ sequentially at each timestamp $t$, the total privacy budget in a window of size $w$ should be
\begin{align}
	\sum_{k=t-w+1}^t \epsilon_{k}=\sum_{k=t-w+1}^t \epsilon_{k,1}+\sum_{k=t-w+1}^t \epsilon_{k,2} \leq \epsilon,
\end{align}
which proves that \textsf{LBA} satisfies $w$-event $\epsilon$-LDP.

\end{proof}

\subsection{Proof to Lemma~\ref{lemma: mse comparison}}\label{appendice: UU and UP}

\begin{proof}
	We compare the MSE of \textsf{LBU} (budget division framework) and that of \textsf{LPU} (population division framework) with the same FO, e.g., GRR, which are denoted as $ \text{MSE}_\text{LBU+GRR} $ and $\text{MSE}_\text{LPU+GRR}$ respectively.
	
	With GRR protocol, there is 
	\begin{align}
		\text{MSE}_\text{LBU+GRR} =V_\text{GRR} (\epsilon/w,N)=\dfrac{d-2+e^{\epsilon/w}}{N\cdot (e^{\epsilon/w}-1)^2}
	\end{align}
	and there is
	\begin{align}
		\text{MSE}_\text{LPU+GRR} =V_\text{GRR} (\epsilon,N/w)=w\cdot \dfrac{d-2+e^{\epsilon}}{ N \cdot(e^{\epsilon}-1)^2}
	\end{align}
	Then, there is 
	%\begin{small}
	\begin{align*}
		&\text{MSE}_\text{LBU+GRR}-\text{MSE}_\text{LPU+GRR} \\
		= &\frac{1}{N} \left[ \dfrac{d-2+e^{\epsilon/w}}{(e^{\epsilon/w}-1)^2} - w \dfrac{d-2+e^{\epsilon}}{(e^{\epsilon}-1)^2} \right] \\
		=& \frac{d-2}{N} \left[ \dfrac{1}{(e^{\epsilon/w}-1)^2} - \dfrac{w}{(e^{\epsilon}-1)^2}  \right]+ \\
		&   \frac{e^{\epsilon/w}}{N}\left[ \dfrac{1}{(e^{\epsilon/w}-1)^2} - \dfrac{we^{\epsilon-\epsilon/w}}{(e^{\epsilon}-1)^2} \right] \\
		=& \dfrac{(d-2)}{N(e^{\epsilon/w}-1)^2(e^{\epsilon}-1)^2} \left[ (e^{\epsilon}-1)^2-w(e^{\epsilon/w}-1)^2 \right]+ \\
		&  \dfrac{e^{\epsilon/w}}{N(e^{\epsilon/w}-1)^2(e^{\epsilon}-1)^2} \left[ (e^{\epsilon}-1)^2-we^{\epsilon-\epsilon/w}(e^{\epsilon/w}-1)^2 \right]
	\end{align*}
	%\end{small}
	
	Simply, we denote $e^{\epsilon/w}$ as $z$. Since $\epsilon>0$, $z>1$. Then, we have
	\begin{align*}
		&(e^{\epsilon}-1)^2-w((e^{\epsilon/w}-1)^2)\\
		=&(z^w-1)^2-w(z-1)^2\\
		=&(z-1)^2[(1+z^2+...+z^{w-1})^2-w]>0
	\end{align*}
	and there is	
	\begin{align*}
		&(e^{\epsilon}-1)^2-we^{\epsilon-\epsilon/w}(e^{\epsilon/w}-1)^2\\
		=&(z^w-1)^2-wz^{w-1}(z-1)^2\\
		=&(z-1)^2[(1+z^2+...+z^{w-1})^2-wz^{w-1}]>0
	\end{align*}
	
	Besides, $d \ge 2$, therefore, $\text{MSE}_\text{LBU+GRR}-\text{MSE}_\text{LPU+GRR}>0$. That is to say, $\text{MSE}_\text{LBU+GRR}>0$ is always smaller than $\text{MSE}_\text{LPU+GRR}$. 
	
	The proof details under the OUE protocol are similar, and therefore omitted here. So, given the same FO protocol GRR or OUE, the MSE of \textsf{LPU} is smaller than that of \textsf{LBU}, i.e., $\text{MSE}_\textsf{LPU}<\text{MSE}_\textsf{LBU}$.
	
	In conclusion, it can achieve much better utility to divide the population, instead of dividing privacy budget, in a time window to achieve $w$-event LDP.
\end{proof}

%\begin{comment}
\subsection{Proof to Theorem~\ref{theorem: privacy analysis}}\label{appendice: privacy proof UD/UA}
\begin{proof}
	We prove this theorem by proving the following claims.
	\begin{enumerate}
		\item In any time window consists of $w$ consecutive timestamps, each user reports to the server at most once. 
		\item Each user's reported data satisfies $\epsilon$-LDP.
	\end{enumerate}
	
	The second claim holds for both \textsf{LPD} and \textsf{LPA} apparently. This is because, each time \textsf{LPD} and \textsf{LPA} request users to report value to the server, the selected users will report via an FO with $\epsilon$ as privacy budget (e.g., Lines 4 and 5 in Algorithm~\ref{Alg UD}, and Line 13 in Algorithm~\ref{Alg UA}).
	
	We prove the first claim by proving that, at each timestamp $t$, the total number of users allocated in a window of size $w$ is no larger than $N$, i.e., $\sum_{k=t-w+1}^t|U_k|<N$. Then, as long as we sample a fresh set of users $U_k$ at each time $k$ and ensure that $U_{k,1}\cap U_{k,2}=\emptyset$, we can guarantee that each user participates only once in a window of size $w$.
	
	For \textsf{LPD}, in $\mathcal{M}_{t,1}$, $\lfloor N/(2w)\rfloor$ users are allocated at each timestamp $t$. So, for every $t$ and $i \in [t]$, there is $\sum_{k=i-w+1}^i|U_{k,1}| \leq N/2$. $\mathcal{M}_{t,2}$ each time either publishes with additional users $U_{t,2}$ or approximates the last release without assigning any user. In the latter case, $|U_{t,2}|$ is simply zero. In the former case, $|U_{t,2}|=(N/2-\sum_{k=i-w+1}^{i-1} |U_{k,2}|)/2$. Particularly, since $\mathcal{M}_{t,2}$ always uses up to half of the available users, there is always. $0 \leq \sum_{k=t-w+1}^t|U_{k,2}| \leq N/2$. Therefore,  for every $t$ and $i \in [t]$, the total number of publication users in a time window of size $w$, should be $\sum_{k=i-w+1}^i|U_{k}|=\sum_{k=i-w+1}^i|U_{k,1}|+\sum_{k=i-w+1}^i|U_{k,2}| \leq N$.
	
	For \textsf{LPA}, similarly, in $\mathcal{M}_{t,1}$, there is $\sum_{k=i-w+1}^i|U_{k,1}| \leq N/2$  for every $t$ and $i \in [t]$.  In $\mathcal{M}_{t,2}$, suppose $i$ is a timestamp which absorbed budget from $m$ preceding timestamps, where $m$ must be smaller than the window size $w$, i.e., $0 \leq m \leq w-1$. Then, at the current timestamp $i$, the population of publication users should be $|U_{t,2}|=\frac{(m+1)N}{2w}$; at the timestamps $i-m \leq k \leq i-1$ and $i+1\leq k \leq i+m$, the publication users are either skipped or nullified, i.e., $|U_{k,2}|=0$. Since, according to \textsf{UA}, any $w$-timestamp-long window that contains timestamp $i$ would also have $l \geq m$ timestamps that were either absorbed or nullified, i.e., $|U_{k,2}|=0$. Therefore, the sum of population of $i$ along with the these $l$ timestamps is at most $(m+1) N/(2w)$, which is at most equal to the case where each of these $l+1$ timestamps receives uniform population $|U_{k,2}|=\frac{(m+1)N/(2w)}{l+1}\leq N/(2w)$. This holds for any timestamp $i'$ that absorbed users from $m'$ previous timestamps and lies in the same window as $i$. Therefore, $\sum_{k=i-w+1}^i |U_{k,2}| \leq \sum_{k=i-w+1}^i N/(2w)=N/2$. As $m \geq 0$, $|U_{k,2}| \geq 0$ holds for every $k$, so, $\sum_{k=i-w+1}^i |U_{k,2}| \geq 0$.
	
	So far, we have proved that at each timestamp $t$, the total number of users allocated in a window of size $w$ is no larger than $N$, i.e., $\sum_{k=t-w+1}^t|U_k|<N$.
	In both \textsf{LPD} and \textsf{LPA}, the sampled users $U_{t,2}$ at each time always come from the remaining available users $U_{A}$ excludes $U_{t,1}$. So, there must be $U_{k,1}\cap U_{k,2}=\emptyset$. Therefore, we can always guarantee that each user participate only once in a window of size $w$. 
\end{proof}
%\end{comment}

\section{Supplementary Method Descriptions}

\subsection{Detailed Description of \textsf{LPD}}\label{apped: UD example}

\subsubsection{Algorithm Description}
Algorithm \ref{Alg UD} elaborates the implementation of \textsf{LPD}. At each timestamp $t$, \textsf{LPD} consists of two sub mechanisms, $\mathcal{M}_{t,1}$ for private dissimilarity calculation and $\mathcal{M}_{t,2}$ for private strategies determination and publication users allocation. Initially, the whole population $U$ is taken as available users $U_A$, and the released frequency histogram is set as $\mathbf{r}_0=\langle 0, \ldots, 0 \rangle^d$ (Line 1). Then, at each timestamp $t$, \textsf{LPD} performs $\mathcal{M}_{t,1}$ and $\mathcal{M}_{t,2}$ in sequence.

%In the design of \textsf{UD}, the novelty is the allocation of participant users. In particular, the entire population $U$ is divided into available users $U_{A}$ and unavailable users $U\setminus U_{A}$. \textsf{UD} allocates a fixed size subset $U_{t,1}$ from $U_{A}$ to $\mathcal{M}_{t,1}$ but allocates an exponentially decreasing size subset $U_{t,2}$ from $U_{A}$ to $\mathcal{M}_{t,2}$ where publication occurs. To understand \textsf{UD}, two keys should be noticed: (1) each user only participates once in any $w$ consecutive timestamps, and (2) the size of $U_{A}$ is not decreasing because some unavailable users will be recycled for future releases. These keys will be cleared after the following detailed explanations of UD.

%	Algorithm \ref{Alg UD} elaborates the implementation of \textsf{UD} through pseudocode. With necessary initialization, it also consists of sub mechanism $\mathcal{M}_{t,1}$ for private dissimilarity calculation and sub mechanism $\mathcal{M}_{t,2}$ for private strategies determination, in which, the participant users allocation is the core of algorithm design.

%it turns to the sub mechanism $\mathcal{M}_{t,1}$.

%$\mathcal{M}_{t,1}$ can be divided into two segments: one is to make a preliminary estimate of the counts by sampling a part of users; the other is to calculate the private dissimilarity.
\textbf{Sub Mechanism $\mathcal{M}_{t,1}$} (Lines 3-6). $\mathcal{M}_{t,1}$ aims to privately calculate the dissimilarity measure $dis$ with the dissimilarity users $U_{t,1}$. Since half the population $\lfloor N/2 \rfloor$ is divided for $\mathcal{M}_{1}$ in a window of size $w$, $\mathcal{M}_{t,1}$ randomly samples a subset $U_{t,1}$ with $\lfloor N/(2w) \rfloor$ users from $U_A$ at timestamp $t$. These sampled users are removed temporarily from $U_A$, i.e., $U_A\leftarrow U_A \setminus U_{t,1}$ (Line 3) to ensure they only participate once within a window. Then the server requests all users in $U_{t,1}$ to upload locally perturbed data with the entire privacy budget $\epsilon$, which is stored in a database $\overline{D}_{t,1}$ (Line 4). Using the LDP frequency oracle on $\overline{D}_{t,1}$, an unbiased estimation $\mathbf{\overline{c}}_{t,1}\leftarrow$ FO($\overline{D}_{t,1}$, $\epsilon$) of the true frequency count $\mathbf{{c}}_{t,1}$ can be obtained from $\overline{D}_{t,1}$ (Line 5). Based on Theorem \ref{theorem: dissimilarity}, the server computes an LDP but unbiased estimation of dissimilarity as $dis=\frac{1}{d}\sum\nolimits_{k=1}^d(\overline{\mathbf{c}}_{t,1}[k]-\mathbf{r}_{t-1}[k])^2-\frac{1}{d}\sum\nolimits_{k=1}^d\text{Var}(\mathbf{\overline{c}}_{t,1}[k])$, where $\frac{1}{d}\sum\nolimits_{k=1}^d\text{Var}(\mathbf{\overline{c}}_{t,1}[k])$ can be calculated based on $\epsilon$ and $|U_{t,1}|$. We replace $\mathbf{r}_{l}$ with $\mathbf{r}_{t-1}$ since that the approximation $\mathbf{r}_{t}=\mathbf{r}_{t-1}$ is always adopted until a publication occurs. The calculated dissimilarity $dis$ is then passed to $\mathcal{M}_{t,2}$.

\textbf{Sub Mechanism $\mathcal{M}_{t,2}$} (Lines 7-19). The sever compares the dissimilarity $dis$ returned by $\mathcal{M}_{t,1}$ with the possible publication error $err$ to choose publication or approximation. The possible publication error $err$ is determined by the privacy budget $\epsilon$ and number of potential publication users, i.e., $|U_{t,2}|$.
%In \textsf{UD}, $|U_{t,2}|$ in a window are distributed in an exponentially decreasing fashion. Particularly, at the current timestamp, if the publication is to occur, half of the remaining publication users will be allocated to accomplish the publication with the publication error $err$.
So, to estimate the potential publication error, the server first calculates the number of remaining publication users (indicated as $N_{rm}$) by subtracting the number of used publication users in the current window $\sum\nolimits_{k=t-w+1}^{t-1}|U_{k,2}|$ ($U_{k,2}=\varnothing$ for $k \leq 0$) from the number of total publication users $\lfloor N/2 \rfloor$ (Line 7). After that, the number of potential publication users is set as $|U_{t,2}|=N_{rm}/2$ to make it exponentially decreasing within a window (Line 8). Then, the potential publication error $err$ at the current timestamp equals to $V(\epsilon, N_{rm}/2)$, can be calculated by Eq.~(\ref{var}). Next, the server simply compares $dis$ with $err$ and chooses a strategy with less error: (1) if $dis > err$, it implies that the approximation error is larger than potential publication error, the server chooses publication to reduce the error. In particular, it samples a fresh group of publication users $U_{t,2}$ and then requests them to upload perturbed data, from which, the server can obtain an unbiased estimation $\mathbf{\overline{c}}_{t,2}$ as the output. Note that, with exponential decaying, $|U_{t,2}|$ may drop quickly and lead to large sampling error due to insufficient users. Therefore, we set a minimal threshold $u_\text{min}$ and $\mathcal{M}_{t,2}$ directly chooses the approximation strategy if $|U_{t,2}|$ is too small. If the publication error $err$ is larger than $dis$, the server then decides to approximate with the last release $\mathbf{{r}}_{t-1}$ with no actual publication, i.e., setting $U_{t,2}= \varnothing$ (Line~16). Finally, at the end of the timestamp $t$, the server recycles both the used dissimilarity and publication users at $t-w+1$ as available for the next timestamp, i.e., $U_{A}=U_A \cup U_{t-w+1,1} \cup U_{t-w+1,2}$. The recycling process ensures each user can contributes again after $w$ timestamps but only contribute once in a window of size $w$ to satisfy $w$-event LDP. % its LDP preserved data once in a $w$-event window.

\subsection{Detailed Description of \textsf{LPA}}\label{apped: UA example}

\subsubsection{Algorithm Description}
The details of \textsf{LPA} are shown in Algorithm \ref{Alg UA}. $\mathcal{M}_{t,1}$ for private dissimilarity calculation is identical to that in \textsf{LPD}. We mainly explain the details of $\mathcal{M}_{t,2}$ for publication strategy decision and participant users allocation.
At each timestamp $t$, $\mathcal{M}_{t,2}$ first allocates a fixed number of publication users $|U_{t,2}|=\lfloor N/(2 w) \rfloor$.
According to the basic idea, there will be two cases at $t$: (i) if the approximation strategy was selected (i.e., the publications were skipped) in the previous timestamps, the corresponding publication users will be added to $U_{t,2}$; (2) if the publication strategy is selected in the previous timestamp, $U_{t,2}$ may be nullified and approximation strategy may be selected at current time $t$.
Therefore, $\mathcal{M}_{t,2}$ has to identify which case the current timestamp belongs to. Denote $U_{l,2}$ as the population of publication users at the timestamp $l$ where last publication $\mathbf{r}_l$ occurred.
Since $\lfloor N/(2w) \rfloor$ users are uniformly allocated at each timestamp in $\mathcal{M}_{t,2}$, there will be $t_N=|U_{l,2}|/(\lfloor N/(2w) \rfloor)-1$ timestamps after $l$ where the user allocation must be nullified (Line 4). Therefore, if $t-l \leq t_N$, the user allocation at current time $t$ is nullified and $\mathcal{M}_{t,2}$ outputs with the last release $\mathbf{r}_t=\mathbf{r}_{t-1}$.
Otherwise, Case (i) holds, $\mathcal{M}_{t,2}$ absorbs the users of previously skipped publications and decides whether to freshly publish with these absorbed users (Lines 11-15) or continues to approximate with the last release (Lines 16-17).
In particular, it first computes the number of skipped publications between the last publication timestamp $l$ and $t$, which corresponds to $t_A=t-(l+t_N)$. To satisfy $w$-event LDP, at most $w-1$ timestamps can be absorbed. Therefore, the remaining population is $N_{rm}=\lfloor N/(2w) \rfloor \cdot $min$(t_A,w)$. Using these remaining users as the publication users $U_{t,2}=N_{rm}$, the potential publication error $err$ can be estimated.
%Using these remaining users as the publication users $U_{t,2}=N_{rm}$ and combining with the already used users $U_{t,1}$, we can estimate the possible publication error $err$ as $(\frac{N}{|U_{t,1}|+ N_{rm}})^2 \cdot \text{Var}(|U_{t,1}|+N_{rm},\epsilon)$, and the estimate $\overline{\mathbf{c}}_{t,2}=\frac{|U_{t,1}|\cdot  \overline{\mathbf{c}}_{t,1}+|U_{t,2}|\cdot \text{FO}(\overline{D}_{t,2},\epsilon)}{|U_{t,1}|+|U_{t,2}|}$. 
If the dissimilarity $dis>err$, the server requests a random sampled userset $U_{t,2}$ with the population $|U_{t,2}|$ to respond with LDP and outputs an unbiased $\overline{\mathbf{c}}_{t,2}$; otherwise, $\mathcal{M}_{t,2}$ chooses the approximation strategy and simply outputs $\mathbf{r}_t= \mathbf{r}_{t-1}$ without publication user $|U_{t,2}|$ allocation. Finally, it also recycles both the dissimilarity and publication users used at timestamp $t-w+1$.

\end{document}